\definecolor{light-gray}{gray}{0.985}
\definecolor{light-purple}{rgb}{0.85, 0.75, 1.0}
\newtheorem{theorem}{Theorem}
\newtheorem{lemma}{Lemma}
\newtheorem{corollary}{Corollary}
\newtheorem{proposition}{Proposition}
\newtheorem{algorithm}{Algorithm}
\theoremstyle{definition}
\newtheorem{definition}{Definition}
\newtheorem{smooth-cross-example}{Cross-world S-flip intervention}
\theoremstyle{remark}
\newtheorem{assumption}{Assumption}
\newtheorem{remark}{Remark}
\def\inprob{\stackrel{p}{\rightarrow}}
\def\indist{\rightsquigarrow}
\newcommand\ind{\protect\mathpalette{\protect\independenT}{\perp}}
\def\independenT#1#2{\mathrel{\rlap{$#1#2$}\mkern4mu{#1#2}}}
\DeclareSymbolFont{bbold}{U}{bbold}{m}{n}
\DeclareSymbolFontAlphabet{\mathbbold}{bbold}
\newcommand{\one}{\mathbbold{1}}
\def\cov{\text{cov}}
\def\bbE{\mathbb{E}}
\def\bbP{\mathbb{P}}
\def\bbR{\mathbb{R}}
\def\bbV{\mathbb{V}}
\title{Longitudinal weighted and trimmed treatment effects \\ with flip interventions} 
\date{}
\author{Alec McClean\textsuperscript{1}\footnote{Corresponding author: \texttt{hadera01@nyu.edu}}\ , Alexander W. Levis\textsuperscript{2}, Nicholas Williams\textsuperscript{3}, and Iv\'{a}n D\'{i}az\textsuperscript{1} \\ \\
\small \textsuperscript{1}\emph{Division of Biostatistics, New York University Grossman School of Medicine} \\
\small \textsuperscript{2}\emph{Department of Statistics and Data Science, Carnegie Mellon University} \\
\small \textsuperscript{3}\emph{Department of Epidemiology, Columbia University Mailman School of Public Health}}
\begin{document}
\maketitle

\begin{abstract}
    Weighting and trimming are popular methods for addressing positivity violations in causal inference. While well-studied with single-timepoint data, standard methods do not easily generalize to address non-baseline positivity violations in longitudinal data, and remain vulnerable to such violations. In this paper, we extend weighting and trimming to longitudinal data via stochastic ``flip'' interventions, which maintain the treatment status of subjects who would have received the target treatment, and flip others' treatment to the target with probability equal to their weight (e.g., overlap weight, trimming indicator). We first show, in single-timepoint data, that flip interventions yield a large class of weighted average treatment effects, ascribing a novel policy interpretation to these popular weighted estimands. With longitudinal data, we then show that flip interventions provide interpretable weighting or trimming on non-baseline covariates and, crucially, yield effects that are identifiable under arbitrary positivity violations. Moreover, we demonstrate that flip interventions are policy-relevant since they could be implemented in practice. By contrast, we show that alternative approaches for weighting on non-baseline covariates fail to achieve this property.  We derive flexible and efficient estimators based on efficient influence functions when the weight is a smooth function of the propensity score. Namely, we construct multiply robust-style and sequentially doubly robust-style estimators that achieve root-n consistency and asymptotic normality under nonparametric conditions. Finally, we demonstrate our methods through an analysis of the effect of union membership on earnings.
\end{abstract}

\noindent \emph{\textbf{Keywords:} Causal inference; longitudinal data; positivity violations; weighting; trimming; nonparametrics}

\section{Introduction}

There is a large and fast-growing literature in causal inference for estimating treatment effects from observational data under violations of the positivity assumption. For binary treatments, positivity requires subjects in every covariate stratum to have non-zero probability of receiving both treatment and control \citep{hernan2020whatif}. In longitudinal settings, positivity requires non-zero probability for every treatment regime under consideration, which can become increasingly difficult to satisfy with more timepoints as the number of treatment regimes increases exponentially. Exact positivity violations (zero probability of certain regimes) render causal effects unidentifiable, while ``practical'' violations (near-zero probability of certain regimes) can prohibitively inflate variance estimates. Both hamper meaningful scientific conclusions from data \citep{kang2007demystifying, moore2012causal, petersen2012diagnosing}.

\bigskip

Several approaches have been developed to define effects robust to positivity violations. We focus on weighting/trimming, which we relate to our novel flip interventions. In single-timepoint data, weighted average treatment effects (WATEs) take the form
\begin{equation} \label{eq:wate-intro}
    \bbE \left[ \frac{\bbE \{ Y(1) - Y(0) \mid X \} f(X)}{\bbE \{ f(X) \}} \right]
\end{equation}
where $Y(1)$ and $Y(0)$ are potential outcomes, $X$ are covariates, and $f(X)$ is a non-negative weight, which is typically based on the propensity score $\pi(X) := \bbP(A=1 \mid X)$. Certain WATEs are estimable under positivity violations, namely those that assign zero weight to covariate values for which positivity fails. Examples include overlap weights \citep{li2018balancing}, entropy weights \citep{hainmueller2012entropy}, and trimmed or smooth trimmed effects \citep{crump2009dealing, yang2018asymptotic}.  WATEs can also be defined implicitly post hoc by estimating weights $f(X)$ to directly maximize covariate balance between treatment and control groups and then estimating the identified analogue of the WATE in \eqref{eq:wate-intro} \citep{cohn2023balancing}. 

\bigskip

Weighting and trimming are well-studied for single-timepoint data, with extensions beyond binary treatment (e.g., \citet{li2019propensity, branson2023causal}), but their extension to longitudinal settings remains an important unresolved challenge. Current approaches typically trim using only baseline covariates \citep{jensen2024identification, petersen2012diagnosing}, leaving them vulnerable to positivity violations from non-baseline covariates. While several longitudinal weighting methods exist, they assume that positivity holds. Rather than addressing violations of positivity, they focus on minimizing estimation error \citep{imai2015robust, zeng2023propensity, viviano2021dynamic}.  

\bigskip

Meanwhile, an alternative approach has developed which focuses on constructing interventions which adapt to positivity violations at each timepoint and yield identifiable effects.  Pioneered by \citet{robins2004effects} and \citet{stock1989nonparametric}, these interventions have evolved through various formulations including realistic interventions \citep{van2007causal}, modified treatment policies (MTPs) \citep{haneuse2013estimation, diaz2012population}, threshold interventions \citep{taubman2009intervening}, incremental propensity score interventions \citep{kennedy2019nonparametric}, and recent extensions to complex settings \citep{mcclean2024fair, schindl2024incremental, stensrud2024optimal, wen2023intervention}. In their general formulation---when the intervention can depend on covariate and treatment information, and additional auxiliary randomness (see \citet[Section 1] {diaz2023nonparametric} for further details)---we refer to these interventions as ``stochastic longitudinal MTPs'' (S-LMTPs). The natural applicability of S-LMTPs to longitudinal data has made them increasingly popular.

\subsection{Structure of the paper and our contributions}

Methods for weighting/trimming and S-LMTPs have developed largely in parallel. In Section~\ref{sec:t=1}, we demonstrate that (single-timepoint) WATEs correspond to mean differences in potential outcomes under pairs of ``flip'' interventions. Each pair contains one intervention targeting treatment and one targeting control. If a subject would have taken the target treatment, the flip intervention does not intervene; otherwise, it flips the subject to the target treatment with probability equal to the weight $f(X)$. More specifically, we show that WATEs are \emph{interventional} effects: they are the mean difference in potential outcomes under the treatment flip intervention minus the control flip intervention, \emph{per unit of additional treatment} \citep{zhou2022marginal}. These have also been referred to as ``policy-relevant treatment effects'', which represent the per capita effect of one policy versus another \citep{heckman2001policy, heckman2005structural}. This connection is of independent interest because it yields a direct policy interpretation for a large class of WATEs, including those that balance covariates directly and define the WATE post hoc.

\bigskip

Building on this insight, Section~\ref{sec:longitudinal-setup} introduces our notation for longitudinal data, while Section~\ref{sec:ltt} extends flip interventions to longitudinal data. These interventions flip subjects toward a target treatment regime using weights/flipping probabilities constructed from non-baseline covariates. Crucially, they remain identifiable even under positivity violations at all timepoints and are \emph{single-world} interventions, relying only on data observable under the intervention itself, and therefore are policy-relevant because they could be implemented in practice \citep{richardson2013single}. We define \emph{longitudinal} interventional flip effects, which capture the mean difference in potential outcomes under two flip interventions, standardized by the average change in the number of treatments per timepoint. In Section~\ref{subsec:properties}, we investigate the properties of longitudinal interventional flip effects and outline issues with alternative estimands one might consider.  Crucially, we show that direct extensions of weighting and trimming to longitudinal data do not correspond to interventions that could be implemented in practice, justifying our focus on flip interventions.

\bigskip

Section~\ref{sec:efficiency} develops efficient estimators for flip effects when the flipping probability is a smooth function of the propensity score. We present multiply robust and sequentially doubly robust estimators based on efficient influence functions that achieve parametric convergence rates under nonparametric conditions. We establish two new results: (1) tighter bounds on the bias of multiply robust estimators using the minimum of two error decompositions, and (2) the first sequentially doubly robust guarantee for a stochastic LMTP. For single-timepoint data, this result may be of independent interest because it provides a unified approach for doubly robust-style estimation for a large class of WATEs. Finally, in Section~\ref{sec:data-analysis}, we illustrate our methods through an analysis of the effect of union membership on earnings, and in Section~\ref{sec:discussion} we conclude and discuss.

\subsection{Mathematical notation}

For a function $f(Z)$, we use $\lVert f \rVert = \sqrt{\int f(z)^2 d\bbP(z)}$ to denote the $L_2(\bbP)$ norm, $\bbP(f) = \int_{\mathcal{Z}} f(z) d\bbP(z)$ to denote the average with respect to the underlying distribution $\bbP$, and $\bbP_n(f) = \frac{1}{n} \sum_{i=1}^{n} f(Z_i)$ to denote the empirical average with respect to $n$ observations.  In a standard abuse of notation, when $B$ is an event we let $\bbP(B)$ denote the probability of $B$.  We also denote expectation and variance with respect to the underlying distribution by $\bbE$ and $\bbV$, respectively.  We use $a \wedge b$ for minimum and $a \vee b$ for maximum, and $a \lesssim b$ to mean $a \leq Cb$ for some constant $C$. We use $\indist$ to denote convergence in distribution, and $\inprob$ for convergence in probability. Additionally, we use $o_\bbP(\cdot)$ to denote convergence in probability to zero, i.e., if $X_n$ is a sequence of random variables then $X_n = o_\bbP (r_n)$ implies $\left| \frac{X_n}{r_n} \right| \inprob 0$.

\section{Single-timepoint flip interventions and weighted average treatment effects} \label{sec:t=1}

We assume data $\{ (X_i, A_i, Y_i) \}_{i=1}^{n} \stackrel{iid}{\sim} \bbP$ where $X \in \bbR^d$ are covariates, $A \in \{0,1\}$ is a binary treatment, and $Y \in \bbR$ is an outcome. Moreover, we assume that the observed data derive from complete data $\{ (X_i, A_i, Y_i(0), Y_i(1) \}_{i=1}^{n} \stackrel{iid}{\sim} \bbP^c$ where $Y(a)$ is the potential outcome under treatment $a$.  We let $Y(D)$ denote the potential outcome under treatment decision $D$, where $D \in \{0,1\}$ is a random variable that can depend on the treatment $A$ and covariates $X$. Finally, we let $\pi(X) = \bbP(A=1 \mid X)$ denote the propensity score. 

\bigskip

When positivity is violated, a common estimand of interest is the weighted average treatment effect (WATE), which we denote as 
\begin{equation} \label{eq:wate}
    \bbE \left[ \frac{\bbE \{ Y(1) - Y(0) \mid X \} f(X)}{\bbE\{ f(X) \}} \right].
\end{equation}
Our main result in this section shows that a large class of WATEs can be defined via flip interventions. First, we define a pair of flip interventions using the weight function $f(X)$. Then, we define an interventional flip effect based on this pair. And finally, we establish that this interventional flip effect is exactly the WATE in \eqref{eq:wate}.
\begin{definition}[Single-timepoint flip interventions] \label{def:flip-t=1}
    Given a weight function $f: \bbR^d \to [0,1]$, which maps from the covariate information to $[0,1]$, define a pair of flip interventions, one for each $a \in \{0,1\}$, as
    \[
    D_f(0) = A\one\{V > f(X)\} \text{ and } D_f(1) = A + (1 - A)\one\{V \leq f(X)\},
    \]
    where $V \sim \text{Unif}(0,1)$ is drawn independently of $(X, A, Y(0),Y(1))$. In words:
    \begin{itemize}
        \item if the treatment, $A$, equals the target treatment $a$, the flip intervention does nothing; 
        \item otherwise, it flips the subject to the target treatment with probability $f(X)$.
    \end{itemize}
\end{definition}

\begin{remark}
    In Definition~\ref{def:flip-t=1}, we introduce an auxiliary random variable $V$. This is a standard device in the literature on stochastic interventions, and captures the idea of randomly reassigning treatment according to a Bernoulli distribution with a modified probability \citep{diaz2023nonparametric}. Thus, the flip intervention is a ``stochastic'' MTP, because it can depend on observed treatment and covariates, but also introduces additional randomness (see \citet{diaz2023nonparametric} for a review).
\end{remark} 

\noindent Next, we define an interventional flip effect based on a pair of flip interventions.

\begin{definition}[Interventional flip effect] \label{def:int-flip-effect}
    For a pair of flip interventions $\{ D_f(0), D_f(1) \}$ from Definition~\ref{def:flip-t=1}, we define the \emph{interventional} flip effect as
    \begin{equation} \label{eq:int-flip-effect}
        \psi_f = \frac{ \bbE \big[ Y\{ D_f(1) \} - Y\{ D_f(0) \} \big]}{\bbE \{ D_f(1) - D_f(0) \}}.
    \end{equation}
    This is the average effect on potential outcomes of $D_f(1)$ compared to $D_f(0)$ \emph{per unit of additional treatment}.
\end{definition}

We define $\psi_f$ as an \emph{interventional} effect, which standardizes the mean difference in potential outcomes by the mean difference in the number of treated units. Therefore, $\psi_f$ captures the notion of the treatment effect \emph{per unit treated}. This type of effect is useful for understanding the effectiveness of one policy versus another while accounting for how many subjects are affected by the change in policy.  It has been studied previously in the literature; see, e.g., \citet{zhou2022marginal} and references therein.  It has also been referred to as a \emph{per capita} effect \citep{heckman2005structural}. In this single-timepoint case, we note that this effect also has a \textit{conditional} interpretation as the average treatment effect among the flipped: $\psi_f = \mathbb{E}\{ Y(1) - Y(0) \mid V \leq f(X) \}$. However, such a straightforward conditional interpretation generally does not extend to the longitudinal setting, as the sequential nature of treatment assignment complicates the definition of a  subgroup analogous to the ``flipped'' set---see Section~\ref{subsec:properties}.

\bigskip

\noindent Finally, we establish the interventional flip effect in \eqref{eq:int-flip-effect} is exactly the WATE in \eqref{eq:wate}.

\begin{proposition}[Interventional flip effects are WATEs] \label{prop:equivalence}
    Let $\psi_f$ denote an interventional flip effect from Definition~\ref{def:int-flip-effect}. Then,
    \[
    \psi_f = \bbE \left[ \frac{\bbE \{ Y(1) - Y(0) \mid X \} f(X)}{\bbE\{ f(X) \}} \right].
    \]
\end{proposition}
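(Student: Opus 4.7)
The plan is to reduce the numerator and denominator of $\psi_f$ to simple expressions in $f(X)$ and $Y(1)-Y(0)$ by a direct case analysis on the pair $(A, \mathbf{1}\{V \leq f(X)\})$, and then exploit independence of $V$ from $(X,A,Y(0),Y(1))$.

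First I would unpack the potential outcomes $Y\{D_f(1)\}$ and $Y\{D_f(0)\}$ on the four cells induced by $A \in \{0,1\}$ and $\mathbf{1}\{V \leq f(X)\} \in \{0,1\}$. On the cell $\{A=1, V > f(X)\}$ both interventions set $D=1$, and on $\{A=0, V > f(X)\}$ both set $D=0$, so on these cells $Y\{D_f(1)\} - Y\{D_f(0)\} = 0$ and $D_f(1)-D_f(0) = 0$. On the complementary cells $\{V \leq f(X)\}$, one checks directly that $D_f(1)=1$ and $D_f(0)=0$ regardless of $A$, and hence $Y\{D_f(1)\} - Y\{D_f(0)\} = Y(1) - Y(0)$ and $D_f(1)-D_f(0) = 1$. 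Consolidating the four cases produces the two key identities
\[
Y\{D_f(1)\} - Y\{D_f(0)\} = \mathbf{1}\{V \leq f(X)\}\bigl(Y(1) - Y(0)\bigr),\qquad D_f(1) - D_f(0) = \mathbf{1}\{V \leq f(X)\}.
\]

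Next I would take expectations of both identities. Since $V \sim \mathrm{Unif}(0,1)$ is independent of $(X, A, Y(0), Y(1))$, conditioning on $X$ (and also on $Y(1)-Y(0)$ where needed) and using the tower property gives
\[
\bbE\bigl[\mathbf{1}\{V \leq f(X)\}\bigr] = \bbE\{f(X)\},\qquad \bbE\bigl[\mathbf{1}\{V \leq f(X)\}(Y(1)-Y(0))\bigr] = \bbE\bigl[f(X)\,\bbE\{Y(1)-Y(0)\mid X\}\bigr].
\]
Dividing the numerator expression by the denominator expression then yields exactly the WATE claimed in the proposition.

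The argument is almost entirely bookkeeping, so there is no serious obstacle; the only point requiring a little care is to verify that on $\{V \leq f(X)\}$ both definitions collapse correctly regardless of $A$ (in particular, that on $\{A=1, V \leq f(X)\}$ one indeed has $D_f(0) = 0$ and $D_f(1) = 1$, and symmetrically on $\{A=0, V \leq f(X)\}$), and to be explicit that the independence of the auxiliary randomization $V$ from the complete-data vector is what makes $\bbE[\mathbf{1}\{V\leq f(X)\}\mid X, Y(1)-Y(0)] = f(X)$.
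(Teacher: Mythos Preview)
Your proposal is correct and follows essentially the same approach as the paper: both reduce to the identities $Y\{D_f(1)\}-Y\{D_f(0)\}=\one\{V\le f(X)\}(Y(1)-Y(0))$ and $D_f(1)-D_f(0)=\one\{V\le f(X)\}$ via a case analysis on $(A,\one\{V\le f(X)\})$, then invoke independence of $V$ from the complete-data vector and the tower property. The paper's write-up is marginally more compact (it records $(D_f(0),D_f(1))=(A,A)$ or $(0,1)$ according to $V>f(X)$ or $V\le f(X)$ rather than checking all four cells), but the substance is identical.
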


Proposition~\ref{prop:equivalence} establishes that interventional flip effects are equal to weighted average treatment effects. While we leverage this result to extend weighting and trimming to longitudinal data, this equivalence may itself be of independent interest. Crucially, Proposition~\ref{prop:equivalence} demonstrates that any WATE with bounded weights corresponds to a contrast in specific interventions---a connection that, to our knowledge, has not been established previously. Table~\ref{tab:wates} illustrates several important examples, including the average treatment effect on the treated (ATT) or control (ATC), the average treatment effect on the overlap population (ATO), and the trimmed average treatment effect.\footnote{Table~\ref{tab:wates} is adapted from the helpful tutorial on weighting, here: \url{https://www2.stat.duke.edu/~fl35/OW/ICHPS2023.pdf}} Furthermore, this framework also allows researchers to retrospectively define the implicit estimand underlying any weighting estimator that directly balances covariates and constrains weights to lie within the interval $[0,1]$. For instance, if weights are constructed from a separate sample and subsequently applied to estimate a WATE in a new sample conditional on these estimated weights, the resulting estimand is equivalent to the interventional flip effect $\psi_{\widehat f}$, where the estimated weighting function $\widehat f$ is treated as fixed.

\begin{table}[ht] 
    \centering 
    \begin{tabular}{p{5cm} p{10cm}}
        \toprule
        \textbf{Estimand; $\psi_f$} & \textbf{Weight/flipping probability; \( f(X) \)} \\
        \midrule
        ATE & \( 1 \) \\   
        ATT & \( \pi(X) \) \\
        ATC & \( 1 - \pi(X) \) \\
        ATO & \( \pi(X)(1 - \pi(X)) \) \\
        Trimmed ATE & \( \one \{ \varepsilon \leq \pi(X) \leq 1 - \varepsilon \} \) for $\varepsilon \geq 0$ \\
        Smooth trimmed ATE & \( S \{ \pi(X); \varepsilon \} \) where $S(x; \varepsilon)$ approximates $\one (\varepsilon \leq x \leq 1 - \varepsilon)$ \\
        Matching-weighted ATE & \( \pi(X) \wedge 1 - \pi(X) \) \\
        Direct covariate balancing & Varies and is data-dependent; derived to directly balance covariates \\
        \bottomrule
    \end{tabular}
    \caption{Common weighted average treatment effect estimands and weights/flipping probabilities \( f(X) \).}
    \label{tab:wates}
\end{table}

\section{Setup and background for longitudinal flip interventions} \label{sec:longitudinal-setup}

As in the single-timepoint case, we assume $n$ observations drawn iid from some distribution $\bbP$ in a space of distributions $\mathcal{P}$; i.e., we observe data $\{ Z_i \}_{i=1}^{n} \stackrel{iid}{\sim} \bbP \in \mathcal{P}$. We assume each observation consists of longitudinal data over $T$ timepoints, so that
\[
Z = (X_1, A_1, X_2, A_2, \dots, X_T, A_T, Y),
\]
where $X_t \in \bbR^d$ are time-varying covariates ($X_1$ are baseline covariates), $A_t \in \{0,1\}$ is a time-varying binary treatment, and $Y \in \bbR$ is the ultimate outcome of interest. For a time-varying random variable $O_t$, let $\overline O_t = (O_1, \dots, O_t)$ denote its history up to time $t$ and $\underline O_t = (O_t, \dots, O_T)$ denote its future from time $t$. Let $H_t = (\overline X_t, \overline A_{t-1})$ denote covariate and treatment history up until treatment in timepoint $t$. 

\bigskip

We formalize the definition of causal effects using a nonparametric structural equation model (NPSEM) \citep{pearl2009causality}. We assume the existence of deterministic functions $\{ f_{X,t}, f_{A,t}\}_{t=1}^{T}$ and $f_Y$ such that 
\begin{align*}
    X_t &= f_{X,t} ( A_{t-1}, H_{t-1}, U_{X,t} ), \\
    A_t &= f_{A,t} ( H_t, U_{A,t} ) \text{, and } \\
    Y &= f_Y ( A_T, H_T, U_Y ).
\end{align*}
Here, $\Big\{ \big\{ U_{X,t}, U_{A,t}: t \in \{1, \ldots, T\} \big\}, U_Y \Big\}$ is a vector of exogenous variables. Subsequently, we'll define restrictions on their joint distribution that facilitate identification of causal effects. We will define the effects in terms of hypothetical interventions in which equation $A_t = f_{A,t}(H_t, U_{A,t})$ is removed from the structural model and the exposure is assigned as a new random variable $D_t$ (which could be deterministic). An intervention that sets exposures up to time $t-1$ to $\overline D_{t-1} \equiv \{ D_1, \dots, D_{t-1} \}$ generates counterfactual variables $X_t(\overline D_{t-1}) = f_{X,t} \big\{ D_{t-1}, H_{t-1}(\overline D_{t-2}), U_{X,t} \big\}$ and $A_t(\overline D_{t-1}) = f_{A,t} \big\{ H_t(\overline D_{t-1}), U_{A,t} \big\}$, where the counterfactual history is defined recursively as $H_t(\overline D_{t-1}) = \big\{ \overline D_{t-1}, \overline X_t(\overline D_{t-1}) \big\}$ and $A_1(D_0) = A_1$ and $X_1(D_0) = X_1$. The variable $A_t(\overline D_{t-1})$ is called the \textit{natural value of treatment} \citep{richardson2013single,   young2014identification}, and represents the possibly counterfactual value of treatment that would have been observed at time $t$ under an intervention carried out up to time $t-1$ but discontinued thereafter. An intervention in which all treatment variables up to $t=T$ are intervened on generates a counterfactual outcome $Y \left( \overline D_T \right) = f_Y \big\{ D_T, H_T(\overline D_{T-1}), U_Y \big\}$. 

\subsection{Causal assumptions}

The NPSEM implicitly encodes the consistency assumption due to the modular definition of counterfactuals within the framework, and because one subject's information does not depend on another's. This assumption would be violated if there were interference between subjects \citep{tchetgen2012causal}. We consider two exchangeability assumptions on the exogeneous variables.
\begin{assumption}[Standard sequential randomization] \label{asmp:standard-seq-exch}
    $U_{A,t} \ind \big\{ \underline{U}_{X,t+1}, U_Y \big\} \mid H_t\, \forall\, t \leq T$.
\end{assumption}
\begin{assumption}[Strong sequential randomization] \label{asmp:strong-seq-exch}
    $U_{A,t} \ind\! \{ \underline U_{X,t+1},\! \underline U_{A,t+1},\! U_Y \!\} \mid\! H_t \ \forall \ t \leq T$.
\end{assumption}
Assumption~\ref{asmp:standard-seq-exch} is standard for the identification of effects under longitudinal interventions \citep{diaz2023nonparametric}. It is satisfied if the common causes of the treatment $A_t$ and future covariates are measured.  Assumption~\ref{asmp:strong-seq-exch} is stronger. It is satisfied if common causes of treatment $A_t$ and future covariates \emph{and treatments} are measured. This assumption is similar to that required by \citet{richardson2013single} (cf. Theorem 31), and allows identification of effects under interventions that depend on the natural value of treatment \citep{young2014identification}.  We do not immediately require the typical positivity assumption, which says that time-varying propensity scores are bounded away from zero and one ($0 < \bbP(A_t = 1 \mid H_t) < 1$), because we will construct interventions which adapt to positivity violations.  Some flip interventions will require a version of positivity, and we will introduce the assumption when it is needed.

\section{Longitudinal weighting and trimming with flip interventions} \label{sec:ltt}

In this section, we extend flip interventions and interventional flip effects from Section~\ref{sec:t=1} to longitudinal data. We define longitudinal flip interventions that perform weighting or trimming on non-baseline covariates, and then develop longitudinal analogues to interventional flip effects. Depending on the weighting function, these effects remain identifiable even with arbitrary positivity violations. Importantly, the flip interventions we consider are ``single-world'', relying only on information observable under the series of interventions, which preserves their policy relevance for practical implementation.

\bigskip

To conclude the section, we investigate properties of longitudinal intervention flip effects and outline limitations of alternative estimands. Due to their single-world nature, longitudinal interventional flip effects can be driven both by differences in potential outcomes under the different target regimes and also by the impact of earlier interventions on subsequent covariates and treatments. While one might prefer estimands that isolate mechanistic differences in potential outcomes, we demonstrate that such estimands correspond to interventions with practical and interpretational challenges, justifying our focus on flip interventions.

\subsection{Longitudinal flip interventions} \label{subsec:longitudinal-flip}

We now propose flip interventions for longitudinal data. We begin by targeting a specific longitudinal regime and then define longitudinal interventional effects contrasting two flip interventions. We then establish conditions under which the resulting effects remain identifiable without standard positivity assumptions. 

\begin{definition}[Longitudinal flip interventions] 
\label{def:flip-int}
    At time $t$, suppose we are given a weight function $f_t:\mathcal{H}_t \to [0,1]$, which maps covariate and treatment history through time $t$ to $[0,1]$. Further suppose a sequence of interventions has occurred before time $t$, which we denote generically as $\overline D_{t-1}$. Then, a \emph{flip intervention} at time $t$ targeting $a_t$ is
    \[
    D_{f_t}(a_t) = 
    \begin{cases}
        A_t(\overline D_{t-1}), &\text{if } A_t(\overline D_{t-1}) = a_t, 
        \\
        a_t \one \Big[ V_t \leq f_t \{ H_t(\overline D_{t-1}) \} \Big] + A_t (\overline D_{t-1}) \one \big[ V_t > f_t \{ H_t(\overline D_{t-1}) \} \big ], &\text{otherwise,}
    \end{cases}
    \]
    where $V_t \sim \text{Unif}(0,1)$. Given a target regime $\overline a_T$ and a sequence of weight functions $\overline f = \{ f_1, \dots, f_T \}$, a sequence of flip interventions $\overline D_{\overline f}(\overline a_T)$ targeting $\overline a_T$ is defined as above, but replacing $\overline D_{t-1}$ by $\{ D_{f_1}(a_t), \dots, D_{f_{t-1}}(a_{t-1}) \}$. Moreover the sequence of interventions will depend on auxiliary random variables $\{ V_1, \dots, V_T\}$, one for each timepoint, which are iid and satisfy $\{ V_1, \dots, V_T \} \ind Z$.
\end{definition}
\noindent In words, at time $t$:
\begin{itemize}
    \item if the natural value of treatment is already $a_t$, the flip intervention does nothing; 
    \item otherwise, it flips the subject to the target treatment with probability $f_t \{ H_t(\overline D_{t-1}) \}$.
\end{itemize} 
Importantly, the intervention is ``single-world,'' meaning the decision at time $t$ depends only on information observable under interventions performed up to that point. For example, if $f_t \{ H_t(\overline D_{t-1}) \}$ is a trimming indicator, it will be a trimming indicator using the ``natural'' propensity score,  $\bbP \{ A_t(\overline D_{t-1}) = a_t \mid H_t(\overline D_{t-1}) \}$. 

\bigskip

Table~\ref{tab:long-fs} gives additional examples of weights. Once we have introduced the necessary notation to define the natural target propensity scores, constructing the weights becomes straightforward: we simply replace the usual single-timepoint propensity score in Table~\ref{tab:wates} with the natural propensity score. Additionally, if one has access to methods that achieve balance on counterfactual time-varying covariates, then the weighting function could be defined directly. To our knowledge, the approach proposed by \citet{viviano2021dynamic} is the primary example of this. 

\begin{table}[ht] 
    \centering 
    \begin{tabular}{p{5cm} p{10cm}}
        \toprule
        \textbf{Type of weighting} & \textbf{Weight/flipping probability $f_t \{ H_t(\overline D_{t-1}) \}$} \\
        \midrule
        No weighting & \( 1 \) \\
        \addlinespace
        Weighting to subjects that can take target treatment & \( \bbP \{ A_t(\overline D_{t-1}) = a_t \mid H_t(\overline D_{t-1}) \} \) \\
        \addlinespace
        Weighting to subjects that can take non-target treatment & \( 1 - \bbP \{ A_t(\overline D_{t-1}) = a_t \mid H_t(\overline D_{t-1}) \} \) \\
        \addlinespace
        Overlap weighting & \( \bbP \{ A_t(\overline D_{t-1}) = 0 \mid H_t(\overline D_{t-1}) \} \bbP \{ A_t(\overline D_{t-1}) = 1 \mid H_t(\overline D_{t-1}) \} \) \\
        \addlinespace
        Trimming & \( \one \{ \bbP \{ A_t(\overline D_{t-1}) = a_t \mid H_t(\overline D_{t-1}) \} \geq \varepsilon \} \) for $\varepsilon \geq 0$ \\
        \addlinespace
        Smooth trimming & \( s_t \big[ \bbP \{ A_t(\overline D_{t-1}) = a_t \mid H_t(\overline D_{t-1}) \} ; \varepsilon \big] \), where $s_t(x; \varepsilon)$ approximates $\one (x \geq \varepsilon)$ \\
        \addlinespace
        Matching-style weighting & \( \bbP \{ A_t(\overline D_{t-1}) = 0 \mid H_t(\overline D_{t-1}) \} \wedge \bbP \{ A_t(\overline D_{t-1}) = 1 \mid H_t(\overline D_{t-1}) \}  \) \\
        \addlinespace
        Direct covariate balancing & Varies and is data-dependent; these must be derived to directly balance \emph{counterfactual} covariates \\
        \bottomrule
    \end{tabular}
    \caption{Extending weights/flipping probabilities to longitudinal data.}
    \label{tab:long-fs}
\end{table}

\noindent We next define a longitudinal flip effect. 
\begin{definition}[Longitudinal interventional flip effect] \label{def:long-int-flip-effect}
    Let $\overline D_{\overline f}(\overline a_T)$ and $\overline D_{\overline f}(\overline a_T^\prime)$ denote two sequences of flip interventions targeting $\overline a_T$ and $\overline a_T^\prime$, respectively. Define the longitudinal interventional flip effect from these two interventions as
    \begin{equation} \label{eq:long-int-flip-effect}
        \frac{\bbE \Big[ Y \big\{ \overline D_{\overline f}(\overline a_T) \big\} - Y \big\{ \overline D_{\overline f}(\overline a_T^\prime) \big\} \Big]}{\frac{1}{T} \sum_{t=1}^{T} |\bbE \left\{  D_{f_t}(a_t) - D_{f_t}(a_t^\prime) \right\}|}.
    \end{equation}
\end{definition}

The effect defined in \eqref{eq:long-int-flip-effect} generalizes the single-timepoint interventional flip effect, characterizing the \emph{average change in potential outcomes standardized by the average absolute per-timepoint change in the number of treatments}. While the numerator matches the single-timepoint definition, multiple options exist for the denominator. We adopt a per-timepoint absolute difference approach, allowing for flexible policy contrasts where $\bbE \left\{  D_{f_t}(a_t) - D_{f_t}(a_t^\prime) \right\}$ may change signs across timepoints. In scenarios with monotonic interventions---where, without loss of generality, $D_{f_t}(a_t) \geq D_{f_t}(a_t^\prime)$ almost surely for all $t \in \{1, \dots, T\}$---the absolute value function becomes unnecessary, and one retains the same interpretation as in single-timepoint data.

\begin{remark}
    There are other options for the denominator in \eqref{eq:long-int-flip-effect}, each corresponding to different measures of distance between two longitudinal treatment distributions. We leave a full investigation to future work, but briefly note that other options include:
    \begin{itemize}
        \item \emph{Average per-timepoint probability of switching treatment assignment}: $\frac{1}{T} \sum_{t=1}^{T} \bbP \left\{ D_{f_t}(a_t) \neq D_{f_t}(a_t^\prime) \right\}$. This is an intuitive notion of the distance between the two treatment distributions, but one should note that it would not yield interventional flip effects in single-timepoint data. Nonetheless, it is a useful alternative to our proposal in \eqref{eq:long-int-flip-effect}. 
        \item \emph{Joint distributional distances}: To incorporate dependence across timepoints explicitly, one might use distances between joint distributions, such as $f$-divergences or optimal transport metrics, comparing the distributions $\mathcal{P}$ and $\mathcal{P}^\prime$ of $\overline D_{\overline f}(\overline a_T)$ and $\overline D_{\overline f}(\overline a_T^\prime)$, respectively.
    \end{itemize}
\end{remark}

\begin{remark}
    The notation we use in Definition~\ref{def:long-int-flip-effect} obscures one important nuance: the sequences of weights under each intervention are typically different. We focus on asymmetric trimming and smooth trimming, which only ensure that the propensity score for the target treatment is bounded away from zero; that is, $f_t\{ H_t(\overline D_{t-1}) \} = \one \big[ \bbP \{ A_t(\overline D_{t-1}) = a_t \mid H_t(\overline D_{t-1}) \} > \varepsilon \big]$ where $a_t$ differs between the two interventions under consideration.  This construction does not exactly generalize WATEs, which use identical weights for both interventions.  We abandon this symmetry with longitudinal data because, as shown in the next section, enforcing identical weights across longitudinal contrasts leads to impractical interventions. Instead, our approach prioritizes intervening on as many subjects as possible and pushing them towards the target regime while maintaining robustness to positivity violations. Nonetheless, the longitudinal interventional flip effect in Definition~\ref{def:long-int-flip-effect} preserves the same interpretation as a WATE: a difference in mean potential outcomes per unit treated of additional treatment (per timepoint).
\end{remark}

The next result establishes conditions under which flip interventions correspond to identifiable functionals, even under arbitrary positivity violations. We focus on the mean potential outcome under one sequence of flip interventions and the average treatment at an arbitrary timepoint $t$. 

\begin{theorem}[Longitudinal identification with flip interventions] 
\label{thm:id-flip}
    Let $\overline D_{\overline f}(\overline a_T)$ denote a sequence of flip interventions as in Definition~\ref{def:flip-int} targeting treatment regime $\overline a_T$.  Then, suppose the NPSEM and Assumption~\ref{asmp:strong-seq-exch} hold, and either of the following conditions is satisfied:
    \begin{enumerate}
        \item $\bbP \left\{ \bbP(A_t = a_t \mid H_t) = 0 \implies f_t(H_t) = 0 \right\} = 1$;  \label{cond:weight-construction}
        \item $\bbP \left\{ \bbP(A_t = a_t \mid H_t) = 0 \right\} = 0$. \label{cond:positivity}
    \end{enumerate}
    Then,
    \begin{align}
        \bbE \Big[ Y \big\{ \overline D_{\overline f}(\overline a_T) \big\}\Big] &= \sum_{\overline b_T \in \{0,1\}^T} \int_{\overline{\mathcal{X}}_T } \bbE \big(Y \mid \overline A_T = \overline b_T, \overline X_T = \overline x_T \big) \prod_{t=1}^T Q_t \bigl(b_t \mid \overline b_{t-1}, \overline x_t \bigr) d\bbP(x_t \mid \overline b_{t-1}, \overline x_{t-1}) \nonumber \\
        &= \bbE \Biggl[ Y \,\prod_{t=1}^{T} \frac{Q_t(A_t \mid H_t)}{\bbP(A_t \mid H_t)} \Biggr]
        \label{eq:id}
    \end{align}
    and
    \begin{align}
        \bbE \big\{ D_{f_t}(a_t) \big\} &= \sum_{\overline b_{t-1} \in \{0,1\}^{t-1}} \int_{\overline{\mathcal{X}}_t } Q_t(1 \mid \overline b_{t-1}, \overline x_t) d\bbP(x_t \mid \overline b_{t-1}, \overline x_{t-1}) \cdot \prod_{s=1}^{t-1} Q_s(b_s \mid \overline b_{s-1}, \overline x_s) d\bbP(x_s \mid \overline b_{s-1}, \overline x_{s-1}) \nonumber \\
        &= \bbE \Biggl[  Q_t(A_t =1 \mid H_t) \prod_{s=1}^{t-1} \frac{Q_s(A_s \mid H_s)}{\bbP(A_s \mid H_s)} \Biggr],
        \label{eq:id-tx}
    \end{align}
    where the probability of receiving the target treatment at time $t$ is
    \begin{equation} \label{eq:target-prop-score}
    Q_t(a_t \mid h_t) = \bbP(A_t = a \mid h_t) + f_t(h_t) \{ 1 - \bbP(A_t = a \mid h_t) \}.
    \end{equation}
\end{theorem}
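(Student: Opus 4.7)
The plan is to peel off timepoints from the back using iterated conditional expectations, inductively converting counterfactual quantities into observed-data functionals. I proceed in three stages: (i) compute the conditional law of the flip given the counterfactual history, (ii) iterate via Assumption~\ref{asmp:strong-seq-exch} to obtain the g-formula, and (iii) rewrite in inverse-weighted form using either of the two alternative conditions.

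For stage (i), Definition~\ref{def:flip-int} implies that, conditional on $(H_t(\overline D_{t-1}), A_t(\overline D_{t-1}))$, the flip $D_{f_t}(a_t)$ equals $a_t$ with probability $1$ when $A_t(\overline D_{t-1}) = a_t$ and with probability $f_t\{H_t(\overline D_{t-1})\}$ otherwise. Marginalizing over the natural value of treatment and using Assumption~\ref{asmp:strong-seq-exch} to equate the conditional distribution of $A_t(\overline D_{t-1})$ given $H_t(\overline D_{t-1})$ with the observed propensity yields $\bbP\{D_{f_t}(a_t) = a_t \mid H_t(\overline D_{t-1})\} = Q_t(a_t \mid H_t(\overline D_{t-1}))$, matching~\eqref{eq:target-prop-score}.

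For stage (ii), I would induct on $T$. The base case $T = 1$ mirrors the argument behind Proposition~\ref{prop:equivalence}. For the induction, I would condition on $H_T(\overline D_{T-1})$ and $D_{f_T}(a_T)$: Assumption~\ref{asmp:strong-seq-exch} identifies the conditional expectation of $Y$ given the counterfactual history and flipped terminal treatment with the observed regression $\bbE(Y \mid \overline X_T, \overline A_T)$, while stage~(i) expands the inner expectation over $D_{f_T}$ as a sum weighted by $Q_T(b_T \mid \cdot)$. Applying the inductive hypothesis to the remaining $T-1$ timepoints then reproduces the recursive product in~\eqref{eq:id}. An identical argument truncated at timepoint $t$---with the terminal layer replaced by $\bbE\{D_{f_t}(a_t) \mid H_t(\overline D_{t-1})\}$, computed directly from stage (i)---yields~\eqref{eq:id-tx}. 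For stage (iii), to pass to the IPW form I would rewrite each factor $Q_t(b_t \mid \cdot) = \{Q_t(b_t \mid \cdot)/\bbP(A_t = b_t \mid H_t)\}\bbP(A_t = b_t \mid H_t)$ inside the sum-integral and collapse it back into an expectation under $\bbP$; here Condition~\ref{cond:positivity} makes the ratio well-defined pointwise, while Condition~\ref{cond:weight-construction} forces $Q_t(a_t \mid h_t) = 0$ wherever $\bbP(A_t = a_t \mid h_t) = 0$, so the problematic set carries $Q_t$-mass zero and the 0/0 convention is innocuous.

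The main obstacle is that the flip at time $t$ depends on the natural value of treatment $A_t(\overline D_{t-1})$ rather than only on the counterfactual history, so standard sequential randomization is insufficient to equate counterfactual and observed conditional distributions. The strengthening in Assumption~\ref{asmp:strong-seq-exch}---mirroring the conditions of \citet{richardson2013single} and \citet{young2014identification} required to identify interventions that depend on the natural value of treatment---is what lets the counterfactual propensity at time $t$ be replaced by its observed counterpart inside the iterated conditional expectation, and is therefore the ingredient that makes the induction in stage~(ii) close.
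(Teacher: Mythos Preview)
Your proposal is essentially correct and follows the same core strategy as the paper: iterate conditional expectations, invoke strong sequential exchangeability to pass from counterfactual to observed-data conditioning, and use Condition~\ref{cond:weight-construction} or~\ref{cond:positivity} to handle well-definedness. Two minor points are worth flagging. First, the paper peels \emph{forward} ($t=1$ first, then $t=2$, etc.) rather than backward; the forward direction lets one invoke consistency directly---once $\overline A_{t-1} = \overline b_{t-1}$ has been fixed by the running sum, the counterfactual $A_t(\overline b_{t-1})$ equals the observed $A_t$ and hence the natural propensity coincides with the observed one. So the identification of the propensity in your stage~(i) comes from consistency and the NPSEM structure, not from Assumption~\ref{asmp:strong-seq-exch}; the latter is what delivers the exchangeability step $\bbE\{Y(\overline b_t, \underline D_{t+1}) \mid D_t = b_t, H_t\} = \bbE\{Y(\overline b_t, \underline D_{t+1}) \mid A_t = b_t, H_t\}$. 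Second, Conditions~\ref{cond:weight-construction}/\ref{cond:positivity} are invoked by the paper at \emph{each} step of the g-formula derivation, not only at the IPW rewriting: they guarantee that whenever a conditional expectation $\bbE\{\,\cdot \mid A_t = b_t, H_t\}$ is undefined because $\bbP(A_t = b_t \mid H_t) = 0$, the accompanying factor $Q_t(b_t \mid H_t)$ is also zero, so the product is well-defined as zero. Your stage~(iii) correctly identifies this mechanism but places it too late.
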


Theorem~\ref{thm:id-flip} establishes that mean potential outcomes under flip interventions are identifiable under only strong sequential randomization and consistency. We provide g-formula identification and inverse weighting identification in \eqref{eq:id} \citep{robins1986new}. We also derive a result for the average number of treatments, in \eqref{eq:id-tx}.  The result requires that the identified analogue of the weighting function is zero when the observed propensity score for the target treatment is zero, in condition~\ref{cond:weight-construction}.  This can be enforced by construction. For example, the overlap weights, trimming and smooth trimming weights, and matching-style weights in Table~\ref{tab:long-fs} all satisfy this condition. Other weights, like Shannon's entropy weights, also satisfy this condition.  However, ``no weighting'' or weighting towards subjects that can take the non-target treatment would fail to satisfy this condition, and then it is necessary that positivity is satisfied with respect to the target treatment, in condition~\ref{cond:positivity}.

\bigskip

Before proceeding, we highlight several key observations that provide further context to these effects:

\begin{enumerate}
    \item \textbf{Complementary interpretation to incremental propensity score interventions (IPSIs).} Flip effects retain robustness to arbitrary positivity violations, making them a complementary alternative to IPSIs \citep{kennedy2019nonparametric, bonvini2023incremental}. However, the interventions have a different construction and thus a different interpretation. IPSIs multiply the odds ratio of treatment, whereas flip interventions target a specific treatment and only adapt to positivity violations via the weight function. In this way, IPSIs are a natural smooth extrapolation from the observed data distribution, whereas flip interventions are tailored to target a specific counterfactual regime directly and only adapt to positivity violations as required. Depending on the context, either could be preferable in practice.

    \item \textbf{Interventions depending on the natural value of treatment.} A critique of interventions that depend on the \textit{natural} value of treatment is that they may be impractical because this value is unobserved in practice. This issue can be addressed in two ways:
    
    \begin{enumerate}[label=(\roman*)]
        \item An approximation can be constructed by defining interventions based on a subject's intended treatment, which may closely approximate their natural treatment value. See \citet[Section 6]{young2014identification} for a discussion.
        \item It is possible to define flip interventions that do not depend on the natural value of treatment while still yielding the same identification result as Theorem~\ref{thm:id-flip}. The next point elaborates on this modification.
    \end{enumerate}

    \item \textbf{Relaxing the sequential randomization assumption.} The identification result in Theorem~\ref{thm:id-flip} relies on strong sequential randomization (Assumption~\ref{asmp:strong-seq-exch}) because flip interventions depend on the natural treatment value to retain an intuitive ``flipping'' interpretation. However, this assumption can be relaxed to standard sequential randomization (Assumption~\ref{asmp:standard-seq-exch}) by redefining the interventions so they do not depend on the natural treatment value. Specifically, one could instead define
    \begin{align*}
        D_{f_t}(a_t) = \one \Big( V_t \leq &\; \one(a_t = 1) f_t\{ H_t(\overline D_{t-1}) \} \\
        &+ \Big[ 1 - f_t\{ H_t(\overline D_{t-1}) \} \Big] \bbP\{ A_t(\overline D_{t-1}) =1 \mid H_t(\overline D_{t-1}) \} \Big) 
    \end{align*}
    where \( V_t \sim \text{Unif}(0,1) \). These redefined interventions satisfy the identification result in \eqref{eq:id} under standard sequential randomization and the identification result for the average number of treatments under only the consistency assumption embedded in the NPSEM. Moreover, they do not suffer from the practical concerns discussed in the previous point. 

    \item \textbf{Connections to maximally coupled policies.} Flip interventions that depend on the natural value of treatment are special cases of ``maximally coupled generalized policies'' applied to binary treatments, where the intervention produces a non-decreasing shift in propensity scores for the target treatment \citep{levis2024stochastic}. These policies  minimize the number of subjects intervened on while preserving a target intervention propensity score, \( Q_t(A_t \mid H_t) \).  This approach was originally proposed to minimize bounds on causal effects under unmeasured confounding (e.g., adapting IPSIs \citep[Section 3.3]{levis2024stochastic}). Here, we repurpose these interventions because they have a nice interpretation as flip interventions. Examining their robustness to unmeasured confounding remains an open question for future work.
\end{enumerate}

\subsection{Properties of longitudinal interventional flip effects; drawbacks of direct weighting and trimming} \label{subsec:properties}

In this section, we investigate the properties of longitudinal interventional flip effects and outline issues with alternative estimands one might consider, justifying our focus on flip interventions. First, we note that longitudinal interventional flip effects satisfy a minimal sharp null preservation property: if the treatment has no effect on the outcome but the interventions shift the treatment distribution, then the interventional flip effect is zero. 
\begin{proposition}
    Let $\overline D_{\overline f}(\overline a_T)$ and $\overline D_{\overline f}(\overline a_T^\prime)$ denote two flip interventions. If $Y(\overline b_T) = Y(\overline b_T^\prime)$ for all $\overline b_T, \overline b_T^\prime \in \{0,1\}^T$ but $\sum_{t=1}^{T} | \bbE \{ D_{f_t}(a_t) \} - \bbE \{ D_{f_t}(a_t^\prime) \}| > 0$, then 
    \[
    \frac{\bbE \left[ Y \left\{ \overline D_{\overline f}(\overline a_T) \right\} - Y \left\{ \overline D_{\overline f}(\overline a_T^\prime) \right\} \right]}{T^{-1} \sum_{t=1}^{T} |\bbE \{ D_{f_t}(a_t) \} - \bbE \{ D_{f_t}(a_t^\prime) \}|} = 0.
    \]
\end{proposition}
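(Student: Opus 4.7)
The plan is to observe that under the sharp null hypothesis the numerator vanishes identically, while the denominator is strictly positive by hypothesis, so the ratio is zero. This reduces the proof to a single algebraic manipulation: expressing the counterfactual outcome under a flip intervention as a (finite) mixture over the $2^T$ possible realized regimes and exploiting the fact that all potential outcomes $Y(\overline b_T)$ coincide.

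First, I would use the fact that $\overline D_{\overline f}(\overline a_T) \in \{0,1\}^T$ takes values in a finite set to write, almost surely,
\[
Y\Big\{ \overline D_{\overline f}(\overline a_T) \Big\}
= \sum_{\overline b_T \in \{0,1\}^T} \one \Big\{ \overline D_{\overline f}(\overline a_T) = \overline b_T \Big\} \, Y(\overline b_T),
\]
and likewise for the $\overline a_T^\prime$-targeted intervention. Under the sharp null $Y(\overline b_T) = Y(\overline b_T^\prime)$ for all $\overline b_T, \overline b_T^\prime \in \{0,1\}^T$, all potential outcomes equal a single random variable $\widetilde Y$ (the common value, which is well defined as a random variable by the NPSEM). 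Since the indicators partition the sample space,
\[
Y\Big\{ \overline D_{\overline f}(\overline a_T) \Big\}
= \widetilde Y \sum_{\overline b_T \in \{0,1\}^T} \one \Big\{ \overline D_{\overline f}(\overline a_T) = \overline b_T \Big\}
= \widetilde Y,
\]
and analogously $Y\{ \overline D_{\overline f}(\overline a_T^\prime) \} = \widetilde Y$. Taking expectations, the numerator $\bbE[ Y\{ \overline D_{\overline f}(\overline a_T) \} - Y\{ \overline D_{\overline f}(\overline a_T^\prime) \} ] = \bbE[\widetilde Y] - \bbE[\widetilde Y] = 0$.

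The denominator $T^{-1} \sum_{t=1}^T | \bbE\{D_{f_t}(a_t)\} - \bbE\{D_{f_t}(a_t^\prime)\}|$ is strictly positive by assumption, so the ratio equals zero and the claim follows. There is no real obstacle here; the only point that warrants care is ensuring that the mixture expansion of the counterfactual $Y\{\overline D_{\overline f}(\overline a_T)\}$ in terms of the deterministic regimes $Y(\overline b_T)$ is justified, which is immediate because $\overline D_{\overline f}(\overline a_T)$ has finite support and because the NPSEM defines $Y\{\overline D_{\overline f}(\overline a_T)\}$ modularly from $\{Y(\overline b_T)\}_{\overline b_T}$ and the auxiliary randomness $\{V_1,\dots,V_T\}$ that generates the flip sequence.
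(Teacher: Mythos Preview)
Your argument is correct. The paper does not provide an explicit proof for this proposition, treating the sharp null preservation as immediate; your mixture expansion over the $2^T$ regimes, combined with the NPSEM modularity that ensures $Y\{\overline D_{\overline f}(\overline a_T)\} = Y(\overline b_T)$ on the event $\{\overline D_{\overline f}(\overline a_T) = \overline b_T\}$, makes the intuition precise and is exactly the kind of one-line justification the authors had in mind.
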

Although longitudinal interventional flip effects satisfy the minimal sharp null criterion, they present interpretive challenges that distinguish them from their single-timepoint counterparts. In the single-timepoint case, interventional flip effects have an intuitive interpretation as a conditional effect amongst the flipped population: $\psi_f = \bbE \{ Y(1) - Y(0) \mid V \leq f(X) \}$. Indeed, when the weight $f(X)$ is a trimming indicator, this simplifies to a conditional trimmed effect. Unfortunately, this clear interpretation does not generalize to longitudinal data. Intuitively, this generalization fails because of the dynamic nature of the sequence of interventions: the intervention at timepoint $t$ affects subsequent natural covariates and treatment as well as the ultimate outcome of interest.

\bigskip

This feature of longitudinal interventional flip effects serves as the basis of a critique of direct weighting or trimming methods that isolate a mean difference in potential outcomes in a specific sub-population. We will focus on trimming for simplicity, but similar arguments can be made about direct weighting. When conducting direct trimming on propensity scores in longitudinal settings, researchers presumably have in mind an estimand that isolates the difference $Y(\overline a_T) - Y(\overline a_T^\prime)$ amongst individuals with propensity scores above a certain threshold under both treatment regimes. This ideal estimand can be written as: \small
\begin{equation} \label{eq:cross-world}
    \hspace{-0.2in}\bbE \left[ Y(\overline a_T) - Y(\overline a_T^\prime) \ \Big| \ \bigcap_{t=1}^{T} \Bigl\{ \bbP \bigl\{ A_t(\overline a_{t-1}) = a_t \mid H_t(\overline a_{t-1}) \bigr\} > \varepsilon,\; \bbP \bigl\{ A_t(\overline a_{t-1}^\prime) = a_t^\prime \mid H_t(\overline a_{t-1}^\prime) \bigr\} > \varepsilon \Bigr\} \right]
\end{equation} \normalsize
While this estimand has the appealing property of isolating the difference $Y(\overline a_T) - Y(\overline a_T^\prime)$ among a well-defined subpopulation, it suffers from two major limitations.   First, it is ``cross-world.'' Notice that the conditioning set depends on counterfactual covariates under two treatment regimes. Consequently, an intervention corresponding to this effect cannot be implemented as a single-world intervention and it cannot be falsified experimentally or implemented in practice. This limitation parallels natural effects in mediation \citep{andrews2021insights, richardson2013single}.  Second, the effect corresponds to a contrast under \emph{future-dependent} interventions.  This is established in the next result.

\begin{proposition} \label{prop:simultaneous}
    Let \(  \Pi_T := \prod_{t=1}^{T} \one \Bigl[
   \bbP \bigl\{ A_t(\overline a_{t-1}) = a_t \mid H_t(\overline a_{t-1}) \bigr\} > \varepsilon,\; \bbP \bigl\{ A_t(\overline a_{t-1}^\prime) = a_t^\prime \mid H_t(\overline a_{t-1}^\prime) \bigr\} > \varepsilon \Bigr] \). Then, the treatment decisions 
    \begin{itemize}
        \itemsep0in
        \item \( \overline D_T = \one(\overline A_T = \overline a_T) \overline A_T + \one(\overline A_T \neq \overline a_T) \Big\{ \overline a_T \one \left( V \leq \Pi_T \right) + \overline A_T \one \left( V > \Pi_T \right) \Big\} \) and
        \item \( \overline D_T^\prime = \one(\overline A_T = \overline a_T^\prime) \overline A_T + \one(\overline A_T \neq \overline a_T^\prime) \Big\{ \overline a_T^\prime \one \left( V^\prime \leq \Pi_T \right) + \overline A_T \one \left(V^\prime >  \Pi_T \right) \Big\} \) 
    \end{itemize}
    yield a longitudinal interventional flip effect equal to the trimmed treatment effect in \eqref{eq:cross-world}.
\end{proposition}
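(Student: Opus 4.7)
The plan is to verify that the ratio in Definition~\ref{def:long-int-flip-effect} applied to the interventions $\overline D_T$ and $\overline D_T^\prime$ reduces exactly to $\bbE\{Y(\overline a_T) - Y(\overline a_T^\prime) \mid \Pi_T = 1\}$, which matches \eqref{eq:cross-world}. The essential observation is that $\Pi_T$, as a product of indicators, is $\{0,1\}$-valued, so for $V \sim \text{Unif}(0,1)$ independent of the data we have $\one(V \leq \Pi_T) = \Pi_T$ almost surely, and similarly for $V^\prime$.

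The first step is a case analysis on the event $\{\Pi_T = 1\}$ vs.\ $\{\Pi_T = 0\}$. On $\{\Pi_T = 1\}$, both auxiliary indicators equal one a.s., so whenever $\overline A_T \neq \overline a_T$ the second branch of the defining formula contributes $\overline a_T$, while the first branch trivially contributes $\overline a_T$ as well; hence $\overline D_T = \overline a_T$ a.s., and symmetrically $\overline D_T^\prime = \overline a_T^\prime$ a.s. On $\{\Pi_T = 0\}$, the auxiliary indicators vanish a.s., so both branches collapse to $\overline A_T$, giving $\overline D_T = \overline D_T^\prime = \overline A_T$.

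The second step assembles the numerator and denominator. Combining the two cases yields $Y\{\overline D_T\} - Y\{\overline D_T^\prime\} = \Pi_T \{Y(\overline a_T) - Y(\overline a_T^\prime)\}$ a.s., so taking expectations gives $\bbE[Y\{\overline D_T\} - Y\{\overline D_T^\prime\}] = \bbP(\Pi_T = 1) \cdot \bbE\{Y(\overline a_T) - Y(\overline a_T^\prime) \mid \Pi_T = 1\}$. The same case analysis applied component-wise shows that, for each $t \leq T$, the $t$-th components obey $D_t - D_t^\prime = \Pi_T (a_t - a_t^\prime)$ a.s., so $|\bbE\{D_t - D_t^\prime\}| = |a_t - a_t^\prime| \bbP(\Pi_T = 1)$. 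Under the natural contrast in which $a_t \neq a_t^\prime$ at every timepoint (so $|a_t - a_t^\prime| = 1$ since treatments are binary), the denominator $T^{-1}\sum_{t=1}^{T} |\bbE\{D_t - D_t^\prime\}|$ equals $\bbP(\Pi_T = 1)$.

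Finally, the two $\bbP(\Pi_T = 1)$ factors cancel (assuming $\bbP(\Pi_T = 1) > 0$, else the trimmed effect is ill-defined), leaving precisely the quantity in \eqref{eq:cross-world}. There is no substantive obstacle: the proof is essentially bookkeeping, and the only point requiring care is verifying that both branches of each defining formula collapse correctly on $\{\Pi_T = 1\}$ and on $\{\Pi_T = 0\}$, which is immediate from the $\{0,1\}$-valuedness of $\Pi_T$ and the independence of $V, V^\prime$ from the observed data and potential outcomes.
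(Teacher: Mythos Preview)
Your proof is correct and in fact more streamlined than the paper's. The key simplification you exploit is that $\Pi_T$ is $\{0,1\}$-valued, so $\one(V \leq \Pi_T) = \Pi_T$ almost surely; this collapses the stochastic intervention into a deterministic case split and yields the almost-sure identity $Y\{\overline D_T\} - Y\{\overline D_T^\prime\} = \Pi_T\{Y(\overline a_T) - Y(\overline a_T^\prime)\}$ directly. The paper instead conditions on $(\overline A_T, \Pi_T)$, invokes the conditional independence $\overline D_T \ind Y(\overline b_T) \mid \overline A_T, \Pi_T$ (since the only remaining randomness is $V$), computes the four intervention propensity scores $\bbP(\overline D_T = \overline b_T \mid \overline A_T, \Pi_T)$ explicitly, and then telescopes the resulting sum. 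Your route is shorter because it never needs the propensity-score detour; the paper's route is the one that would generalize if $\Pi_T$ were replaced by a non-binary weight. You are also more explicit than the paper about the denominator: you correctly observe that the cancellation requires $a_t \neq a_t^\prime$ at every $t$, whereas the paper simply asserts ``the same argument holds for the denominator'' without isolating this condition.
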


Proposition~\ref{prop:simultaneous} shows that the trimmed effect in \eqref{eq:cross-world} is defined using \emph{simultaneous} flip interventions: for subjects that would have followed the target regime, there is no intervention; otherwise, the intervention flips their treatment to the target regime with probability equal to the weighted product across all timepoints. A key limitation of these effects arises from the nature of these interventions, which \emph{simultaneously} alter the entire regime and rely on \emph{future information} at earlier timepoints. For instance, the intervention at the first timepoint depends on a subject’s natural treatment and covariate values at all timepoints. The simultaneous and cross-world nature of the intervention hampers the interpretability and practicality of the effects. These issues motivate our focus on flip interventions with longitudinal data and longitudinal interventional flip effects.

\section{Estimation and inference} \label{sec:efficiency}

In this section, we outline methods for estimating flip effects. We focus on estimating mean potential outcomes $\bbE \big[ Y \{ \overline D_{\overline f}(\overline a_T) \} \big\}$. The methods will also apply to estimating the average number of treatments $\bbE \{D_{f_t}(a_t) \}$, and we provide full details in Appendix~\ref{app:trt_prob}. Estimating and conducting inference on longitudinal interventional flip effects will follow by the delta method under mild regularity conditions. 

\bigskip

Throughout, we have assumed that the weight function was known a priori. We will continue to do so in this section. When this is not the case --- for example, if one wanted to decide a trimming threshold or smooth trimming parameter data-adaptively --- then estimation and inference are more complex; see \citet{khan2022doubly} for a review. 

\bigskip

Moreover, we will assume the weight function is a smooth function of the target propensity score. Specifically, we will assume the identified weight function $f_t(H_t)$ in Theorem~\ref{thm:id-flip} satisfies
\[
f_t(H_t) = s_t\{ \bbP(A_t=a_t \mid H_t) \} 
\]
where $s_t(\cdot)$ is twice differentiable with non-zero and bounded derivatives. This includes all the examples in Table~\ref{tab:long-fs} except trimming and matching-style weighting.  The smoothness of the weight function is crucial to allow for $\sqrt{n}$-convergence under nonparametric conditions.  With non-smooth weights, such as the trimming indicator, the lack of pathwise differentiability due to the non-smoothness of the weight function creates complications. Without additional assumptions, the performance of estimators for these effects is dictated by the behavior of propensity score estimators within the trimming indicator. While $\sqrt{n}$-rate estimation or valid inference may be possible under parametric models for the propensity scores or with specific nonparametric assumptions and estimators, general guarantees are unavailable. Note that if trimming or matching-style weights are desired, they can be replaced by smooth approximations, as we explore in our data application in Section~\ref{sec:data-analysis}. In any case, for the reasons mentioned, we will focus our theoretical development on smooth weights, which are pathwise differentiable and allow for the construction of $\sqrt{n}$-consistent and asymptotically normal estimators under nonparametric assumptions by leveraging nonparametric efficiency theory and efficient influence functions \citep{bickel1993efficient}. We will first establish the efficient influence function for the flip effect and then we will use it to construct multiply robust and sequentially doubly robust estimators.

\begin{remark}
    There is an assumption-lean approach yielding $\sqrt{n}$-estimation with a non-smooth weight function. By targeting data-dependent weights $\widehat f$, rather than the true underlying weights, one can obtain $\sqrt{n}$-rates for the corresponding data-dependent flip effect. \citet{van2007causal} demonstrates this approach with ``realistic'' interventions. Similarly, we could target a data-dependent flip effect and attain $\sqrt{n}$-rates with simple modifications of existing methods.
\end{remark}

\subsection{Notation}

To facilitate exposition, we refine our notation. First, we let  
\begin{equation} \label{eq:ratio}
    r_t(b_t \mid h_t) = \frac{Q_t(b_t \mid h_t)}{\bbP(A_t = b_t \mid h_t)} 
\end{equation}
be the ratio of the intervention propensity score, defined in~\eqref{eq:target-prop-score}, and the true propensity score, and we define $r_0 = 1$ and $Q_{T+1}(A_{T+1} \mid H_{T+1}) = 1$. Then, we let $m_{T+1} = Y$, $m_T(b_T, H_T) = \bbE(Y \mid A_T = b_T, H_T)$, and recursively define
\begin{equation} \label{eq:seq-reg}
    m_t(b_t, h_t) = \bbE \left\{ \sum_{b_{t+1}} m_{t+1}(b_{t+1}, H_{t+1}) Q_{t+1}(b_{t+1} \mid H_{t+1})  \,\, \bigg| \,\, A_t = b_t, H_t = h_t \right\}
\end{equation}
as the sequential regression function for $t < T$.  Note that $m_0 = \psi \equiv \bbE \left[ Y \left\{ \overline D_T\left( \overline a_T \right) \right\} \right]$.

\subsection{Efficient influence function}

The identification result in Theorem~\ref{thm:id-flip} suggests a ``plug-in estimator'' by plugging estimates of the relevant nuisance functions into each of the relevant formulas and then taking a sample average.  With well-specified  parametric models for the nuisance functions, the plug-in estimator can achieve $\sqrt{n}$-convergence rates. However, if the models are mis-specified, the plug-in estimator can be biased \citep{vansteelandt2012model, kang2007demystifying}. Meanwhile, if the nuisance functions are estimated with nonparametric methods, the plug-in estimator will  typically inherit slower-than-$\sqrt{n}$ nonparametric convergence rates. This motivates estimators based on nonparametric efficiency theory \citep{bickel1993efficient, van2000asymptotic, tsiatis2006semiparametric}.

\bigskip

The first-order bias of the plug-in estimator with nonparametric data-adaptive regression estimators can be characterized by the efficient influence function of the functional, which can be thought of as its first derivative in a von Mises expansion \citep{von1947asymptotic}. The efficient influence function can be used to construct estimators that can achieve $\sqrt{n}$-convergence with flexible, nonparametric estimators for the nuisance functions.  The next result establishes the efficient influence function of $\bbE \big[ Y\big\{ \overline D_{\overline f}(\overline a_T) \big\} \big]$.

\begin{proposition} \label{prop:eif}
    Let $\psi$ denote an identified flip effect $\bbE \big[ Y\big\{ \overline D_{\overline f}(\overline a_T) \big\} \big]$ from Theorem~\ref{thm:id-flip} with smooth weight function. Moreover, let 
    \begin{align*}
        \phi_t(b_t; A_t, H_t) = &\ \Big\{ 2 \one(b_t = a_t) - 1 \Big\} \Big\{ \one(A_t = a_t) - \bbP(A_t = a_t \mid H_t) \Big\} \\
        &\cdot \Big[ 1 - s_t\{ \bbP(A_t = a_t \mid H_t) \} + s_t^\prime \{ \bbP(A_t = a_t \mid H_t) \} \big\{ 1 - \bbP(A_t = a_t \mid H_t) \big\} \Big]
    \end{align*}
    where $s_t^\prime(y) = \frac{\partial}{\partial x} s_t(x) \big|_{x=y}$. Further suppose that the outcome $Y$ has bounded variance and the weight function is constructed such that $r_t(A_t \mid H_t)$ is uniformly bounded. Then, the efficient influence function of $\psi$ under a nonparametric model is
    \begin{align*}
        \varphi(Z) &= \varphi_m(Z) + \varphi_Q(Z) \text{ where} \nonumber \\
        \varphi_m(Z) &= \sum_{t=0}^{T} \left\{ \prod_{s=0}^{t} r_s(A_s \mid H_s) \right\} \left\{ \sum_{b_{t+1}} m_{t+1}(b_{t+1}, H_{t+1}) Q_{t+1}(b_{t+1} \mid H_{t+1}) - m_t(A_t, H_t) \right\},  \\
        \varphi_Q(Z) &= \sum_{t=1}^{T} \left\{ \prod_{s=1}^{t-1} r_s(A_s \mid H_s) \right\} \sum_{b_t} m_t(b_t, H_t) \phi_t(b_t; A_t, H_t).
    \end{align*}
\end{proposition}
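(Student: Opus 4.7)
The plan is to compute the pathwise derivative of $\psi_\epsilon = \bbE_\epsilon\big[Y\{\overline D_{\overline f}(\overline a_T)\}\big]$ along a one-dimensional regular parametric submodel $\{\bbP_\epsilon\}$ with score $S(Z)$ at $\epsilon = 0$, and then identify the Riesz representer in $L_2(\bbP)$ so that $\partial_\epsilon \psi_\epsilon\big|_{\epsilon=0} = \bbE[\varphi(Z) S(Z)]$. Since the model is nonparametric, the tangent space is all of $L_2(\bbP)$ and the representer, if mean-zero, is automatically the efficient influence function. Using the factorization $\bbP(Z) = \bbP(Y \mid \overline A_T, \overline X_T) \prod_{t=1}^{T} \bbP(A_t \mid H_t)\, \bbP(X_t \mid \overline A_{t-1}, \overline X_{t-1})$, the score decomposes additively into outcome, treatment, and covariate pieces, so contributions from perturbing each factor can be computed separately and then summed.

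First, I would compute the contributions from perturbing the outcome conditional and each covariate conditional, treating the intervention propensities $Q_t(\cdot \mid H_t)$ as \emph{fixed} (not yet accounting for their dependence on $\pi_t(H_t) := \bbP(A_t = a_t \mid H_t)$). This is a standard longitudinal g-formula EIF calculation: iterated applications of the tower property, combined with the recursive definition of $m_t$ in \eqref{eq:seq-reg}, yield telescoping ``pseudo-outcome minus sequential regression'' terms, each multiplied by the cumulative ratio $\prod_{s \leq t} r_s$, which change-of-measures from the target intervention distribution back to the observed one. Summing these contributions reproduces precisely the expression for $\varphi_m$ in the statement.

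Second, I would derive $\varphi_Q$ by computing the contribution from perturbing $\bbP(A_t \mid H_t)$ at each $t$. The key observation is that $\pi_t$ enters the g-formula identification \eqref{eq:id} only through $Q_t$; the IPW denominators $\pi_s^{-1}$ appearing in \eqref{eq:ratio} are artifacts of the reweighting representation and do not enter the functional itself. Therefore, by the chain rule,
\begin{align*}
    \frac{\partial Q_t(b_t \mid H_t)}{\partial \pi_t} = \{2\one(b_t = a_t) - 1\}\bigl[1 - s_t(\pi_t) + s_t^\prime(\pi_t)(1 - \pi_t)\bigr],
\end{align*}
using $Q_t(a_t \mid h_t) = \pi_t + s_t(\pi_t)(1 - \pi_t)$ and $Q_t(1-a_t \mid h_t) = 1 - Q_t(a_t \mid h_t)$. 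Smoothness of $s_t$ with bounded derivatives guarantees pathwise differentiability and a square-integrable representer. Applying the Bernoulli score identity $\partial_\epsilon \pi_{t,\epsilon}(H_t)\big|_{\epsilon=0} = \bbE[\{\one(A_t = a_t) - \pi_t\}\, S \mid H_t]$ and integrating against the nested g-formula kernel---which introduces the cumulative weight $\prod_{s=1}^{t-1} r_s(A_s \mid H_s)$ via change-of-measure from the intervention distribution back to the observed one, and then the term $m_t(b_t, H_t)$ after integrating out $(\underline X_{t+1}, \underline A_{t+1}, Y)$ against the intervention kernel---produces precisely the $\varphi_Q$ term after summing over $b_t \in \{0,1\}$.

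Combining contributions gives $\varphi = \varphi_m + \varphi_Q$. I would then verify $\bbE[\varphi(Z)] = 0$ (each telescoping summand in $\varphi_m$ and the factor $\one(A_t = a_t) - \pi_t$ in $\varphi_Q$ have conditional mean zero given the relevant history) and $\bbE[\varphi^2] < \infty$ using the uniform boundedness of $r_t$, bounded variance of $Y$, and boundedness of $s_t$ and $s_t^\prime$. The main obstacle is keeping track of the two distinct roles played by $\pi_t$---as the denominator inside the IPW ratio $r_t$ (whose effect is already absorbed into $\varphi_m$ via the covariate and outcome contributions) and inside $Q_t$ through $s_t$ (which generates $\varphi_Q$)---and ensuring the chain-rule expansion of $\partial_\epsilon Q_{t,\epsilon}$ is folded back into the nested g-formula integral with the correct conditioning and the correct sign for $b_t \neq a_t$.
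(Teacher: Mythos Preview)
Your proposal is correct but takes a genuinely different route from the paper. You compute the pathwise derivative directly along a parametric submodel, decomposing the score by the density factorization and using the chain rule on $Q_t(\cdot\mid H_t)$ through $\pi_t$ to obtain $\varphi_Q$, while the known-$Q$ longitudinal g-formula calculation yields $\varphi_m$. The paper instead proceeds by verification: it posits the candidate $\varphi = \varphi_m + \varphi_Q$ and shows, via Lemmas~\ref{lem:eif-second-order}, \ref{lem:phi-first-order}, and~\ref{lem:phi-second-order}, that $\widehat m_0 + \bbE\{\widehat\varphi(Z)\} - m_0$ is a second-order product of nuisance errors; this von Mises expansion with second-order remainder, together with bounded variance, yields the EIF via \citet[Lemma~2]{kennedy2023semiparametric}. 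Your approach is more constructive and makes transparent why $\phi_t$ takes its particular form (it is exactly $\partial_{\pi_t} Q_t(b_t\mid H_t)$ times the Bernoulli score residual). The paper's approach has the practical advantage that the same remainder algebra is immediately reused for Theorem~\ref{thm:flip-bias}, so the EIF derivation and the bias bound are obtained in one stroke. One minor point: your closing remark about the ``two distinct roles'' of $\pi_t$ is slightly misleading, since---as you yourself note earlier---$\pi_t$ enters the g-formula functional only through $Q_t$; the appearance of $r_t$ in $\varphi_m$ is purely a change-of-measure artifact in the \emph{representer}, not a second channel of dependence in $\psi$ that needs differentiating.
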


The efficient influence function in Proposition~\ref{prop:eif} follows a typical structure: $\varphi(Z)$ consists of a plug-in estimator minus the true functional plus weighted residuals. The first component, $\varphi_m(Z)$, represents the efficient influence function that would arise if $Q_t(A_t \mid H_t)$ were known and did not require estimation. The second component, $\varphi_Q(Z)$, emerges from the necessity of estimating this quantity. It includes $\phi_t(b_t; A_t, H_t)$, which is the  efficient influence function of $\bbE \{ Q_t(A_t = b_t \mid H_t) \}$.

\bigskip

The result requires bounded variance of $\varphi(Z)$, which is guaranteed if $Y$ has bounded variance and $r_t(A_t \mid H_t)$ is bounded for all $t \leq T$. The boundedness condition on $r_t$ can be guaranteed through appropriate construction of the weight function. All the smooth weights in Table~\ref{tab:long-fs} except for the smooth trimming weights satisfy it immediately. For smooth trimming weights, the bound can be satisfied by construction. For instance, choosing $s(x) = 1 - \exp(-kx)$ ensures $r_t(A_t \mid H_t) = \frac{Q_t(A_t \mid H_t)}{\bbP(A_t \mid H_t)}$ is bounded since $s(x) / x \leq k$.

\subsection{Multiply robust-style estimator}

The efficient influence function in Proposition~\ref{prop:eif} suggests a multiply robust-style estimator. For simplicity in the analysis, we'll assume there are $2n$ observations and use a sample split estimator, and let $\bbP_n$ denote a sample average over the evaluation data. 

\begin{algorithm}[Multiply robust-style estimator] \label{alg:mult-dr-est} 
    Assume training and evaluation datasets of $n$ observations. 
    \begin{enumerate}
        \item For all timepoints, regress $A_t$ on $H_t$ in the training data and obtain propensity score models; using these models, compute the intervention propensity scores $\widehat Q(A_t \mid H_t)$, ratios $\widehat r_{t}(A_t \mid H_t)$, and efficient influence functions $\widehat \phi_{t}(b_t; A_t, H_t)$ for all samples and timepoints.
        \item For $t=T$ to $t=1$: 
        \begin{enumerate}
            \item 
            \begin{enumerate}
                \item If $t = T$, then $\widehat P_{T+1}(H_{T+1}) = Y$. Otherwise, pseudo-outcome $\widehat P_{t+1}(H_{t+1})$ is available from the previous step in loop (see step 3 below).
                \item In the training data, regress $\widehat P_{t+1}(H_{t+1})$ against $A_t$ and $H_t$ to obtain a sequential regression model; using this model, across the full data, obtain estimates $\widehat m_{t}(0, H_t), \widehat m_{t}(1, H_t)$.
            \end{enumerate}
            \item Across the full data, compute pseudo-outcomes \\ $\widehat P_t(H_t) = \widehat m_t(0, H_t) \widehat Q_t(0 \mid H_t) + \widehat m_t(1, H_t) \widehat Q_t(1 \mid H_t)$ to use in the next step.
        \end{enumerate}
        \item In the evaluation data only, compute the plug-in estimator $\widehat m_0 = \bbP_n\{\widehat P_1(X_1) \}$ and efficient influence function values by plugging nuisance estimates into $\varphi(Z)$ in Proposition~\ref{prop:eif}.
    \end{enumerate}
    Finally, output the point estimate and variance estimate
    \[
    \widehat \psi := \widehat m_0 + \bbP_n \{ \widehat \varphi(Z) \} \text{ and } \widehat \sigma^2 := \bbP_n \left\{ \widehat \varphi(Z)^2 \right\}.
    \]
\end{algorithm}

Algorithm~\ref{alg:mult-dr-est} constructs an estimate of the efficient influence function by first estimating $\{ \widehat Q_t \}_{t=1}^T$ and then working sequentially from $t=T$ to $t=1$ to estimate $\{ \widehat m_t \}_{t=1}^T$. This sequential regression formulation is the same as in \citet{kennedy2019nonparametric}, and uses an estimated pseudo-outcome $\widehat P_{t+1}(H_{t+1})$ in a regression to estimate $m_t(A_t, H_t)$. An alternative is the targeted maximum likelihood estimator (TMLE) in \citet{diaz2023nonparametric}, which offers the same asymptotic guarantees. Algorithm~\ref{alg:mult-dr-est} also employs sample splitting and cross-fitting to avoid relying on Donsker or other complexity conditions on the nuisance function estimators \citep{robins2008higher, chernozhukov2018double, zheng2010asymptotic, chen2022debiased, van1996weak}. Therefore, we are agnostic to the choice of regression method.  To retain full-sample efficiency, one could cycle the folds in the estimator above, repeat, and average; the same error guarantees apply \citep{diaz2023nonparametric}.

\subsubsection{Multiply robust-style convergence guarantees}

The next result provides the primary convergence guarantee for this estimator: a bound on its bias. We then show that the estimator satisfies a rate multiply robust-style result, in the sense of \citet{rotnitzky2021characterization}, describing when $\sqrt{n}$-efficiency and asymptotic normality hold. 

\begin{theorem} \label{thm:flip-bias}
    Under the setup of Proposition~\ref{prop:eif}, let $\widehat \psi$ denote a point estimate from Algorithm~\ref{alg:mult-dr-est} and let
    \begin{itemize}
        \item $\widetilde m_t(A_t, H_t) = \bbE \left\{ \sum_{b_{t+1}} \widehat m_{t+1}(b_{t+1}, H_{t+1}) \widehat Q_{t+1}(b_{t+1} \mid H_{t+1}) \, \middle| \, A_t, H_t \right\}$, for $t < T$, and
        \item $\widehat \pi_t(H_t) := \widehat \bbP(A_t = 1 \mid H_t)$, for $t \leq T$.
    \end{itemize}
    Suppose $\exists\ C < \infty$ such that $\bbP \big\{ \widehat m_t(A_t, H_t) \leq C \big\} = \bbP \big\{ m_t (A_t, H_t) \leq C \big\} = 1$ for $t \leq T$. Then,
    \begin{align*}
        \left| \bbE \left( \widehat \psi - \psi \right) \right| \lesssim \min \bigg\{ &\sum_{t=1}^{T} \| \widehat \pi_t - \pi_t \| \| \widehat m_t - m_t \| + \| \widehat \pi_t - \pi_t \|^2, \\
        &\sum_{t=1}^{T}  \| \widehat m_t - \widetilde m_t \|  \Big( \sum_{s=1}^{t} \| \widehat \pi_s -\pi_s \| \Big) + \| \widehat \pi_t - \pi_t \| \Big( \sum_{s=1}^{t} \| \widehat \pi_s -\pi_s \| \Big) \bigg\}.
    \end{align*}
\end{theorem}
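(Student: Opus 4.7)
The plan is to condition on the training-sample nuisance estimates (so that $\widehat\pi_t$, $\widehat m_t$, $\widehat Q_t$, $\widehat r_t$ act as deterministic functions) and analyze the conditional bias
\[
\bbE(\widehat\psi - \psi) \;=\; \widehat m_0 + \bbE[\widehat\varphi(Z)] - \psi,
\]
as a standard second-order von Mises remainder. I would split $\widehat\varphi$ according to the decomposition in Proposition~\ref{prop:eif} into $\widehat\varphi_m + \widehat\varphi_Q$. The $\widehat m_0 + \bbE[\widehat\varphi_m]$ piece is the familiar sequential g-computation remainder: by iterated expectation and telescoping in $t$ (the usual identity used e.g.\ in Kennedy's and D\'iaz's LMTP analyses), it collapses into a sum over $t$ of cross-products of the form $\bbE\bigl[\{\prod_{s<t}\widehat r_s\}\{\widetilde m_t - \widehat m_t\}\{\widehat Q_t - Q_t\}\bigr]$ (plus analogous terms at level $T$), which already has the product structure needed for Cauchy--Schwarz.

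To convert the $\widehat Q_t - Q_t$ factor into a $\widehat\pi_t - \pi_t$ factor I would Taylor-expand using the assumed twice-differentiability of $s_t$: since $Q_t(a_t\mid H_t) = \pi_t + (1-\pi_t)s_t(\pi_t)$,
\[
\widehat Q_t - Q_t \;=\; (\widehat\pi_t - \pi_t)\bigl[1 - s_t(\pi_t) + (1-\pi_t)s_t'(\pi_t)\bigr] + O\bigl((\widehat\pi_t - \pi_t)^2\bigr).
\]
The first-order kernel here is precisely the one appearing inside $\phi_t$, so when combined with $\bbE[\widehat\varphi_Q]$ the linear contributions cancel, leaving only the quadratic Taylor remainder. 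Applying Cauchy--Schwarz to each surviving cross-product, and using the hypothesized uniform boundedness of $\widehat m_t$, $m_t$, and of the ratios $\widehat r_s$ to absorb the cumulative weights into the universal constant, delivers the first bound $\sum_t \|\widehat\pi_t - \pi_t\|\|\widehat m_t - m_t\| + \|\widehat\pi_t - \pi_t\|^2$.

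For the second bound I would exploit the fact that the sequential regression $\widehat m_t$ is fit to the \emph{estimated} pseudo-outcome $\widehat P_{t+1}$, so it targets $\widetilde m_t$ rather than $m_t$. Writing $\widehat m_t - m_t = (\widehat m_t - \widetilde m_t) + (\widetilde m_t - m_t)$ and recursively expanding the second summand via iterated expectation shows $\widetilde m_t - m_t$ is itself a weighted sum of downstream $\widehat Q_s - Q_s$ factors; invoking the Taylor expansion from the previous paragraph converts each of these into a $\widehat\pi_s - \pi_s$ factor. Substituting back yields the second bound, in which $\|\widehat m_t - \widetilde m_t\|$ replaces $\|\widehat m_t - m_t\|$ at the price of an extra cumulative factor $\sum_{s\le t}\|\widehat\pi_s - \pi_s\|$; taking the minimum of the two decompositions completes the proof. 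The main obstacle is the sequential bookkeeping — verifying that the first-order Taylor contributions cancel exactly against $\widehat\varphi_Q$ at every $t$, and that the cumulative ratio products $\prod_{s<t}\widehat r_s$ remain uniformly bounded so the constant in $\lesssim$ is free of hidden dependence on $T$. Sample splitting is what makes the conditional expectations of these products tractable and keeps the argument from requiring Donsker-type complexity control.
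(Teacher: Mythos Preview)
Your overall architecture is right and matches the paper's: condition on the training fold, write the bias as $\widehat m_0 + \bbE[\widehat\varphi_m] + \bbE[\widehat\varphi_Q] - m_0$, telescope, and use Lemma~\ref{lem:phi-second-order} (your Taylor expansion of $\widehat Q_t - Q_t$) so that the linear-in-$\widehat\pi_t-\pi_t$ pieces cancel against $\widehat\varphi_Q$, leaving only second-order cross terms. Where you depart from the paper is in how the two bounds are obtained. The paper runs \emph{two independent telescopings}: a ``forwards-in-time'' rearrangement (Lemmas~\ref{lem:forward-m-error}--\ref{lem:forward-combine-error}) that produces factors $(\widehat m_t - m_t)\times(\widehat\pi_t-\pi_t)$ and yields the first bound, and a separate ``backwards-in-time'' rearrangement (Lemmas~\ref{lem:backwards-1}--\ref{lem:eif-second-order}) that produces factors $(\prod_{s\le t}\widehat r_s - \prod_{s\le t} r_s)\times(\widetilde m_t - \widehat m_t)$ and yields the second bound directly. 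Your plan instead derives the second bound \emph{from} the first, via the substitution $\widehat m_t - m_t = (\widehat m_t - \widetilde m_t) + (\widetilde m_t - m_t)$ and a recursion on $\widetilde m_t - m_t$; this works (unrolling gives $\|\widehat m_t - m_t\|\lesssim\sum_{s\ge t}\|\widehat m_s-\widetilde m_s\|+\|\widehat\pi_s-\pi_s\|$, and swapping the order of summation recovers the stated bound) and is arguably more economical than a second full telescoping. One point to clean up: the cross-products you write for $\widehat\varphi_m$, namely $\{\prod_{s<t}\widehat r_s\}\{\widetilde m_t - \widehat m_t\}\{\widehat Q_t - Q_t\}$, do not arise in either decomposition---in the forward version the regression factor is $\widehat m_t - m_t$ (not $\widetilde m_t - \widehat m_t$), and in the backward version the propensity factor is $\prod\widehat r_s-\prod r_s$ (not $\widehat Q_t - Q_t$); make sure your telescoping actually delivers the $(\widehat m_t - m_t)$ factor you need for the first bound before feeding it into your recursion. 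Finally, your worry about $T$-dependence in the constant is legitimate but the paper does not avoid it either: the $\lesssim$ hides a factor growing with $T$ through the bounded products $\prod_{s<t}\widehat r_s$.
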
 

Theorem~\ref{thm:flip-bias} provides a bound on the bias of the multiply robust-style estimator. Under the assumptions of Proposition~\ref{prop:eif}, we only require that both the true and estimated regression functions $m_t$ and $\widehat{m}_t$ are bounded. This result provides three new contributions in longitudinal data and sheds light on estimating WATEs in single-timepoint data:

\begin{enumerate}[itemsep=0.05in]
    \item \textbf{Simultaneous bounds on the bias.} We establish that two bounds hold at once, so the bias can be bounded by their minimum. To our knowledge, this is novel. This arises because the total bias decomposes into a sum of errors from $t=1$ to $t=T$, with the first part of the minimum obtained by decomposing the error at future timepoints via the sequential regression $\widehat m_t$, and the second by decomposing the error at past timepoints via $\{ \widehat Q_s \}_{s \leq t}$.
    
    \item \textbf{Extension of \citet[Theorem 3]{diaz2023nonparametric}.} The first part of the minimum extends \citet[Theorem~3]{diaz2023nonparametric} to a one-step estimator and to stochastic LMTPs with unknown $Q_t$. In this setting, the dependence on future timepoints $s \ge t$ is contained in $\|\widehat{m}_t - m_t\|$, which captures the errors from sequential regressions from $s=T$ to $s=t$, as well as from the propensity scores $\{\widehat{Q}_s\}_{s>t}$ that define the pseudo-outcomes. Our bound is new in explicitly incorporating $\|\widehat{\pi}_t - \pi_t\|^2$, reflecting the fact that the flipping probabilities must be estimated.
    
    \item \textbf{A tighter bound than \citet[Theorem 3]{kennedy2019nonparametric}.} The second part of our bound depends only on the sequential regression error at time~$t$, ignoring pseudo-outcome estimation. Specifically, it involves $\|\widehat{m}_t - \widetilde{m}_t\|$, whereas \citet[Theorem 3]{kennedy2019nonparametric} upper bounds the same term by $\|\widehat{m}_t - m_t\|$, which implicitly includes additional error from future propensity scores and sequential regressions (as discussed in point 2.).

    \item \textbf{Doubly robust-style bounds for WATE estimation when $T=1$.} For single-timepoint data, this estimator yields doubly robust-style bounds for estimating the class of WATEs described in Section~\ref{sec:t=1}. In some cases, this was already known; for example, it is well-established that one can upper bound the bias of an estimator for the ATO by $\sum_{a\in \{0,1\}} \| \widehat \mu_a - \mu_a \| \| \widehat \pi - \pi \| + \| \widehat \pi - \pi\|^2$, where $\mu_a = \bbE(Y \mid A=a, X)$, because the ATO can be re-written as $\tfrac{\bbE \left\{ \cov (A, Y \mid X) \right\}}{\bbE \left\{ \bbV(A \mid X)  \right\}}$. Our result includes that bias bound as a special case. For more complex smooth weight functions, such as Shannon's entropy weights---which take the form $f(X) = - \big[ \pi(X) \log \pi(X) + \{ 1 - \pi(X) \} \log \{ 1 - \pi(X) \} \big]$---the literature has not, to our knowledge, developed doubly robust-style estimators that allow for nonparametric nuisance estimation while accounting for uncertainty in propensity score estimation during weight construction. Our result does so.
\end{enumerate}

\noindent This bound on the bias indicates when weak convergence is possible.

\begin{corollary}[Multiple robustness and weak convergence] 
\label{cor:s-flip-convergence}
    Under the setup of Theorem~\ref{thm:flip-bias}, let $\widehat{\sigma}^2$ be a variance estimate from Algorithm~\ref{alg:mult-dr-est}. Suppose $\lVert \widehat{\varphi} - \varphi \rVert = o_{\mathbb{P}}(1)$ and 
    \begin{align*}
        \min \bigg\{ &\sum_{t=1}^{T} \| \widehat \pi_t - \pi_t \| \| \widehat m_t - m_t \| + \| \widehat \pi_t - \pi_t \|^2, \\
        &\sum_{t=1}^{T}  \| \widehat m_t - \widetilde m_t \|  \Big( \sum_{s=1}^{t} \| \widehat \pi_s -\pi_s \| \Big) + \| \widehat \pi_t - \pi_t \| \Big( \sum_{s=1}^{t} \| \widehat \pi_s -\pi_s \| \Big) \bigg\} = o_\bbP(n^{-1/2}).
    \end{align*}
    Then,
    \begin{equation}
        \sqrt{\frac{n}{\widehat{\sigma}^2}}\bigl(\widehat{\psi} - \psi\bigr) 
        \indist N(0, 1).
    \end{equation}
\end{corollary}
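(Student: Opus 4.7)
The plan is to write a standard one-step estimator error decomposition, control each piece separately, and then apply Slutsky's theorem. Specifically, since $\widehat{\psi} = \widehat{m}_0 + \mathbb{P}_n(\widehat{\varphi})$ and $\psi = \mathbb{E}(\widehat{m}_0 + \widehat{\varphi})$ would hold exactly in the absence of bias, I would write
\begin{equation*}
\widehat{\psi} - \psi = (\mathbb{P}_n - \mathbb{P})(\varphi) \;+\; (\mathbb{P}_n - \mathbb{P})(\widehat{\varphi} - \varphi) \;+\; \bigl[\,\mathbb{P}(\widehat{\varphi}) + \widehat{m}_0 - \psi \,\bigr],
\end{equation*}
where all three terms are evaluated on the evaluation split. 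The first term is a centered i.i.d. average of $\varphi(Z)$, the second is an empirical-process remainder, and the third is exactly the conditional bias controlled by Theorem~\ref{thm:flip-bias}.

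Next, I would handle each term. For the bias term, Theorem~\ref{thm:flip-bias} combined with the corollary's hypothesis gives $\mathbb{P}(\widehat{\varphi}) + \widehat{m}_0 - \psi = o_{\mathbb{P}}(n^{-1/2})$ directly. For the empirical-process term, the sample-splitting construction in Algorithm~\ref{alg:mult-dr-est} makes $\widehat{\varphi} - \varphi$ deterministic conditional on the training fold; conditioning on training data, the evaluation-fold average has conditional mean zero and conditional variance at most $\lVert \widehat{\varphi} - \varphi\rVert^2 / n = o_{\mathbb{P}}(n^{-1})$, so a conditional Chebyshev argument yields $(\mathbb{P}_n - \mathbb{P})(\widehat{\varphi} - \varphi) = o_{\mathbb{P}}(n^{-1/2})$ unconditionally. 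For the leading term, since $\varphi(Z)$ has finite variance $\sigma^2 = \mathbb{E}\{\varphi(Z)^2\}$ under the moment conditions of Proposition~\ref{prop:eif}, the classical CLT gives $\sqrt{n}\,(\mathbb{P}_n - \mathbb{P})(\varphi) \rightsquigarrow N(0,\sigma^2)$.

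Finally, I would establish consistency of the variance estimator, namely $\widehat{\sigma}^2 \stackrel{p}{\to} \sigma^2$. Writing $\widehat{\sigma}^2 = \mathbb{P}_n(\widehat{\varphi}^2)$, I would split this as $\mathbb{P}_n(\varphi^2) + \mathbb{P}_n(\widehat{\varphi}^2 - \varphi^2)$; the first piece converges in probability to $\sigma^2$ by the weak law of large numbers, and the second is handled using $\widehat{\varphi}^2 - \varphi^2 = (\widehat{\varphi} - \varphi)(\widehat{\varphi} + \varphi)$, Cauchy--Schwarz in $L_2(\mathbb{P})$, the $o_{\mathbb{P}}(1)$ hypothesis on $\lVert \widehat{\varphi} - \varphi\rVert$, and the boundedness of $\varphi$ implicit in the bounded-ratio and bounded-regression assumptions of Proposition~\ref{prop:eif} and Theorem~\ref{thm:flip-bias}. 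Combining the three pieces of the decomposition gives $\sqrt{n}(\widehat{\psi} - \psi) \rightsquigarrow N(0,\sigma^2)$, and Slutsky's theorem with $\widehat{\sigma}^2 \stackrel{p}{\to} \sigma^2 > 0$ yields the stated standard-normal limit.

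The main obstacle I expect is purely bookkeeping around the sample-splitting conditioning argument for the empirical-process remainder, since cross-fitting adds a bit of notational overhead if one wishes to recombine folds while preserving the $o_{\mathbb{P}}(n^{-1/2})$ rate; however, this is by now a routine device (cf.\ the references cited after Algorithm~\ref{alg:mult-dr-est}), and no substantially new ideas beyond Theorem~\ref{thm:flip-bias}, the CLT, and Slutsky are required.
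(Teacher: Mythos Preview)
Your proposal is correct and follows essentially the same route as the paper: the identical three-term decomposition $(\bbP_n-\bbP)(\varphi)+(\bbP_n-\bbP)(\widehat\varphi-\varphi)+\bbE(\widehat\psi-\psi)$, Chebyshev under sample splitting for the empirical-process remainder, Theorem~\ref{thm:flip-bias} for the bias, the CLT for the leading term, and Slutsky after establishing $\widehat\sigma^2\inprob\bbV\{\varphi(Z)\}$. The paper's writeup is terser on the variance-consistency step (it simply asserts $\widehat\sigma^2\inprob\bbV\{\varphi(Z)\}$ from $\|\widehat\varphi-\varphi\|=o_\bbP(1)$), but your Cauchy--Schwarz argument fills in exactly what is implicit there.
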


Corollary~\ref{cor:s-flip-convergence} provides a multiply robust-style guarantee, showing conditions under which the estimator achieves $\sqrt{n}$-convergence to a Gaussian limit. Specifically, the first requirement ensures that the estimated efficient influence function is consistent, and the second is the crucial multiply robust-style bound on the bias. In particular, the product of the nuisance estimation errors from Theorem~\ref{thm:flip-bias} must converge to zero at a rate of $n^{-1/2}$. This condition is achievable under nonparametric assumptions on the nuisance functions (e.g., smoothness, sparsity, or bounded variation), where each nuisance function can be estimated at a $n^{-1/4}$ rate \citep{gyorfi2002distribution}.

\begin{remark}
    When $Q_t$ is unknown, a \emph{model} multiply robust-style result for consistency (in the sense of \citet{rotnitzky2021characterization}) is less immediately interesting than in settings with known $Q_t$. When $Q_t$ is unknown, consistent estimation of $\{\pi_t\}_{t=1}^T$ is necessary, but is also sufficient, to guarantee $\widehat{\psi} \xrightarrow{\mathbb{P}} \psi$.  However, Theorem~\ref{thm:flip-bias} implies a new result when $Q_t$ is known: $2(T+1)$ model multiple robustness; see, e.g., \citet[Lemma 2]{diaz2023nonparametric} for details on $T+1$ model multiple robustness. In other words, our result implies that the typical multiply robust estimator is \emph{twice as robust as was previously realized}.
\end{remark}

\subsection{Sequentially doubly robust-style estimator}

The multiply robust-style estimator can be improved to a sequentially doubly robust-style estimator. One can gain intuition for how this is possible by examining the estimated pseudo-outcome $\widehat P_{t+1}(H_{t+1})$ in Algorithm~\ref{alg:mult-dr-est}: regressing $\widehat P_{t+1}(H_{t+1}) = \widehat m_{t+1}(0, H_{t+1}) \widehat Q_t(0 \mid H_{t+1}) + \widehat m_{t+1}(1, H_{t+1}) \widehat Q_{t+1}(1 \mid H_{t+1})$ against $\{ A_t, H_t \}$ corresponds to using a plug-in estimator for $m_t(A_t, H_t)$.  This estimator can be improved by debiasing this pseudo-outcome. For sequential regressions with longitudinal data, this was first observed in \citet{luedtke2017sequential} and \citet{rotnitzky2017multiply}, and recently extended to LMTPs in \citet{diaz2023nonparametric}. This general approach --- debiasing a pseudo-outcome --- has also been applied to conditional effect estimation, continuous dose-response curve estimation, and censoring \citep{kennedy2017non, kennedy2023towards, mcclean2024nonparametric, rubin2007doubly}. An adaptation of the estimator in Algorithm~\ref{alg:mult-dr-est} is inspired by the following lemma.
\begin{lemma} \label{lem:seq-property}
    Under the setup of Proposition~\ref{prop:eif}, define
    $Y = m_{T+1} = \sum_{b_{T+1}} m_{T+1} \left( Q_{T+1}  + \phi_{T+1} \right)$ and recursively define for $t=T$ to $t=1$ \footnotesize
    \begin{align*}
        &P_t^\ast(Z) = \sum_{b_t} m_t (b_t, H_t) \Big\{ Q_t(b_t \mid H_t) + \phi_t(b_t; A_t, H_t) \Big\} \\
        &+ \sum_{s=t}^{T} \left\{ \prod_{k=t}^s r_k(A_k \mid H_k) \right\} \left\{ \sum_{b_{s+1}} m_{s+1} (b_{s+1}, H_{s+1}) \Big\{ Q_{s+1}(b_{s+1} \mid H_{s+1}) + \phi_{s+1}(b_{s+1}; A_{s+1}, H_{s+1}) \Big\} - m_s(A_s, H_s) \right\}.
    \end{align*}
    \normalsize Then,
    \begin{equation} \label{eq:pseudo-unbiased}
        \bbE \left\{ P_{t+1}^\ast(Z) \mid A_t, H_t \right\} = m_t(A_t, H_t).
    \end{equation}
    Moreover, suppose access to fixed nuisance estimates $\left\{ \widehat m_s^\ast, \widehat Q_s \right\}_{s=t+1}^{T}$ to construct $\widehat P_{t+1}^\ast(Z)$. Then, 
    \begin{align}
        &\bbE \left\{ \widehat P_{t+1}^\ast (Z) - m_t(A_t, H_t) \mid A_t, H_t \right\} = \nonumber \\
        &\hspace{0.2in}\sum_{s=t+1}^{T} \bbE\left[ \left\{ \prod_{k=t+1}^{s-1} \widehat r_k(A_k \mid H_k) \right\} \Big\{ m_s(A_s, H_s) - \widehat m_s^\ast(A_s, H_s) \Big\} \Big\{ \widehat r_s(A_s \mid H_s) - r_s (A_s \mid H_s) \Big\} \mid A_t, H_t \right] \nonumber \\
        &+ \sum_{s=t+1}^{T} \bbE \left[ \left\{ \prod_{k=t+1}^{s-1} \widehat r_k(A_k \mid H_k) \right\} \sum_{b_s} \widehat m_s^\ast(A_s, H_s) \left\{ \widehat Q_s(b_s \mid H_s) + \widehat \phi_s(b_s; A_s, H_s) - Q_s(b_s \mid H_s) \right\} \mid A_t, H_t \right]. \label{eq:pseudo-dr}
    \end{align}
\end{lemma}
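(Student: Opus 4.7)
The plan is to prove the two claims in order, leveraging throughout two conditional-mean-zero facts: first, $\bbE\{\phi_s(b_s; A_s, H_s) \mid H_s\} = 0$, since $\phi_s$ is proportional to $\one(A_s = a_s) - \bbP(A_s = a_s \mid H_s)$; and second, the likelihood-ratio identity $\bbE\{r_s(A_s \mid H_s) g(A_s, H_s) \mid H_s\} = \sum_{b_s} Q_s(b_s \mid H_s) g(b_s, H_s)$, so that $r_s$ transports the observed-treatment distribution to the intervention distribution at time $s$.

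For \eqref{eq:pseudo-unbiased}, I would take $\bbE\{\cdot \mid A_t, H_t\}$ of $P_{t+1}^\ast(Z)$ term by term. The leading summand $\sum_{b_{t+1}} m_{t+1}(b_{t+1}, H_{t+1})\{Q_{t+1}(b_{t+1} \mid H_{t+1}) + \phi_{t+1}(b_{t+1}; A_{t+1}, H_{t+1})\}$ collapses first to $\sum_{b_{t+1}} m_{t+1}(b_{t+1}, H_{t+1}) Q_{t+1}(b_{t+1} \mid H_{t+1})$ by integrating out $A_{t+1}$ given $H_{t+1}$, and then to $m_t(A_t, H_t)$ by the definition \eqref{eq:seq-reg}. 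Each correction term $\prod_{k=t+1}^s r_k(A_k \mid H_k) \cdot \{\sum_{b_{s+1}} m_{s+1}(Q_{s+1} + \phi_{s+1}) - m_s(A_s, H_s)\}$ is killed by conditioning first on $(A_s, H_s)$: the inner bracket has zero conditional mean there by the same two facts, and the product of ratios outside contributes nothing after applying the tower property.

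For \eqref{eq:pseudo-dr}, I would first invoke \eqref{eq:pseudo-unbiased} to rewrite the target bias as $\bbE\{\widehat P_{t+1}^\ast(Z) - P_{t+1}^\ast(Z) \mid A_t, H_t\}$. Introducing $M_s(H_s) = \sum_{b_s} m_s(b_s, H_s)\{Q_s(b_s \mid H_s) + \phi_s(b_s; A_s, H_s)\}$ and its hatted counterpart $\widehat M_s(H_s)$, the difference $\widehat P_{t+1}^\ast - P_{t+1}^\ast$ is the sum of $\widehat M_{t+1} - M_{t+1}$ and correction-style terms $\prod_{k=t+1}^s \widehat r_k\{\widehat M_{s+1} - \widehat m_s^\ast\} - \prod_{k=t+1}^s r_k\{M_{s+1} - m_s\}$. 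I would expand each piece via the product identity $\widehat a \widehat b - ab = \widehat a(\widehat b - b) + (\widehat a - a)b$, applied nestedly to the ratio products, the sequential regressions, and the one-step intervention propensities $\widehat Q_s + \widehat \phi_s$. Factors that remain purely in terms of the truth (e.g.\ $r_k$ or the bracket $M_{s+1} - m_s$) vanish under iterated conditioning---either by the likelihood-ratio identity or by the pseudo-unbiasedness argument above. The surviving terms are exactly those in \eqref{eq:pseudo-dr}: the regression-ratio interactions $\prod_{k=t+1}^{s-1} \widehat r_k \cdot (m_s - \widehat m_s^\ast)(\widehat r_s - r_s)$, and the one-step correction errors $\prod_{k=t+1}^{s-1} \widehat r_k \cdot \sum_{b_s} \widehat m_s^\ast(b_s, H_s)\{\widehat Q_s + \widehat \phi_s - Q_s\}$ coming from using a one-step plug-in for $Q_s$ inside the pseudo-outcome.

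The main obstacle will be the bookkeeping of this telescoping expansion, since three kinds of nuisance differences ($\widehat r_k - r_k$, $\widehat m_s^\ast - m_s$, and $\widehat Q_s + \widehat \phi_s - Q_s$) interact across all timepoints $s \ge t+1$, and many mixed cross-terms must be verified to cancel via the tower property before the stated product-form bias emerges. The cleanest route is to peel off one timepoint at a time, starting at $s = T$ and working backwards, so that the recursion mirrors the recursive definition of $P_{t+1}^\ast$ itself. No ingredient beyond the two conditional-mean-zero facts and the product identity is required; the difficulty is purely combinatorial.
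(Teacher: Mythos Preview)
Your proposal is correct. For \eqref{eq:pseudo-unbiased}, your argument is exactly the paper's: kill the $\phi$-terms by conditioning on $H_s$, kill the residual brackets by conditioning on $(A_s,H_s)$, and collapse the leading term to $m_t$ via \eqref{eq:seq-reg}.

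For \eqref{eq:pseudo-dr}, your closing remark---peel off one timepoint at a time, starting at $s=T$ and working backwards---is precisely the paper's route. The paper, however, does \emph{not} first subtract the true $P_{t+1}^\ast$ and expand $\widehat P_{t+1}^\ast - P_{t+1}^\ast$ via the product identity as you initially propose; it works directly with $\widehat P_{t+1}^\ast - m_t$. Concretely, the paper takes the $s=T$ residual $\bigl(\prod_{k=t+1}^{T}\widehat r_k\bigr)(Y-\widehat m_T^\ast)$, conditions on $(A_T,H_T)$ to replace $Y$ by $m_T$, and then adds and subtracts $r_T$ to split off the second-order piece $\bigl(\prod_{k=t+1}^{T-1}\widehat r_k\bigr)(m_T-\widehat m_T^\ast)(\widehat r_T - r_T)$ from a remainder $\bigl(\prod_{k=t+1}^{T-1}\widehat r_k\bigr)(m_T-\widehat m_T^\ast)r_T$. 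That remainder is then merged with the $s=T-1$ residual: after conditioning on $H_T$, the $r_T(m_T-\widehat m_T^\ast)$ becomes $\sum_{b_T}(m_T-\widehat m_T^\ast)Q_T$, which combines with $\sum_{b_T}\widehat m_T^\ast(\widehat Q_T+\widehat\phi_T)$ to yield the $Q$-correction term plus $m_{T-1}-\widehat m_{T-1}^\ast$, and the induction continues. This avoids ever juxtaposing products $\prod\widehat r_k$ and $\prod r_k$ side by side, so there are no cross-terms between hatted and unhatted ratio products to track; the only add/subtract is a single $r_s$ at each step. Your $\widehat P-P$ expansion reaches the same destination but carries more combinatorial baggage, since the nested product identity generates mixed terms whose vanishing you must verify separately.
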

This lemma proposes the debiased pseudo-outcome, $P_t^\ast$, then shows that it is indeed unbiased (in \eqref{eq:pseudo-unbiased}) and that its error, if it were estimated, is a product of errors (in \eqref{eq:pseudo-dr}). This mirrors Lemma 1 in \citet{diaz2023nonparametric}, Lemma 1 in \citet{luedtke2017sequential}, and Lemma 2 in \citet{rotnitzky2017multiply}. However, this result is new because it accounts for the error in estimating the intervention propensity score $Q_t$, from which the second term in the bias decomposition in \eqref{eq:pseudo-dr} arises. This result inspires a new sequentially doubly robust-style estimator for S-LMTPs, which amends Algorithm~\ref{alg:mult-dr-est}.
\begin{algorithm}[Sequentially doubly robust-style estimator] \label{alg:seq-dr-est}
    Use Algorithm~\ref{alg:mult-dr-est} with the following amendments to the sequential regression loop:
    \begin{itemize}[itemsep=0.03in]
        \item In step 2(a)(i), let $\widehat P_{T+1}^\ast(Z) = Y$.
        \item In step 2(a)(ii), in the training data regress $\widehat P_{t+1}^\ast(Z)$ against $A_t$ and $H_t$, and denote the estimates as $\widehat m_{t}^\ast(0, H_t), \widehat m_{t}^\ast(1, H_t)$.
        \item In step 3, when constructing pseudo-outcomes, use the transformation $\widehat P_t^\ast(Z)$ which uses available nuisance estimates $\left\{ \widehat m_s^\ast, \widehat Q_s \right\}_{s=t+1}^{T}$.
    \end{itemize}
    Finally, construct a point estimate and variance estimate as
    \[
    \widehat \psi^\ast = \bbP_n \{ \widehat P_1^\ast(Z) \} \text{ and } \widehat \sigma^2 := \bbP_n \Big[ \big\{ \widehat P_1^\ast(Z) - \widehat \psi^\ast \big\}^2 \Big].
    \]
\end{algorithm}
\noindent The estimator is similar to the multiply robust estimator in Algorithm~\ref{alg:mult-dr-est}, but uses the debiased pseudo-outcomes and debiased sequential regression estimates. A consequence of this is that $\widehat P_1^\ast(Z)$ already takes the same form as the un-centered efficient influence function from Proposition~\ref{prop:eif} and the point estimate and variance estimate can be constructed using $\widehat P_1^\ast(Z)$, rather than constructing an estimate of the efficient influence function.

\subsubsection{Sequentially doubly robust-style convergence guarantees}

The next result gives the sequentially doubly robust-style properties of the estimator.
\begin{theorem} \label{thm:seq-dr-bias}
    Under the setup of Theorem~\ref{thm:flip-bias}, let $\widehat \psi^\ast$ denote a point estimate from Algorithm~\ref{alg:seq-dr-est} and let $\widetilde m_t^\ast (A_t, H_t) = \bbE \left\{ \widehat P_{t+1}^\ast(Z) \mid A_t, H_t \right\}$. Moreover, suppose $\exists \ C < \infty$ such that $\bbP \{ \widehat m_t^\ast (A_t, H_t) \leq C \} = 1$ for $t \leq T$. Then,
    \begin{align*}
        \left| \bbE \left( \widehat \psi^\ast - \psi \right) \right| \lesssim \sum_{t=1}^{T} \| \widehat \pi_t - \pi_t \| \Big( \| \widehat m_t^\ast - \widetilde m_t^\ast \| + \| \widehat \pi_t - \pi_t \| \Big).
    \end{align*}
\end{theorem}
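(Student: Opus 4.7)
By sample splitting, it suffices to bound $\left|\bbE[\widehat P_1^*(Z)] - m_0\right|$ with the training-sample nuisance estimates treated as fixed. My starting point is the recursive identity
\[
\widehat P_t^*(Z) = \sum_{b_t}\widehat m_t^*(b_t, H_t)\Bigl\{\widehat Q_t(b_t \mid H_t) + \widehat\phi_t(b_t; A_t, H_t)\Bigr\} + \widehat r_t(A_t \mid H_t)\Bigl\{\widehat P_{t+1}^*(Z) - \widehat m_t^*(A_t, H_t)\Bigr\},
\]
obtained by rearranging the definition of $P_t^*$ in Lemma~\ref{lem:seq-property} with estimates substituted for true nuisances. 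Because $\widehat m_t^*$ is constructed in the training sample as a regression of $\widehat P_{t+1}^*(Z)$ onto $(A_t, H_t)$, we have $\bbE[\widehat P_{t+1}^*(Z) \mid A_t, H_t] = \widetilde m_t^*(A_t, H_t)$ under $\bbP$ with the training estimates held fixed.

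I would then iterate this recursion from $t = 1$ through $t = T$. Taking conditional expectations at each step and using the identity above to replace $\widehat P_{t+1}^*(Z)$ by $\widetilde m_t^*(A_t, H_t)$, the bias telescopes into a sum over time points,
\[
\bbE[\widehat P_1^*(Z)] - m_0 = \sum_{t=1}^{T} \bbE\Biggl[\Biggl\{\prod_{k=1}^{t-1} r_k(A_k \mid H_k)\Biggr\}\bigl\{R_t^{(1)}(A_t, H_t) + R_t^{(2)}(H_t)\bigr\}\Biggr],
\]
where $R_t^{(1)}(A_t, H_t) = \{\widehat r_t(A_t \mid H_t) - r_t(A_t \mid H_t)\}\{\widetilde m_t^*(A_t, H_t) - \widehat m_t^*(A_t, H_t)\}$ is the time-$t$ cross-product produced by the $\widehat r_t\{\widehat P_{t+1}^* - \widehat m_t^*\}$ piece of the recursion, and $R_t^{(2)}(H_t) = \sum_{b_t}\widehat m_t^*(b_t, H_t)\bigl\{\widehat Q_t(b_t \mid H_t) - Q_t(b_t \mid H_t) + \bbE[\widehat\phi_t(b_t; A_t, H_t) \mid H_t]\bigr\}$ collects the residual bias from approximating $Q_t$ by $\widehat Q_t$ after debiasing with $\widehat\phi_t$. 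The crucial sharpening compared to Theorem~\ref{thm:flip-bias} is that $\widetilde m_t^*$, not $m_t$, appears inside $R_t^{(1)}$: this is a direct consequence of using the sample-split regression target rather than an oracle quantity.

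To control each remainder I would combine Cauchy--Schwarz with the smoothness of $s_t$. For $R_t^{(1)}$, the Lipschitz continuity of $r_t$ as a function of $\pi_t$ (inherited from the bounded derivatives of $s_t$) gives $\|\widehat r_t - r_t\| \lesssim \|\widehat\pi_t - \pi_t\|$, so Cauchy--Schwarz---combined with the uniform boundedness of $\prod_{k<t} r_k$ assumed in Proposition~\ref{prop:eif}---yields a contribution $\|\widehat m_t^* - \widetilde m_t^*\|\,\|\widehat\pi_t - \pi_t\|$. For $R_t^{(2)}$, a second-order Taylor expansion of $Q_t(\widehat\pi_t) = \widehat\pi_t + s_t(\widehat\pi_t)(1 - \widehat\pi_t)$ around $\pi_t$ produces the first-order term $(\widehat\pi_t - \pi_t)\bigl[1 - s_t(\pi_t) + s_t'(\pi_t)(1 - \pi_t)\bigr]$, which is cancelled to leading order by $\bbE[\widehat\phi_t(b_t; A_t, H_t) \mid H_t]$; the remainder is of order $(\widehat\pi_t - \pi_t)^2$, and boundedness of $\widehat m_t^*$ converts it into a contribution of order $\|\widehat\pi_t - \pi_t\|^2$. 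Summing over $t$ yields the stated bound. The main obstacle is the telescoping bookkeeping in the second paragraph: the recursion naively appears to propagate $\widehat P_{t+1}^*$-related errors across all iterations, and it is the deliberate choice to view $\widetilde m_t^*$ as the regression target at each level---rather than the oracle $m_t$---that collapses those errors into a single product per time point and delivers sequential double robustness.
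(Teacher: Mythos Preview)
Your decomposition
\[
\bbE[\widehat P_1^*(Z)] - m_0 = \sum_{t=1}^{T} \bbE\Biggl[\Biggl\{\prod_{k=1}^{t-1} r_k(A_k \mid H_k)\Biggr\}\bigl\{R_t^{(1)}(A_t, H_t) + R_t^{(2)}(H_t)\bigr\}\Biggr]
\]
is correct, and the subsequent bounds on $R_t^{(1)}$ and $R_t^{(2)}$ go through exactly as you describe. One point you left implicit: obtaining this telescoped form requires more than just replacing $\widehat P_{t+1}^*$ by $\widetilde m_t^*$. At each step you must also add and subtract $\sum_{b_t}\widehat m_t^* Q_t$ to isolate $R_t^{(2)}$, and split $\widehat r_t = (\widehat r_t - r_t) + r_t$ so that the $r_t(\widetilde m_t^* - \widehat m_t^*)$ piece combines with $\sum_{b_t}(\widehat m_t^* - m_t)Q_t = \bbE[r_t(\widehat m_t^* - m_t)\mid H_t]$ to produce the propagating term $r_t(\widetilde m_t^* - m_t)$, which vanishes at $t=T$ since $\widetilde m_T^* = m_T$. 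Your final paragraph suggests you see this, but the middle paragraph understates the bookkeeping.

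Your route differs from the paper's. The paper first applies Lemma~\ref{lem:seq-property} to obtain a decomposition with estimated weights $\prod_{k<t}\widehat r_k$ and factors $(m_t - \widehat m_t^*)(\widehat r_t - r_t)$; it then adds and subtracts $\widetilde m_t^*$, bounds $(\widetilde m_t^* - \widehat m_t^*)(\widehat r_t - r_t)$ directly, and must handle the leftover $(m_t - \widetilde m_t^*)(\widehat r_t - r_t)$ by a second recursive application of Lemma~\ref{lem:seq-property}, arguing inductively that this residual is itself bounded by products at later timepoints that are absorbed into the final sum. Your telescoping introduces the true $r_t$ and $m_t$ at each step from the outset, so the propagating quantity is $r_t(\widetilde m_t^* - m_t)$ rather than anything involving $\widehat m_t^*$; this delivers $\widetilde m_t^* - \widehat m_t^*$ in the cross-product immediately and sidesteps the paper's recursive handling of $m_t - \widetilde m_t^*$. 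The trade-off is that your argument leans on boundedness of the true ratios $r_k$ (given in Proposition~\ref{prop:eif}), whereas the paper's leans on boundedness of the estimated $\widehat r_k$---both are assumed here, so neither is more restrictive, and your version is arguably cleaner.
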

Theorem~\ref{thm:seq-dr-bias} shows that the estimate is sequentially doubly robust-style: its bias can decomposed as a sum of errors across timepoints where the error at each timepoint only depends on the propensity score at that timepoint and the sequential regression estimate at that timepoint. Note that $\| \widehat m_t^\ast - \widetilde m_t^\ast \|$ only captures the error from the sequential regression at $t$; there is no dependence on $s > t$ through the pseudo-outcome. Therefore, we have the following asymptotic convergence guarantee.
\begin{corollary} \label{cor:s-flip-seq-convergence}
    Under the setup of Theorem~\ref{thm:seq-dr-bias}, let $\widehat{\sigma}^2$ be a variance estimate from Algorithm~\ref{alg:seq-dr-est}. Suppose $\lVert \widehat P_1^\ast - P_1^\ast \| = o_{\mathbb{P}}(1)$ and 
    \[
    \sum_{t=1}^{T} \| \widehat \pi_t - \pi_t \| \Big( \| \widehat m_t^\ast - \widetilde m_t^\ast \| + \| \widehat \pi_t - \pi_t \| \Big) = o_{\mathbb{P}}(n^{-1/2}).
    \]
    Then,
    \begin{equation}
        \sqrt{\frac{n}{\widehat{\sigma}^2}}\bigl(\widehat{\psi} - \psi\bigr) 
        \indist N(0, 1).
    \end{equation}
\end{corollary}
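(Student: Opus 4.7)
The plan is to follow the standard sample-split/one-step decomposition, using Theorem~\ref{thm:seq-dr-bias} to control the bias and the consistency assumption $\lVert \widehat P_1^\ast - P_1^\ast \rVert = o_{\mathbb{P}}(1)$ to control the empirical process term, then conclude via the CLT and Slutsky. Specifically, I would write
\[
\sqrt{n}\bigl(\widehat\psi^\ast - \psi\bigr) \;=\; \sqrt{n}\,(\bbP_n - \bbP)\bigl(\widehat P_1^\ast - P_1^\ast\bigr) \;+\; \sqrt{n}\,(\bbP_n - \bbP)\,P_1^\ast \;+\; \sqrt{n}\,\bbE\bigl(\widehat\psi^\ast - \psi\bigr),
\]
where the expectation in the last term is conditional on the training fold, so that $\widehat P_1^\ast$ behaves as a fixed function under $\bbP$.

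The first step is to control the empirical process remainder. Because Algorithm~\ref{alg:seq-dr-est} uses sample splitting, conditionally on the training data, $\widehat P_1^\ast$ is fixed, and hence $\text{Var}\bigl\{\sqrt{n}(\bbP_n - \bbP)(\widehat P_1^\ast - P_1^\ast) \mid \text{training}\bigr\} \leq \lVert \widehat P_1^\ast - P_1^\ast \rVert^2 = o_{\mathbb{P}}(1)$. Chebyshev's inequality (applied conditionally, then unconditionally via the continuous mapping theorem on the conditional probability) gives $(\bbP_n - \bbP)(\widehat P_1^\ast - P_1^\ast) = o_{\mathbb{P}}(n^{-1/2})$, as in Lemma~2 of \citet{kennedy2017non}. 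The second step is to bound the bias: Theorem~\ref{thm:seq-dr-bias} combined with the rate assumption in the corollary gives $\sqrt{n}\,\bbE(\widehat\psi^\ast - \psi) = o_{\mathbb{P}}(1)$ directly.

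For the middle (leading) term, the ordinary central limit theorem yields $\sqrt{n}(\bbP_n - \bbP) P_1^\ast \indist N(0, \sigma^2)$ with $\sigma^2 = \bbV\{P_1^\ast(Z)\}$; boundedness of $\sigma^2$ follows from the bounded-variance assumptions in Proposition~\ref{prop:eif} together with the uniform boundedness of $r_t$ and $\widehat m_t^\ast$. To close the argument I must show that $\widehat\sigma^2 \inprob \sigma^2$. Since $\widehat\sigma^2 = \bbP_n\bigl\{(\widehat P_1^\ast - \widehat\psi^\ast)^2\bigr\}$, I would add and subtract $P_1^\ast$ and $\psi$, expand the square, and bound cross terms using $\lVert \widehat P_1^\ast - P_1^\ast \rVert = o_{\mathbb{P}}(1)$ and $\widehat\psi^\ast \inprob \psi$ (which follows from the previous two steps), invoking Cauchy--Schwarz and the law of large numbers applied to $(P_1^\ast)^2$. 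Slutsky's theorem then delivers $\sqrt{n/\widehat\sigma^2}(\widehat\psi^\ast - \psi) \indist N(0,1)$.

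The main obstacle is not conceptual but bookkeeping: making the empirical-process argument rigorous across folds (and when folds are cycled to recover full efficiency, as noted after Algorithm~\ref{alg:mult-dr-est}), and verifying that the variance estimate remains consistent when $\widehat P_1^\ast$ uses estimated rather than true nuisances. Both follow standard templates (e.g., \citet{chernozhukov2018double}, and the analogous argument in \citet[Theorem~4]{diaz2023nonparametric}) once one verifies uniform $L_2$ boundedness of $\widehat P_1^\ast$, which the corollary's consistency assumption together with the boundedness assumptions of Theorem~\ref{thm:seq-dr-bias} provides.
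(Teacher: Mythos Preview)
Your proposal is correct and takes essentially the same approach as the paper: the paper's proof simply states that the result follows by the same argument as for Corollary~\ref{cor:s-flip-convergence}, and you have spelled out exactly that argument adapted to the sequentially doubly robust estimator, with $\widehat P_1^\ast$ playing the role of the uncentered estimated influence function and Theorem~\ref{thm:seq-dr-bias} replacing Theorem~\ref{thm:flip-bias} for the bias bound. Your additional care in justifying $\widehat\sigma^2 \inprob \sigma^2$ is more detailed than what the paper provides, but the structure is identical.
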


Corollary~\ref{cor:s-flip-seq-convergence} provides a sequentially doubly robust-style guarantee for weak convergence. It improves on Corollary~\ref{cor:s-flip-convergence} because it only requires the nuisance estimators converge at a rate of $n^{-1/2}$ in product \emph{at each timepoint}. There is no dependence across timepoints, unlike in Corollary~\ref{cor:s-flip-convergence}.

\section{Illustrative data analysis} \label{sec:data-analysis}

In this section, we provide illustrative results from a data analysis examining the effect of union membership on wages. Appendix~\ref{app:simulations} contains a simulation study. Our code is available at \url{https://github.com/alecmcclean/flip-interventions}. It uses \texttt{flip}, a development branch of the \texttt{lmtp} package \citep{williams2023lmtp}, available at \url{https://github.com/alecmcclean/lmtp}.

\bigskip

\subsection{Data Description}

We use the \texttt{wagepan} dataset from the \texttt{wooldridge} package in \texttt{R} \citep{wooldridge2024data, r2024language}. This dataset is from \citet{vella1998whose} and was obtained from the \textit{Journal of Applied Econometrics} archive at \url{http://qed.econ.queensu.ca/jae/}. The dataset contains employment information on 545 workers over eight years, from 1980-1987, though we focus on the first four years (1980-1983) for our analysis.

\bigskip

The dataset includes baseline covariates measured for each person: years of education, race (black/white), and ethnicity (hispanic/not hispanic). Additionally, several time-varying covariates are recorded for each year, including marital status (married/not married), health status (poor health: yes/no), labor market experience (in years), number of hours worked, occupation and industry classifications, region of residence (South/non-South), union membership, and the natural logarithm of hourly wage. An individual identifier allows us to link observations over time for each worker.

\subsection{Methodology}

Following \citet{vella1998whose}, we treat union membership as a time-varying treatment variable. We examine two flip interventions: one targeting \emph{always treated} ($a_t\equiv1$, i.e., always in a union) and another targeting \emph{never treated} ($a_t\equiv0$, i.e., never in a union). We employ the following flipping weight:
\[
f_t \{ H_t(\overline D_{t-1}) \} = 1 - \exp \Big[ - 20 \cdot \bbP \{ A_t(\overline D_{t-1}) = a_t \mid H_{t-1}(\overline D_{t-1}) \} \Big].
\]
This weight serves as a smooth approximation of a trimming indicator using the target propensity score. Our primary outcome of interest is the log wage in 1983. We estimate the mean difference in potential outcomes under each flip intervention and the absolute change in the number of treatments at each timepoint, then combine this information to estimate the longitudinal interventional flip effect between these two treatment regimes.

\bigskip

To estimate these effects, we use the sequentially doubly robust-style estimator in Algorithm~\ref{alg:seq-dr-est} with five-fold cross-fitting. For nuisance function estimation, we employ the SuperLearner, which combines multiple machine learning approaches: a linear model, generalized linear model, lasso with no interactions, a regression tree, and a random forest with default settings \citep{polley2024super, wright2017fast, therneau2023rpart}.

\subsection{Results}

Figure~\ref{fig:prop-scores} displays the estimated propensity scores through boxplots showing the estimated probability of union membership in each year. Notably, some probabilities are near zero, indicating that certain workers had close to zero probability of being in a union given their observed history. This represents a reasonable scenario for considering flip interventions that adapt to positivity violations. Figure~\ref{fig:trt-diff} illustrates the distribution of the mean difference in treatments under the always treated flip intervention minus the never treated flip intervention, showing roughly 0.9 treatments on average at each timepoint. This demonstrates that the two interventions quite closely approximate the relevant static interventions, although near positivity violations mean that some subjects are not always shifted to the target treatment under each intervention.

\begin{figure}[ht]
    \centering
    \includegraphics[width=0.8\linewidth]{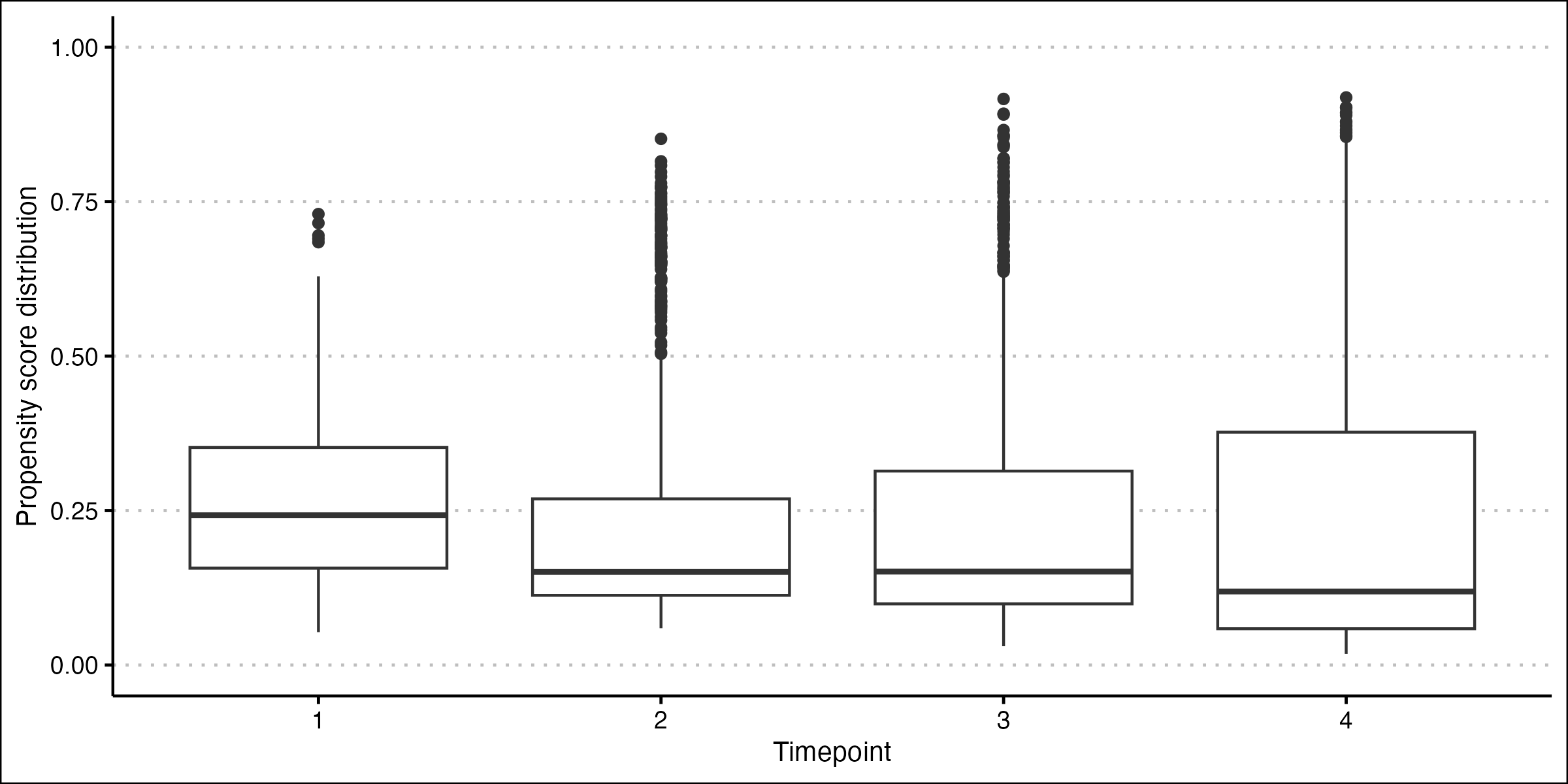}
    \caption{Propensity score distributions by timepoint.}
    \label{fig:prop-scores}
\end{figure}

\begin{figure}[ht]
    \centering
    \includegraphics[width=0.7\linewidth]{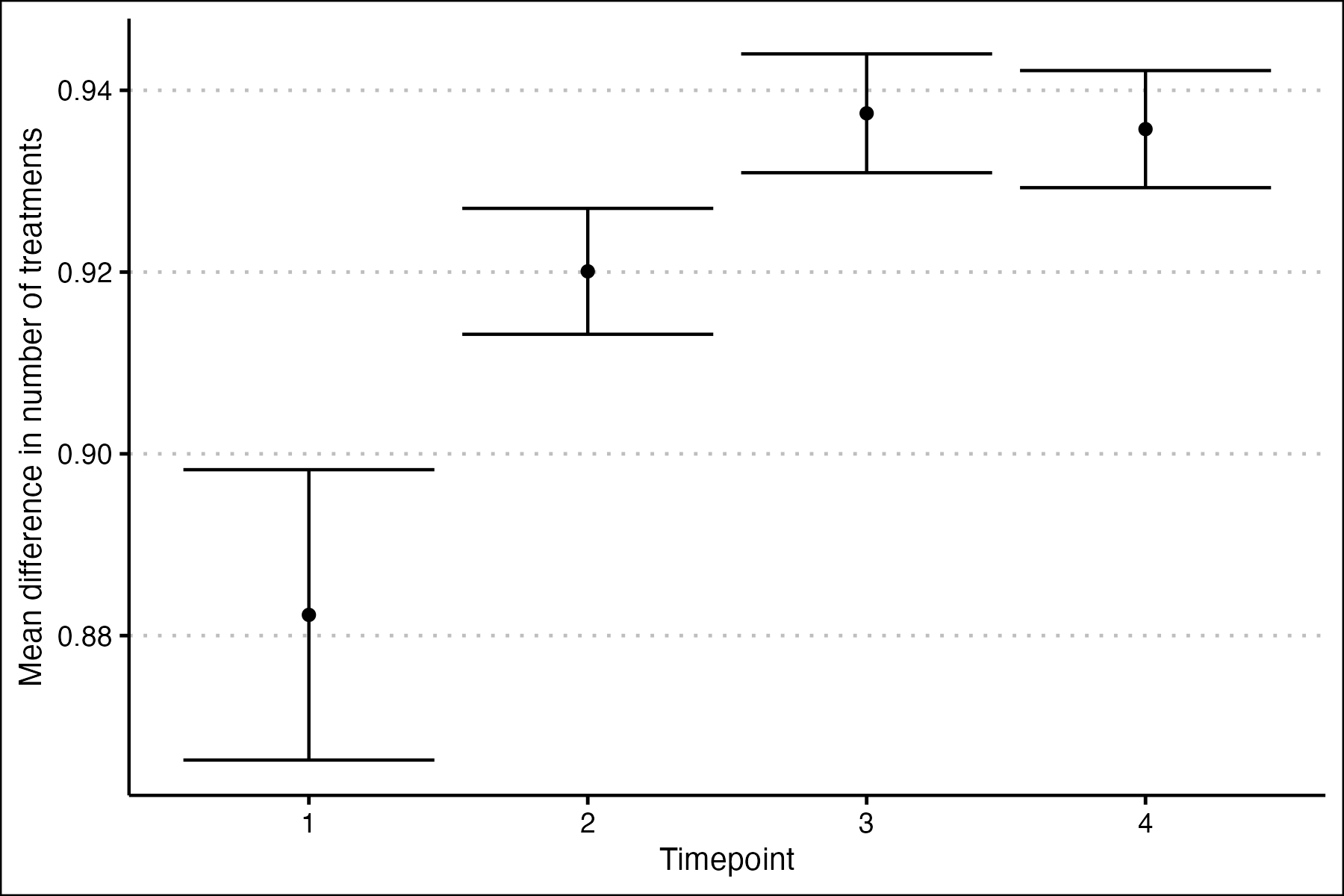}
    \caption{Mean difference in number of treatments under the ``always treat'' flip interventions and ``never treat'' flip interventions by timepoint.}
    \label{fig:trt-diff}
\end{figure}

\bigskip
Table~\ref{tab:results} presents our main findings. The average log wage was higher under the always treated flip intervention than the never treated flip intervention: 0.059 (95\% CI: [0.019, 0.098]). This corresponds to a roughly 6\% wage increase in 1983. Under the always treated flip intervention, roughly 92\% more workers (95\% CI: [0.911, 0.927]) would belong to a union compared to the never treated flip intervention, averaged across timepoints. When standardized by the increase in union membership, the estimated wage increase remains substantial at roughly 6.6\% per worker shifted into union membership per timepoint (0.064, 95\% CI: [0.021, 0.107]). These estimates are statistically significant at the $\alpha = 0.05$ level, as the 95\% confidence intervals for both the mean difference in potential outcomes and the longitudinal interventional flip effect exclude zero. These results suggest that union membership had a causally positive effect on wages in the early 1980's. However, a full analysis would need to incorporate sensitivity analyses to violations of the unmeasured confounding assumption, which may not be justified in this setting.

\begin{table}[H]
    \centering
    \begin{tabular}{lcc}
    \toprule
    \textbf{Parameter} & \textbf{Estimate} & \textbf{95\% confidence interval} \\ 
    \midrule
    Mean difference in potential outcomes & 0.059 & [0.019, 0.098] \\[4pt]
    Per-timepoint average change in treatments & 0.919 & [0.911, 0.927] \\[4pt]
    Longitudinal interventional flip effect & 0.064 & [0.021,\,0.107] \\
    \bottomrule
    \end{tabular}
    \caption{Results from wage panel data analysis.}
    \label{tab:results}
\end{table}

\section{Discussion} \label{sec:discussion}

In this paper, we proposed flip interventions as a novel approach for weighting and trimming in longitudinal data.  Flip interventions target a specific regime: if the subject would have taken the target treatment at a given timepoint, then there is no intervention; otherwise, the subject is flipped towards the target treatment with probability according to their weight (e.g., overlap weight, trimming indicator). We demonstrated that these interventions are identifiable under arbitrary positivity violations and are, in principle, practically implementable as single-world interventions. Moreover, in single-timepoint data we demonstrated that interventional flip effects, which standardize the mean difference in potential outcomes by the mean difference in treatments, are equivalent to weighted average treatment effects. Building on this framework, we introduced and analyzed longitudinal interventional flip effects. Finally, we developed efficient estimation strategies, establishing multiply robust and sequentially doubly robust estimators with favorable convergence guarantees.

\bigskip

While flip interventions provide a practical and robust method for addressing positivity violations in longitudinal settings, there are several promising directions for future work. One important area involves developing parsimonious summaries of effects. Although the benefit of flip interventions is they allow researchers to directly target each treatment regime while adapting to positivity violations, it may still be desirable to obtain parsimonious summaries of effect estimates, rather than target each of the $2^T$ regimes separately. Marginal structural models could be combined with flip effects to do so.

\bigskip

Another compelling future direction is motivated by an alternative typical perspective on trimming and weighting methods, which emphasizes minimizing estimation error, rather than robustness to positivity violations. Therefore, an alternative approach could explicitly examine estimation variance under different weights, targeting data-adaptive parameters optimized for minimal variance.  Implementation could leverage cross-validation or other model selection strategies, though the development of asymptotic theory for fully data-driven selection of tuning parameters remains an open challenge and promising direction for future research \citep{khan2022doubly}.

\section*{References}
\vspace{-0.3in}
\bibliographystyle{plainnat}
\bibliography{references}

\begin{thebibliography}{66}
\providecommand{\natexlab}[1]{#1}
\providecommand{\url}[1]{\texttt{#1}}
\expandafter\ifx\csname urlstyle\endcsname\relax
  \providecommand{\doi}[1]{doi: #1}\else
  \providecommand{\doi}{doi: \begingroup \urlstyle{rm}\Url}\fi

\bibitem[Andrews and Didelez(2021)]{andrews2021insights}
Ryan~M Andrews and Vanessa Didelez.
\newblock Insights into the cross-world independence assumption of causal mediation analysis.
\newblock \emph{Epidemiology}, 32\penalty0 (2):\penalty0 209--219, 2021.

\bibitem[Bickel et~al.(1993)Bickel, Klaassen, Ritov, and Wellner]{bickel1993efficient}
Peter~J Bickel, Chris~AJ Klaassen, Y~Ritov, and JA~Wellner.
\newblock \emph{Efficient and adaptive estimation for semiparametric models}, volume~4.
\newblock Springer, 1993.

\bibitem[Bonvini et~al.(2023)Bonvini, McClean, Branson, and Kennedy]{bonvini2023incremental}
Matteo Bonvini, Alec McClean, Zach Branson, and Edward~H Kennedy.
\newblock Incremental causal effects: an introduction and review.
\newblock In \emph{Handbook of matching and weighting adjustments for causal inference}, pages 349--372. Chapman and Hall/CRC, 2023.

\bibitem[Branson et~al.(2023)Branson, Kennedy, Balakrishnan, and Wasserman]{branson2023causal}
Zach Branson, Edward~H Kennedy, Sivaraman Balakrishnan, and Larry Wasserman.
\newblock Causal effect estimation after propensity score trimming with continuous treatments.
\newblock \emph{arXiv preprint arXiv:2309.00706}, 2023.

\bibitem[Chen et~al.(2022)Chen, Syrgkanis, and Austern]{chen2022debiased}
Qizhao Chen, Vasilis Syrgkanis, and Morgane Austern.
\newblock Debiased machine learning without sample-splitting for stable estimators.
\newblock \emph{Advances in Neural Information Processing Systems}, 35:\penalty0 3096--3109, 2022.

\bibitem[Chernozhukov et~al.(2018)Chernozhukov, Chetverikov, Demirer, Duflo, Hansen, Newey, and Robins]{chernozhukov2018double}
Victor Chernozhukov, Denis Chetverikov, Mert Demirer, Esther Duflo, Christian Hansen, Whitney Newey, and James Robins.
\newblock Double/debiased machine learning for treatment and structural parameters.
\newblock \emph{The Econometrics Journal}, 21\penalty0 (1):\penalty0 C1--C68, 2018.

\bibitem[Cohn et~al.(2023)Cohn, Ben-Michael, Feller, and Zubizarreta]{cohn2023balancing}
Eric~R Cohn, Eli Ben-Michael, Avi Feller, and Jos{\'e}~R Zubizarreta.
\newblock Balancing weights for causal inference.
\newblock In \emph{Handbook of Matching and Weighting Adjustments for Causal Inference}, pages 293--312. Chapman and Hall/CRC, 2023.

\bibitem[Crump et~al.(2009)Crump, Hotz, Imbens, and Mitnik]{crump2009dealing}
Richard~K Crump, V~Joseph Hotz, Guido~W Imbens, and Oscar~A Mitnik.
\newblock Dealing with limited overlap in estimation of average treatment effects.
\newblock \emph{Biometrika}, 96\penalty0 (1):\penalty0 187--199, 2009.

\bibitem[D{\'\i}az and van~der Laan(2012)]{diaz2012population}
Iv{\'a}n D{\'\i}az and Mark van~der Laan.
\newblock Population intervention causal effects based on stochastic interventions.
\newblock \emph{Biometrics}, 68\penalty0 (2):\penalty0 541--549, 2012.

\bibitem[D{\'\i}az et~al.(2023)D{\'\i}az, Williams, Hoffman, and Schenck]{diaz2023nonparametric}
Iv{\'a}n D{\'\i}az, Nicholas Williams, Katherine~L Hoffman, and Edward~J Schenck.
\newblock Nonparametric causal effects based on longitudinal modified treatment policies.
\newblock \emph{Journal of the American Statistical Association}, 118\penalty0 (542):\penalty0 846--857, 2023.

\bibitem[Gy{\"o}rfi et~al.(2002)Gy{\"o}rfi, Kohler, Krzyzak, Walk, et~al.]{gyorfi2002distribution}
L{\'a}szl{\'o} Gy{\"o}rfi, Michael Kohler, Adam Krzyzak, Harro Walk, et~al.
\newblock \emph{A distribution-free theory of nonparametric regression}, volume~1.
\newblock Springer, 2002.

\bibitem[Hainmueller(2012)]{hainmueller2012entropy}
Jens Hainmueller.
\newblock Entropy balancing for causal effects: A multivariate reweighting method to produce balanced samples in observational studies.
\newblock \emph{Political analysis}, 20\penalty0 (1):\penalty0 25--46, 2012.

\bibitem[Haneuse and Rotnitzky(2013)]{haneuse2013estimation}
Sebastian Haneuse and Andrea Rotnitzky.
\newblock Estimation of the effect of interventions that modify the received treatment.
\newblock \emph{Statistics in medicine}, 32\penalty0 (30):\penalty0 5260--5277, 2013.

\bibitem[Heckman and Vytlacil(2001)]{heckman2001policy}
James~J Heckman and Edward Vytlacil.
\newblock Policy-relevant treatment effects.
\newblock \emph{American Economic Review}, 91\penalty0 (2):\penalty0 107--111, 2001.

\bibitem[Heckman and Vytlacil(2005)]{heckman2005structural}
James~J Heckman and Edward Vytlacil.
\newblock Structural equations, treatment effects, and econometric policy evaluation 1.
\newblock \emph{Econometrica}, 73\penalty0 (3):\penalty0 669--738, 2005.

\bibitem[Hern\'{a}n and Robins(2020)]{hernan2020whatif}
Miguel Hern\'{a}n and James Robins.
\newblock \emph{Causal Inference: What if}.
\newblock Boca Raton: Chapman \& Hall/CRC, 2020.

\bibitem[Imai and Ratkovic(2015)]{imai2015robust}
Kosuke Imai and Marc Ratkovic.
\newblock Robust estimation of inverse probability weights for marginal structural models.
\newblock \emph{Journal of the American Statistical Association}, 110\penalty0 (511):\penalty0 1013--1023, 2015.

\bibitem[Jensen et~al.(2024)Jensen, Lange, Schj{\o}rring, and Petersen]{jensen2024identification}
Aksel~KG Jensen, Theis Lange, Olav~L Schj{\o}rring, and Maya~L Petersen.
\newblock Identification and responses to positivity violations in longitudinal studies: an illustration based on invasively mechanically ventilated icu patients.
\newblock \emph{Biostatistics \& Epidemiology}, 8\penalty0 (1):\penalty0 e2347709, 2024.

\bibitem[Kang and Schafer(2007)]{kang2007demystifying}
Joseph~DY Kang and Joseph~L Schafer.
\newblock Demystifying double robustness: A comparison of alternative strategies for estimating a population mean from incomplete data.
\newblock \emph{Statistical Science}, 22\penalty0 (4):\penalty0 523--539, 2007.

\bibitem[Kennedy et~al.(2023)Kennedy, Balakrishnan, and Wasserman]{kennedy2023semiparametric}
Edward Kennedy, Sivaraman Balakrishnan, and Larry Wasserman.
\newblock Semiparametric counterfactual density estimation.
\newblock \emph{Biometrika}, 110\penalty0 (4):\penalty0 875--896, 2023.

\bibitem[Kennedy(2019)]{kennedy2019nonparametric}
Edward~H Kennedy.
\newblock Nonparametric causal effects based on incremental propensity score interventions.
\newblock \emph{Journal of the American Statistical Association}, 114\penalty0 (526):\penalty0 645--656, 2019.

\bibitem[Kennedy(2023)]{kennedy2023towards}
Edward~H Kennedy.
\newblock Towards optimal doubly robust estimation of heterogeneous causal effects.
\newblock \emph{Electronic Journal of Statistics}, 17\penalty0 (2):\penalty0 3008--3049, 2023.

\bibitem[Kennedy et~al.(2017)Kennedy, Ma, McHugh, and Small]{kennedy2017non}
Edward~H Kennedy, Zongming Ma, Matthew~D McHugh, and Dylan~S Small.
\newblock Non-parametric methods for doubly robust estimation of continuous treatment effects.
\newblock \emph{Journal of the Royal Statistical Society Series B: Statistical Methodology}, 79\penalty0 (4):\penalty0 1229--1245, 2017.

\bibitem[Kennedy et~al.(2020)Kennedy, Balakrishnan, and G’Sell]{kennedy2020sharp}
Edward~H Kennedy, Sivaraman Balakrishnan, and Max G’Sell.
\newblock Sharp instruments for classifying compliers and generalizing causal effects.
\newblock \emph{The Annals of Statistics}, 48\penalty0 (4):\penalty0 2008--2030, 2020.

\bibitem[Khan and Ugander(2022)]{khan2022doubly}
Samir Khan and Johan Ugander.
\newblock Doubly-robust and heteroscedasticity-aware sample trimming for causal inference.
\newblock \emph{arXiv preprint arXiv:2210.10171}, 2022.

\bibitem[Levis et~al.(2024)Levis, Kennedy, McClean, Balakrishnan, and Wasserman]{levis2024stochastic}
Alexander~W Levis, Edward~H Kennedy, Alec McClean, Sivaraman Balakrishnan, and Larry Wasserman.
\newblock Stochastic interventions, sensitivity analysis, and optimal transport.
\newblock \emph{arXiv preprint arXiv:2411.14285}, 2024.

\bibitem[Li and Li(2019)]{li2019propensity}
Fan Li and Fan Li.
\newblock Propensity score weighting for causal inference with multiple treatments.
\newblock \emph{The Annals of Applied Statistics}, 13\penalty0 (4):\penalty0 2389--2415, 2019.

\bibitem[Li et~al.(2018)Li, Morgan, and Zaslavsky]{li2018balancing}
Fan Li, Kari~Lock Morgan, and Alan~M Zaslavsky.
\newblock Balancing covariates via propensity score weighting.
\newblock \emph{Journal of the American Statistical Association}, 113\penalty0 (521):\penalty0 390--400, 2018.

\bibitem[Luedtke et~al.(2017)Luedtke, Sofrygin, van~der Laan, and Carone]{luedtke2017sequential}
Alexander~R Luedtke, Oleg Sofrygin, Mark~J van~der Laan, and Marco Carone.
\newblock Sequential double robustness in right-censored longitudinal models.
\newblock \emph{arXiv preprint arXiv:1705.02459}, 2017.

\bibitem[McClean et~al.(2024{\natexlab{a}})McClean, Branson, and Kennedy]{mcclean2024nonparametric}
Alec McClean, Zach Branson, and Edward~H Kennedy.
\newblock Nonparametric estimation of conditional incremental effects.
\newblock \emph{Journal of Causal Inference}, 12\penalty0 (1):\penalty0 20230024, 2024{\natexlab{a}}.

\bibitem[McClean et~al.(2024{\natexlab{b}})McClean, Li, Bae, McAdams-DeMarco, D{\'\i}az, and Wu]{mcclean2024fair}
Alec McClean, Yiting Li, Sunjae Bae, Mara~A McAdams-DeMarco, Iv{\'a}n D{\'\i}az, and Wenbo Wu.
\newblock Fair comparisons of causal parameters with many treatments and positivity violations.
\newblock \emph{arXiv preprint arXiv:2410.13522}, 2024{\natexlab{b}}.

\bibitem[Moore et~al.(2012)Moore, Neugebauer, van~der Laan, and Tager]{moore2012causal}
Kelly~L Moore, Romain Neugebauer, Mark~J van~der Laan, and Ira~B Tager.
\newblock Causal inference in epidemiological studies with strong confounding.
\newblock \emph{Statistics in medicine}, 31\penalty0 (13):\penalty0 1380--1404, 2012.

\bibitem[Pearl(2009)]{pearl2009causality}
Judea Pearl.
\newblock \emph{Causality}.
\newblock Cambridge University Press, 2009.

\bibitem[Petersen et~al.(2012)Petersen, Porter, Gruber, Wang, and Van Der~Laan]{petersen2012diagnosing}
Maya~L Petersen, Kristin~E Porter, Susan Gruber, Yue Wang, and Mark~J Van Der~Laan.
\newblock Diagnosing and responding to violations in the positivity assumption.
\newblock \emph{Statistical methods in medical research}, 21\penalty0 (1):\penalty0 31--54, 2012.

\bibitem[Polley et~al.(2024)Polley, LeDell, Kennedy, and {van der Laan}]{polley2024super}
Eric Polley, Erin LeDell, Chris Kennedy, and Mark {van der Laan}.
\newblock \emph{SuperLearner: Super Learner Prediction}, 2024.
\newblock URL \url{https://CRAN.R-project.org/package=SuperLearner}.
\newblock R package version 2.0-29.

\bibitem[{R Core Team}(2024)]{r2024language}
{R Core Team}.
\newblock \emph{R: A Language and Environment for Statistical Computing}.
\newblock R Foundation for Statistical Computing, Vienna, Austria, 2024.
\newblock URL \url{https://www.R-project.org/}.

\bibitem[Richardson and Robins(2013)]{richardson2013single}
Thomas~S Richardson and James~M Robins.
\newblock Single world intervention graphs (swigs): A unification of the counterfactual and graphical approaches to causality.
\newblock \emph{Center for the Statistics and the Social Sciences, University of Washington Series. Working Paper}, 128\penalty0 (30):\penalty0 2013, 2013.

\bibitem[Robins(1986)]{robins1986new}
James Robins.
\newblock A new approach to causal inference in mortality studies with a sustained exposure period—application to control of the healthy worker survivor effect.
\newblock \emph{Mathematical modelling}, 7\penalty0 (9-12):\penalty0 1393--1512, 1986.

\bibitem[Robins et~al.(2008)Robins, Li, Tchetgen, and van~der Vaart]{robins2008higher}
James Robins, Lingling Li, Eric~Tchetgen Tchetgen, and Aad van~der Vaart.
\newblock Higher order influence functions and minimax estimation of nonlinear functionals.
\newblock In \emph{Institute of Mathematical Statistics Collections}, pages 335--421. Institute of Mathematical Statistics, 2008.

\bibitem[Robins et~al.(2004)Robins, Hern{\'a}n, and Siebert]{robins2004effects}
James~M Robins, Miguel~A Hern{\'a}n, and Uwe Siebert.
\newblock Effects of multiple interventions.
\newblock \emph{Comparative quantification of health risks: global and regional burden of disease attributable to selected major risk factors}, 1:\penalty0 2191--2230, 2004.

\bibitem[Rotnitzky et~al.(2017)Rotnitzky, Robins, and Babino]{rotnitzky2017multiply}
Andrea Rotnitzky, James Robins, and Lucia Babino.
\newblock On the multiply robust estimation of the mean of the g-functional.
\newblock \emph{arXiv preprint arXiv:1705.08582}, 2017.

\bibitem[Rotnitzky et~al.(2021)Rotnitzky, Smucler, and Robins]{rotnitzky2021characterization}
Andrea Rotnitzky, Ezequiel Smucler, and James~M Robins.
\newblock Characterization of parameters with a mixed bias property.
\newblock \emph{Biometrika}, 108\penalty0 (1):\penalty0 231--238, 2021.

\bibitem[Rubin and van~der Laan(2007)]{rubin2007doubly}
Daniel Rubin and Mark~J van~der Laan.
\newblock A doubly robust censoring unbiased transformation.
\newblock \emph{The international journal of biostatistics}, 3\penalty0 (1), 2007.

\bibitem[Schindl et~al.(2024)Schindl, Shen, and Kennedy]{schindl2024incremental}
Kyle Schindl, Shuying Shen, and Edward~H Kennedy.
\newblock Incremental effects for continuous exposures.
\newblock \emph{arXiv preprint arXiv:2409.11967}, 2024.

\bibitem[Shea(2024)]{wooldridge2024data}
Justin~M. Shea.
\newblock \emph{wooldridge: 115 Data Sets from "Introductory Econometrics: A Modern Approach, 7e" by Jeffrey M. Wooldridge}, 2024.
\newblock URL \url{https://CRAN.R-project.org/package=wooldridge}.
\newblock R package version 1.4-4.

\bibitem[Stensrud et~al.(2024)Stensrud, Laurendeau, and Sarvet]{stensrud2024optimal}
Mats~J Stensrud, JD~Laurendeau, and Aaron~L Sarvet.
\newblock Optimal regimes for algorithm-assisted human decision-making.
\newblock \emph{Biometrika}, page asae016, 2024.

\bibitem[Stock(1989)]{stock1989nonparametric}
James~H Stock.
\newblock Nonparametric policy analysis.
\newblock \emph{Journal of the American Statistical Association}, 84\penalty0 (406):\penalty0 567--575, 1989.

\bibitem[Taubman et~al.(2009)Taubman, Robins, Mittleman, and Hern{\'a}n]{taubman2009intervening}
Sarah~L Taubman, James~M Robins, Murray~A Mittleman, and Miguel~A Hern{\'a}n.
\newblock Intervening on risk factors for coronary heart disease: an application of the parametric g-formula.
\newblock \emph{International journal of epidemiology}, 38\penalty0 (6):\penalty0 1599--1611, 2009.

\bibitem[Tchetgen~Tchetgen and VanderWeele(2012)]{tchetgen2012causal}
Eric~J Tchetgen~Tchetgen and Tyler~J VanderWeele.
\newblock On causal inference in the presence of interference.
\newblock \emph{Statistical methods in medical research}, 21\penalty0 (1):\penalty0 55--75, 2012.

\bibitem[Therneau and Atkinson(2023)]{therneau2023rpart}
Terry Therneau and Beth Atkinson.
\newblock \emph{rpart: Recursive Partitioning and Regression Trees}, 2023.
\newblock URL \url{https://CRAN.R-project.org/package=rpart}.
\newblock R package version 4.1.23.

\bibitem[Tsiatis(2006)]{tsiatis2006semiparametric}
Anastasios~A Tsiatis.
\newblock \emph{Semiparametric theory and missing data}, volume~4.
\newblock Springer, 2006.

\bibitem[van~der Laan and Petersen(2007)]{van2007causal}
Mark~J van~der Laan and Maya~L Petersen.
\newblock Causal effect models for realistic individualized treatment and intention to treat rules.
\newblock \emph{The international journal of biostatistics}, 3\penalty0 (1), 2007.

\bibitem[van~der Vaart(2000)]{van2000asymptotic}
Aad~W van~der Vaart.
\newblock \emph{Asymptotic statistics}, volume~3.
\newblock Cambridge University Press, 2000.

\bibitem[van~der Vaart and Wellner(1996)]{van1996weak}
Aad~W van~der Vaart and Jon~A Wellner.
\newblock \emph{Weak convergence and empirical processes}.
\newblock Springer, 1996.

\bibitem[Vansteelandt et~al.(2012)Vansteelandt, Bekaert, and Claeskens]{vansteelandt2012model}
Stijn Vansteelandt, Maarten Bekaert, and Gerda Claeskens.
\newblock On model selection and model misspecification in causal inference.
\newblock \emph{Statistical methods in medical research}, 21\penalty0 (1):\penalty0 7--30, 2012.

\bibitem[Vella and Verbeek(1998)]{vella1998whose}
Francis Vella and Marno Verbeek.
\newblock Whose wages do unions raise? a dynamic model of unionism and wage rate determination for young men.
\newblock \emph{Journal of Applied Econometrics}, 13\penalty0 (2):\penalty0 163--183, 1998.

\bibitem[Viviano and Bradic(2021)]{viviano2021dynamic}
Davide Viviano and Jelena Bradic.
\newblock Dynamic covariate balancing: estimating treatment effects over time.
\newblock \emph{arXiv preprint arXiv:2103.01280}, 2021.

\bibitem[von Mises(1947)]{von1947asymptotic}
Richard von Mises.
\newblock On the asymptotic distribution of differentiable statistical functions.
\newblock \emph{The annals of mathematical statistics}, 18\penalty0 (3):\penalty0 309--348, 1947.

\bibitem[Wen et~al.(2023)Wen, Marcus, and Young]{wen2023intervention}
Lan Wen, Julia~L Marcus, and Jessica~G Young.
\newblock Intervention treatment distributions that depend on the observed treatment process and model double robustness in causal survival analysis.
\newblock \emph{Statistical methods in medical research}, 32\penalty0 (3):\penalty0 509--523, 2023.

\bibitem[Williams and Díaz(2023)]{williams2023lmtp}
Nicholas Williams and Iván Díaz.
\newblock lmtp: An r package for estimating the causal effects of modified treatment policies.
\newblock \emph{Observational Studies}, 2023.
\newblock URL \url{https://muse.jhu.edu/article/883479}.

\bibitem[Wright and Ziegler(2017)]{wright2017fast}
Marvin~N. Wright and Andreas Ziegler.
\newblock {ranger}: A fast implementation of random forests for high dimensional data in {C++} and {R}.
\newblock \emph{Journal of Statistical Software}, 77\penalty0 (1):\penalty0 1--17, 2017.
\newblock \doi{10.18637/jss.v077.i01}.

\bibitem[Yang and Ding(2018)]{yang2018asymptotic}
Shu Yang and Peng Ding.
\newblock Asymptotic inference of causal effects with observational studies trimmed by the estimated propensity scores.
\newblock \emph{Biometrika}, 105\penalty0 (2):\penalty0 487--493, 2018.

\bibitem[Young et~al.(2014)Young, Hern{\'a}n, and Robins]{young2014identification}
Jessica~G Young, Miguel~A Hern{\'a}n, and James~M Robins.
\newblock Identification, estimation and approximation of risk under interventions that depend on the natural value of treatment using observational data.
\newblock \emph{Epidemiologic methods}, 3\penalty0 (1):\penalty0 1--19, 2014.

\bibitem[Zeng et~al.(2023)Zeng, Li, and Hu]{zeng2023propensity}
Shuxi Zeng, Fan Li, and Liangyuan Hu.
\newblock Propensity score weighting analysis of survival outcomes using pseudo-observations.
\newblock \emph{Statistica Sinica}, 33\penalty0 (3):\penalty0 2161--2184, 2023.

\bibitem[Zheng and van~der Laan(2010)]{zheng2010asymptotic}
Wenjing Zheng and Mark~J van~der Laan.
\newblock Asymptotic theory for cross-validated targeted maximum likelihood estimation.
\newblock \emph{U.C. Berkeley Division of Biostatistics Working Paper Series}, 2010.

\bibitem[Zhou and Opacic(2022)]{zhou2022marginal}
Xiang Zhou and Aleksei Opacic.
\newblock Marginal interventional effects.
\newblock \emph{arXiv preprint arXiv:2206.10717}, 2022.

\end{thebibliography}

\newpage
\appendix

\newgeometry{margin=1in} 
\section*{Appendix}

This appendix does the following:
\begin{itemize}
    \item[Appendix~\ref{app:trt_prob}] provides efficient estimators for the average treatment value.
    \item[Appendix~\ref{app:simulations}] provides simulation results reinforcing our theoretical results in the main paper.
    \item[Appendix~\ref{app:t=1}] provides proofs for the results in Section~\ref{sec:t=1}.
    \item[Appendix~\ref{app:ltt}] provides proofs for the results in Section~\ref{sec:ltt}.
    \item[Appendix~\ref{app:est}] provides proofs for the results in Section~\ref{sec:efficiency}. 
    \item[Appendix~\ref{app:additional}] provides proofs for the results in Appendix~\ref{app:trt_prob}.
\end{itemize}

\section{Estimating the average treatment value} \label{app:trt_prob}

For longitudinal interventional flip interventions defined in \eqref{eq:long-int-flip-effect}, we must also estimate $\bbE\{ D_{f_t}(a_t) \}$ and $\bbE\{ D_{f_t}(a_t^\prime) \}$. These estimands are identified in Theorem~\ref{thm:id-flip}. In this section, we focus on estimating $\bbE\{ D_{f_T}(a_T) \}$ when $\{D_{f_1}(a_1), \dots, D_{f_T}(a_T) \}$ are flip interventions with smooth weights, as in Section~\ref{sec:efficiency}. Adapting the estimator for $t < T$ is straightforward. Using the notation from the main paper, the identified estimand is
\begin{equation} \label{eq:trt_est}
    \psi_D = \bbE \left[ Q_T(A_T =1 \mid H_T) \left\{ \prod_{t=1}^{T-1} r_t(A_t \mid H_t) \right\}\right]
\end{equation}
where $r_t(A_t \mid H_t) = \frac{Q_t(A_t \mid H_t)}{\bbP(A_t \mid H_t)}$, 
\[
Q_t(A_t = b_t \mid H_t) = \bbP(A_T = b_T \mid H_T) + \left\{ 2 \one(b_t = a_t) - 1 \right\} s_t \{ \bbP(A_t = a_t \mid H_t) \} \{ 1 - \bbP(A_t = a_t \mid H_t) \},
\]
and $s_t$ is a smooth and twice differentiable function of the target propensity score with bounded derivatives. 

\bigskip

Next, we define the sequential regression as $m_T(0, H_T) = 0$ and $m_T(1, H_T) = 1$ and then recursively define
\[
m_t(A_t, H_t) = \bbE \left\{ \sum_{b_{t+1}} m_{t+1}(b_{t+1}, H_{t+1}) Q_{t+1}(b_{t+1} \mid H_{t+1}) \mid A_t, H_t \right\}.
\]
for $t < T$, where $m_0 = \psi_D$. Then, we can derive the efficient influence function.
\begin{proposition} \label{prop:trt_prob_eif}
    Let $\psi_D$ be as in \eqref{eq:trt_est}. Under the setup of Proposition~\ref{prop:eif} the  efficient influence function of $\psi_D$ under a nonparametric model is
    \begin{align*}
        \varphi_D(Z) = \sum_{t=1}^{T} \left\{ \prod_{s=1}^{t-1} r_s(A_s \mid H_s) \right\} \left[ \sum_{b_t} m_t(b_t, H_t) \left\{ Q_t(b_t \mid H_t) + \phi_t(b_t; A_t, H_t) \right\} - m_{t-1}(A_{t-1}, H_{t-1}) \right].
    \end{align*}
\end{proposition}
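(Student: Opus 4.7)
The plan is to reduce the problem to Proposition~\ref{prop:eif} by observing that
\begin{equation*}
\psi_D = \bbE\Bigg[\one(A_T=1) \prod_{t=1}^T r_t(A_t \mid H_t)\Bigg],
\end{equation*}
which follows from the identity $\bbE\{\one(A_T=1)\, r_T(A_T \mid H_T) \mid H_T\} = Q_T(1 \mid H_T)$ together with the tower property. This rewrites $\psi_D$ as exactly the functional of Proposition~\ref{prop:eif} with the surrogate outcome $Y = \one(A_T=1)$. I would then verify that the sequential regressions defined in the appendix, with boundary values $m_T(0, H_T) = 0$ and $m_T(1, H_T) = 1$, coincide with Proposition~\ref{prop:eif}'s construction under this surrogate, since $\bbE\{\one(A_T=1) \mid A_T = b_T, H_T\} = \one(b_T=1)$, and the recursion for $t < T$ is identical in both formulations.

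Applying Proposition~\ref{prop:eif} directly, I would next specialize its EIF. The key observation is that the $t=T$ summand of $\varphi_m$ equals $\{\prod_{s=0}^{T} r_s(A_s \mid H_s)\}\{\one(A_T=1) - m_T(A_T, H_T)\}$, which vanishes identically because $m_T(A_T, H_T) = \one(A_T=1)$. I would then reindex the surviving $t = 0, \dots, T-1$ summands as $t = 1, \dots, T$ and, using the conventions $r_0 = 1$ and $m_0 = \psi_D$, obtain
\begin{equation*}
\varphi_m(Z) = \sum_{t=1}^T \Bigg\{\prod_{s=1}^{t-1} r_s(A_s \mid H_s)\Bigg\}\Bigg\{\sum_{b_t} m_t(b_t, H_t) Q_t(b_t \mid H_t) - m_{t-1}(A_{t-1}, H_{t-1})\Bigg\}.
\end{equation*}
Combining this with the unchanged $\varphi_Q$ under a single outer sum over $t$ and inner sum over $b_t$ produces exactly the claimed form for $\varphi_D$.

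The main obstacle I anticipate is justifying that Proposition~\ref{prop:eif}'s EIF derivation transfers when the ``outcome'' is the deterministic function $\one(A_T=1)$ rather than an exogenous random variable. To address this, I would note that the derivation proceeds by taking pathwise derivatives of the g-formula identification along parametric submodels of the conditional densities $p(a_t \mid h_t)$ and $p(x_t \mid h_{t-1}, a_{t-1})$; no submodel of a separate outcome density is required, so replacing $Y$ by $\one(A_T=1)$ introduces no new score contributions---consistent with the observation that the $t=T$ term of $\varphi_m$ drops out. The bounded-variance hypothesis of Proposition~\ref{prop:eif} is trivial for an indicator, and the boundedness of $r_t(A_t \mid H_t)$ carries over by assumption, so all conditions needed for nonparametric efficiency transfer directly, yielding $\varphi_D$ as the EIF of $\psi_D$.
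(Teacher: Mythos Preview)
Your reduction is correct and is a genuinely different route from the paper's proof. The paper does not invoke Proposition~\ref{prop:eif} as a black box; instead it re-runs the same von Mises expansion argument directly for $\psi_D$, showing by hand that $\bbE\{\widehat\varphi_D(Z)+\widehat m_0-m_0\}$ is a second-order product of nuisance errors, with algebra mirroring Lemma~\ref{lem:seq-property}. Your approach is more economical: by rewriting $\psi_D=\bbE\big[\one(A_T=1)\prod_{t=1}^T r_t(A_t\mid H_t)\big]$ and recognizing this as the identified functional of Proposition~\ref{prop:eif} with surrogate outcome $Y=\one(A_T=1)$, you avoid repeating the expansion and get the EIF by algebraic specialization (the $t=T$ summand of $\varphi_m$ collapsing and a reindex). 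The trade-off is that the paper's direct computation requires no justification for transferring a result proved under a data structure that formally includes an outcome variable, whereas your approach needs the caveat you give in the last paragraph. That caveat is adequate: since $\one(A_T=1)$ is a deterministic function of $A_T$, the conditional law of $Y$ given $(A_T,H_T)$ is degenerate and contributes no score, so the tangent space and the von Mises remainder calculation in the proof of Proposition~\ref{prop:eif} reduce exactly to those for the model on $(X_1,A_1,\ldots,X_T,A_T)$, and $\varphi_D$---which depends only on these variables---lies in that tangent space.
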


\begin{proof}
    The proof is deferred to Appendix~\ref{app:additional}.
\end{proof}

One can construct a multiply robust-style estimator based on this efficient influence function. It is a straightforward adaptation of Algorithm~\ref{alg:mult-dr-est} from the main paper.  As in the main paper, we'll focus on sample splitting and cross-fitting with two splits, and without cycling the folds.

\begin{algorithm}[Multiply robust-style estimator] \label{alg:mult-dr-est-trt-prob} Assume training and evaluation dataset of $n$ observations.
    \begin{enumerate}
        \item For all timepoints, regress $A_t$ on $H_t$ in the training data and obtain propensity score models; using these models, compute the intervention propensity scores $\widehat Q(A_t \mid H_t)$, ratios $\widehat r_{t}(A_t \mid H_t)$, and efficient influence functions $\widehat \phi_{t}(b_t; A_t, H_t)$ for all samples and timepoints.
        \item For $t=T-1$ to $t=1$: 
        \begin{enumerate}
            \item 
            \begin{enumerate}
                \item If $t = T-1$, then $\widehat P_{T}(H_{T}) = \widehat Q_T(A_T = 1 \mid H_T)$. Otherwise, pseudo-outcome $\widehat P_{t+1}(H_{t+1})$ is available from the previous step in loop (see step 3 below).
                \item In the training data, regress $\widehat P_{t+1}(H_{t+1})$ against $A_t$ and $H_t$ to obtain a sequential regression model; using this model, across the full data, obtain estimates $\widehat m_{t}(0, H_t), \widehat m_{t}(1, H_t)$.
            \end{enumerate}
        \item Across the full data, compute pseudo-outcomes \\ $\widehat P_t(H_t) = \widehat m_t(0, H_t) \widehat Q_t(0 \mid H_t) + \widehat m_t(1, H_t) \widehat Q_t(1 \mid H_t)$ to use in the next step.
        \end{enumerate} 
        \item In the evaluation data only, compute the plug-in estimator $\widehat m_0 = \bbP_n \{ \widehat P_1(X_1) \}$ and efficient influence function values by plugging nuisance estimates in to $\varphi_D(Z)$ in Proposition~\ref{prop:trt_prob_eif}.
    \end{enumerate}
    Finally, output the point estimate and variance estimate 
    \[
    \widehat \psi_D := \widehat m_0 + \bbP_n \{ \widehat \varphi_D(Z) \} \text{ and } \widehat \sigma_D^2 := \bbP_n \big\{ \widehat \varphi_D(Z)^2 \big\}.
    \]
\end{algorithm}
\noindent We have the following bound on the bias of the estimator from Algorithm~\ref{alg:mult-dr-est-trt-prob}.
\begin{theorem} \label{thm:trt-bias}
    Under the setup of Theorem~\ref{thm:flip-bias}, let $\psi_D$ be as in \eqref{eq:trt_est} and $\widehat \psi_D$ denote the estimator from Algorithm~\ref{alg:mult-dr-est-trt-prob}. Moreover, let $\widetilde m_t(A_t, H_t) = \bbE \left\{ \sum_{b_{t+1}} \widehat m_{t+1}(b_{t+1}, H_{t+1}) \widehat Q_{t+1}(b_{t+1} \mid H_{t+1}) \mid A_t, H_t \right\}$. Then,
    \begin{align*}
        \left| \bbE \left( \widehat \psi_D - \psi_D \right) \right| \lesssim \min &\bigg\{ \sum_{t=1}^{T} \| \widehat m_t - m_t \| \| \widehat \pi_t - \pi_t \| + \| \widehat \pi_t - \pi_t \|^2, \\
        &\sum_{t=1}^{T} \| \widehat m_t - \widetilde m_t \| \left( \sum_{s=1}^{t} \| \widehat \pi_s - \pi_s \| \right) + \| \widehat \pi_t - \pi_t \| \left( \sum_{s=1}^{t} \| \widehat \pi_s - \pi_s \| \right).
    \end{align*}
\end{theorem}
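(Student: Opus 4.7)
The approach is to follow the same von Mises expansion and telescoping argument used for Theorem~\ref{thm:flip-bias}, now treating $Q_T(A_T = 1 \mid H_T)$ as the ``outcome'' that replaces $Y$. The identification in \eqref{eq:trt_est} has the same structure as that for $\psi$ except that $Y$ is replaced by $Q_T(A_T=1\mid H_T)$ and there is no terminal outcome regression ($\widehat m_T$ is set to $\widehat Q_T$ directly in the algorithm). The new subtlety is that this ``outcome'' itself depends on the propensity score at time $T$, so the bias analysis at the final timepoint must combine the flip-weight-induced propensity error with the sequential-regression error in a single Taylor expansion.

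Conditional on the training-stage nuisance estimates, I would first write
\[
\bbE\!\left(\widehat\psi_D - \psi_D\right) = \widehat m_0 + \bbE\{\widehat\varphi_D(Z)\} - \psi_D,
\]
and expand using the efficient influence function from Proposition~\ref{prop:trt_prob_eif}. Iteratively conditioning on $H_t$ (to exploit the fact that $\widehat\phi_t$ is mean-zero against the truth up to a second-order Taylor remainder in the smooth $s_t$) and on $(A_{t-1}, H_{t-1})$, the expression telescopes into a sum over $t \le T$ of local bias contributions. Each such contribution factors as a product of: (i) the debiased propensity error $\widehat Q_t + \widehat\phi_t - Q_t$, which by a second-order Taylor expansion of $s_t$ is $O(\|\widehat\pi_t-\pi_t\|^2)$; (ii) a sequential-regression error $\widehat m_t - m_t$ or $\widehat m_t - \widetilde m_t$; and (iii) importance-weight ratios $\widehat r_s / r_s$ for $s<t$. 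The boundedness of $\{m_t,\widehat m_t\}$, of $s_t$ and its derivative, and of $\widehat r_s$ then justifies applying Cauchy--Schwarz to collect these into $L_2(\bbP)$ norms.

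The two bounds inside the minimum arise from two distinct ways of pushing errors through the telescoping sum. The \emph{forward} decomposition, following \citet[Theorem~3]{diaz2023nonparametric}, expresses the local error at time $t$ against the true $m_t$, absorbing all future-timepoint errors into $\|\widehat m_t - m_t\|$, and yields the first summand $\sum_t \|\widehat m_t - m_t\|\,\|\widehat\pi_t - \pi_t\| + \|\widehat\pi_t - \pi_t\|^2$. The \emph{backward} decomposition, which is new here and parallels the proof of Theorem~\ref{thm:flip-bias}, uses $\widetilde m_t$ to isolate the sequential-regression error at exactly $t$ and instead passes propensity-score errors backward through the weight ratios, producing the inner sum $\sum_{s\le t}\|\widehat\pi_s - \pi_s\|$ and the second summand. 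Since the bias is a single deterministic quantity, both bounds hold simultaneously, so their minimum does too.

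The main obstacle is the final-timepoint contribution: $Q_T(A_T=1\mid H_T)$ appears both as the ``outcome'' and inside the flip ratios at $t \le T-1$, so $\widehat\pi_T - \pi_T$ enters through multiple channels. I would handle this by performing a single second-order Taylor expansion of $s_T$ that groups the two error channels together, so that the resulting $\|\widehat\pi_T - \pi_T\|^2$ remainder slots cleanly into the telescoping sum rather than introducing a distinct extraneous term. Given this, the rest of the argument inherits almost verbatim from the proof of Theorem~\ref{thm:flip-bias}, with the role of $Y$ played by the degenerate pseudo-outcome terminated at $Q_T(A_T=1\mid H_T)$.
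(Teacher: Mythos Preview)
Your proposal is correct and matches the paper's approach, which simply states that the proof follows by the same analysis as Theorem~\ref{thm:flip-bias} together with the algebra in the proof of Proposition~\ref{prop:trt_prob_eif}. One small imprecision: $Q_T$ does \emph{not} appear in the flip ratios at $t \le T-1$ (those involve only $\pi_s$ for $s \le T-1$); the only new wrinkle at the final timepoint is that the ``outcome'' $\widehat Q_T(1\mid H_T)$ is itself estimated, and this is handled directly by Lemma~\ref{lem:phi-second-order}, which shows $\widehat Q_T + \widehat\phi_T - Q_T$ is second-order in $\|\widehat\pi_T - \pi_T\|$.
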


\begin{proof}
    This follows by the same analysis as for the efficient estimator of $\bbE \{ Y(\overline D_T) \}$ and using the algebra in the proof of Proposition~\ref{prop:trt_prob_eif}.  We omit it for brevity.
\end{proof}
\noindent Then, we have the following convergence guarantee.
\begin{corollary}
    Under the setup of Theorem~\ref{thm:trt-bias}, let $\widehat \sigma_D^2$ be a variance estimate from Algorithm~\ref{alg:mult-dr-est-trt-prob}.  Suppose $\| \widehat \varphi_D - \varphi_D \| = o_\bbP(1)$ and 
    \begin{align*}
        \min &\bigg\{ \sum_{t=1}^{T} \| \widehat m_t - m_t \| \| \widehat \pi_t - \pi_t \| + \| \widehat \pi_t - \pi_t \|^2, \\
        &\sum_{t=1}^{T} \| \widehat m_t - \widetilde m_t \| \left( \sum_{s=1}^{t} \| \widehat \pi_s - \pi_s \| \right) + \| \widehat \pi_t - \pi_t \| \left( \sum_{s=1}^{t} \| \widehat \pi_s - \pi_s \| \right) = o_\bbP(n^{-1/2}).
    \end{align*}
    Then,
    \[
    \sqrt{\frac{n}{\widehat \sigma_D^2}} (\widehat \psi_D - \psi_D) \indist N(0,1).
    \]
\end{corollary}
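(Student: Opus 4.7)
The plan is to follow the standard one-step / von Mises decomposition argument adapted to the cross-fitting setup used in Algorithm~\ref{alg:mult-dr-est-trt-prob}. Writing $\widehat{\psi}_D = \widehat{m}_0 + \bbP_n\{\widehat{\varphi}_D(Z)\}$ for the point estimate computed on the evaluation fold and recalling that $\psi_D = \bbE\{m_0\} = \bbE\{P_1(H_1)\}$, I would decompose
\begin{equation*}
\widehat{\psi}_D - \psi_D \;=\; (\bbP_n - \bbP)\{\varphi_D(Z)\} \;+\; (\bbP_n - \bbP)\{\widehat{\varphi}_D(Z) - \varphi_D(Z)\} \;+\; R_n,
\end{equation*}
where $R_n = \bbE\{\widehat{\psi}_D - \psi_D \mid \mathcal{D}_{\text{train}}\}$ is exactly the first-order bias term controlled by Theorem~\ref{thm:trt-bias}, so $R_n = o_\bbP(n^{-1/2})$ by the hypothesis of the corollary.

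First I would handle the empirical process (drift) term $(\bbP_n - \bbP)\{\widehat{\varphi}_D - \varphi_D\}$. Because the nuisance estimates are trained on an independent fold, conditional on the training data the function $\widehat{\varphi}_D$ is fixed, so Chebyshev's inequality (or Lemma~2 of \citet{kennedy2020sharp}-type results) gives
\begin{equation*}
\bigl|(\bbP_n - \bbP)\{\widehat{\varphi}_D - \varphi_D\}\bigr| \;=\; O_\bbP\!\left(n^{-1/2}\,\|\widehat{\varphi}_D - \varphi_D\|\right) \;=\; o_\bbP(n^{-1/2}),
\end{equation*}
using the assumption $\|\widehat{\varphi}_D - \varphi_D\| = o_\bbP(1)$ together with the fact that $\varphi_D$ has bounded variance (inherited from the boundedness of $r_t$ and smoothness of $s_t$, as in Proposition~\ref{prop:eif}). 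The leading term $(\bbP_n - \bbP)\{\varphi_D\}$ then satisfies $\sqrt{n}\,(\bbP_n - \bbP)\{\varphi_D\} \indist N(0, \sigma_D^2)$ by the classical central limit theorem, where $\sigma_D^2 = \bbV\{\varphi_D(Z)\}$.

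Next I would establish consistency of the variance estimator: $\widehat{\sigma}_D^2 = \bbP_n\{\widehat{\varphi}_D^2\} \inprob \sigma_D^2$. This follows from the decomposition
\begin{equation*}
\bbP_n\{\widehat{\varphi}_D^2\} - \bbE\{\varphi_D^2\} \;=\; (\bbP_n - \bbP)\{\varphi_D^2\} \;+\; \bbP\{\widehat{\varphi}_D^2 - \varphi_D^2\} \;+\; (\bbP_n - \bbP)\{\widehat{\varphi}_D^2 - \varphi_D^2\},
\end{equation*}
where the first term is $o_\bbP(1)$ by the weak law, the second is bounded by $\|\widehat{\varphi}_D - \varphi_D\|\cdot \|\widehat{\varphi}_D + \varphi_D\| = o_\bbP(1)$ using Cauchy--Schwarz and the boundedness of both functions, and the third is again $o_\bbP(1)$ by cross-fitting. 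Combining the three ingredients via Slutsky's theorem yields
\begin{equation*}
\sqrt{\frac{n}{\widehat{\sigma}_D^2}}\,(\widehat{\psi}_D - \psi_D) \;\indist\; N(0,1).
\end{equation*}
The main obstacle, though essentially routine, is the careful bookkeeping of the bias remainder: one must verify that the bias arising from plugging estimated nuisances into the EIF-based one-step estimator matches (up to constants) the product-error bound of Theorem~\ref{thm:trt-bias}, so that the $o_\bbP(n^{-1/2})$ hypothesis genuinely controls $R_n$. Since Theorem~\ref{thm:trt-bias} is stated as a conditional-bias bound with the same structure as for $\bbE\{Y(\overline D_T)\}$, this step reduces to invoking that theorem directly.
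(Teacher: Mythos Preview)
Your proposal is correct and follows essentially the same approach as the paper: the paper's proof of the analogous Corollary~\ref{cor:s-flip-convergence} uses the identical three-term decomposition $(\bbP_n-\bbP)\{\varphi\} + (\bbP_n-\bbP)\{\widehat\varphi-\varphi\} + \bbE(\widehat\psi-\psi)$, controls the drift term via Chebyshev/\citet[Lemma~2]{kennedy2020sharp}, invokes the bias theorem for the remainder, and concludes by CLT plus Slutsky after noting $\widehat\sigma^2\inprob\bbV\{\varphi\}$ from $\|\widehat\varphi-\varphi\|=o_\bbP(1)$. Your variance-consistency argument is slightly more explicit than the paper's one-line remark, but the substance is the same.
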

As in the main paper with the following result, we motivate the sequentially doubly robust-style estimator with a result on the debiased pseudo-outcome.
\begin{lemma} \label{lem:seq-property-trtment-prob}
    Under the setup of Proposition~\ref{prop:trt_prob_eif}, define $P_T^\ast(Z) = Q_T(A_t = 1 \mid H_T) + \phi_T(1; A_T, H_T)$ and recursively define for $t=T-1$ to $t=1$ \footnotesize
    \begin{align*}
        &P_t^\ast(Z) = \sum_{b_t} m_t (b_t, H_t) \Big\{ Q_t(b_t \mid H_t) + \phi_t(b_t; A_t, H_t) \Big\} \\
        &+ \sum_{s=t}^{T-1} \left\{ \prod_{k=t}^s r_k(A_k \mid H_k) \right\} \left\{ \sum_{b_{s+1}} m_{s+1} (b_{s+1}, H_{s+1}) \Big\{ Q_{s+1}(b_{s+1} \mid H_{s+1}) + \phi_{s+1}(b_{s+1}; A_{s+1}, H_{s+1}) \Big\} - m_s(A_s, H_s) \right\}.
    \end{align*}
    \normalsize Then,
    \begin{equation} \label{eq:pseudo-unbiased-2}
        \bbE \left\{ P_{t+1}^\ast(Z) \mid A_t, H_t \right\} = m_t(A_t, H_t).
    \end{equation}
    Moreover, suppose access to fixed nuisance estimates $\left\{ \widehat m_s^\ast, \widehat Q_s \right\}_{s=t+1}^{T}$ to construct $\widehat P_{t+1}^\ast(Z)$. Then, 
    \begin{align}
        &\bbE \left\{ \widehat P_{t+1}^\ast (Z) - m_t(A_t, H_t) \mid A_t, H_t \right\} = \nonumber \\
        &\hspace{0.2in}\sum_{s=t+1}^{T} \bbE\left[ \left\{ \prod_{k=t+1}^{s-1} \widehat r_k(A_k \mid H_k) \right\} \Big\{ m_s(A_s, H_s) - \widehat m_s^\ast(A_s, H_s) \Big\} \Big\{ \widehat r_s(A_s \mid H_s) - r_s (A_s \mid H_s) \Big\} \mid A_t, H_t \right] \nonumber \\
        &+ \sum_{s=t+1}^{T} \bbE \left[ \left\{ \prod_{k=t+1}^{s-1} \widehat r_k(A_k \mid H_k) \right\} \sum_{b_s} \widehat m_s^\ast(A_s, H_s) \left\{ \widehat Q_s(b_s \mid H_s) + \widehat \phi_s(b_s; A_s, H_s) - Q_s(b_s \mid H_s) \right\} \mid A_t, H_t \right]. \label{eq:pseudo-dr-2}
    \end{align}
\end{lemma}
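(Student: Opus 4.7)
The plan is to prove Lemma~\ref{lem:seq-property-trtment-prob} by mirroring the proof of Lemma~\ref{lem:seq-property} from the main text, with the essential structural differences being that the ``outcome'' at the final timepoint is replaced by $Q_T(A_T = 1 \mid H_T) + \phi_T(1; A_T, H_T)$ (the uncentered efficient influence function for $\bbE\{Q_T(A_T = 1 \mid H_T)\}$), and that the boundary conditions $m_T(0, H_T) = 0$, $m_T(1, H_T) = 1$ force the identity $\sum_{b_T} m_T(b_T, H_T) \{ Q_T(b_T \mid H_T) + \phi_T(b_T; A_T, H_T) \} = Q_T(1 \mid H_T) + \phi_T(1; A_T, H_T)$. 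The summation range shrinks from $T$ to $T-1$ throughout.

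For \eqref{eq:pseudo-unbiased-2}, I would proceed by reverse induction on $t$. In the base case $t = T-1$, using that $\bbE\{\phi_T(1; A_T, H_T) \mid H_T\} = 0$ along with the boundary conditions on $m_T$, one obtains $\bbE\{P_T^\ast(Z) \mid A_{T-1}, H_{T-1}\} = \bbE\{Q_T(A_T = 1 \mid H_T) \mid A_{T-1}, H_{T-1}\} = m_{T-1}(A_{T-1}, H_{T-1})$ by the recursive definition of $m_{T-1}$. For the inductive step, I would expand $P_{t+1}^\ast(Z)$ and take $\bbE\{\cdot \mid A_t, H_t\}$: the leading term $\sum_{b_{t+1}} m_{t+1}(b_{t+1}, H_{t+1})\{Q_{t+1} + \phi_{t+1}\}$ collapses to $m_t(A_t, H_t)$ via the recursion for $m_t$ and the mean-zero property of $\phi_{t+1}$ given $H_{t+1}$, while each correction term indexed by $s \geq t+1$ has conditional mean zero given $(A_t, H_t)$ via iterated expectations taken down to the $(A_s, H_s)$ level using the same two ingredients.

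For the DR-style error decomposition in \eqref{eq:pseudo-dr-2}, I would expand $\widehat{P}_{t+1}^\ast(Z) - m_t(A_t, H_t)$ by inserting true nuisances through add-and-subtract manipulations and then telescope the resulting differences across $s = t+1, \ldots, T$. The central algebraic move is to add and subtract $\widehat{m}_s^\ast(A_s, H_s)$ inside each summand so that the remainder splits into two pieces: a piece involving $\{m_s - \widehat{m}_s^\ast\}\{\widehat{r}_s - r_s\}$ (which would appear even if $Q_s$ were known), and a piece capturing the extra bias from estimating the intervention propensity, which after regrouping the $\widehat{\phi}_s$ corrections takes the clean form $\widehat{m}_s^\ast \cdot \{\widehat{Q}_s + \widehat{\phi}_s - Q_s\}$. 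Taking iterated conditional expectations level by level and recombining terms yields the two sums displayed in \eqref{eq:pseudo-dr-2}.

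The hard part will be the bookkeeping required to show that the $\widehat{\phi}_s$ one-step corrections combine with the $\widehat{Q}_s$ terms to produce the clean factor $\widehat{Q}_s + \widehat{\phi}_s - Q_s$, rather than leaving residual $\widehat{\phi}_s$ contributions that do not vanish. This is the distinguishing feature of stochastic LMTP analysis relative to its deterministic counterparts in \citet{luedtke2017sequential} and \citet{rotnitzky2017multiply}: debiasing the pseudo-outcome requires correcting both the sequential regression \emph{and} the intervention propensity. Once this rearrangement is set up as in the proof of Lemma~\ref{lem:seq-property}, the remainder of the argument transfers essentially verbatim, with the only substantive modifications being the truncation at $s = T-1$ and the modified terminal definition, so no fundamentally new ideas beyond careful indexing are required.
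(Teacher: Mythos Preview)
Your proposal is correct and follows essentially the same approach as the paper: the paper's proof simply observes that unbiasedness follows by iterated expectations and that the DR decomposition follows by repeating the algebra from the proof of Proposition~\ref{prop:trt_prob_eif} (itself modeled on Lemma~\ref{lem:seq-property}) with the product over $\widehat r$ starting at $s=t+1$ rather than $s=1$. Your more detailed description of the telescoping induction and the add-and-subtract manipulations matches exactly what that algebra entails, including the correct identification of the modified terminal condition and the truncated summation range.
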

\begin{proof}
    See Appendix~\ref{app:additional}.
\end{proof}
\noindent This result suggests the simple amendment to the estimator from Algorithm~\ref{alg:mult-dr-est-trt-prob}.
\begin{algorithm}[Sequentially doubly robust-style estimator] \label{alg:seq-dr-est-trt-prob}
    Use Algorithm~\ref{alg:mult-dr-est-trt-prob} with the following amendments to the sequential regression loop:
    \begin{itemize}[itemsep=0.03in]
        \item In step 2(a)(i), let $\widehat P_{T}^\ast(Z) = \widehat Q_T(A_T =1 \mid H_T) + \widehat \phi_T(1; A_T, H_T)$.
        \item In step 2(a)(ii), in the training data regress $\widehat P_{t+1}^\ast(Z)$ against $A_t$ and $H_t$ and, in the evaluation data, obtain estimates $\widehat m_{t}^\ast (0, H_t), \widehat m_{t}^\ast (1, H_t)$.
        \item In step 3, when constructing pseudo-outcomes, use the transformation $\widehat P_t^\ast(Z)$ which uses available nuisance estimates $\left\{ \widehat m_s^\ast, \widehat Q_s \right\}_{s=t+1}^{T}$.
    \end{itemize}
    Finally, construct a point estimate and variance estimate as
    \[
    \widehat \psi_D^\ast = \bbP_n \{ \widehat P_1^\ast(Z) \} \text{ and } \widehat \sigma_D^2 = \bbP_n \Big[ \big\{ \widehat P_1^\ast(Z) - \widehat \psi_D^\ast \big\}^2 \Big].
    \]
\end{algorithm}
\noindent Finally, we have the following bias bound and convergence guarantee.
\begin{theorem} \label{thm:seq-dr-bias-trt-prob}
    Under the setup of Theorem~\ref{thm:trt-bias}, let $\widehat \psi_D^\ast$ denote a point estimate from Algorithm~\ref{alg:seq-dr-est-trt-prob} and let $\widetilde m_t^\ast (A_t, H_t) = \bbE \left\{ \widehat P_{t+1}^\ast(Z) \mid A_t, H_t \right\}$. Then,
    \begin{align*}
        \left| \bbE \left( \widehat \psi_D^\ast - \psi_D \right) \right| \lesssim \sum_{t=1}^{T} \| \widehat \pi_t - \pi_t \| \Big( \| \widehat m_t^\ast - \widetilde m_t^\ast \| + \| \widehat \pi_t - \pi_t \| \Big).
    \end{align*}
\end{theorem}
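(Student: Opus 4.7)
The plan is to mirror the analysis of Theorem~\ref{thm:seq-dr-bias} from the main paper, with the one substantive change that the ``terminal outcome'' is now $Q_T(A_T = 1 \mid H_T)$---itself a smooth functional of $\pi_T$---rather than $Y$. By the sample-splitting construction of Algorithm~\ref{alg:seq-dr-est-trt-prob}, the nuisance estimators are independent of the evaluation data, so that $\bbE(\widehat\psi_D^\ast - \psi_D) = \bbE\{\widehat P_1^\ast(Z)\} - m_0$ where the outer expectation treats the estimators as fixed. The whole argument reduces to bounding this single bias quantity.

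Next, I would proceed by telescoping through the timepoints using Lemma~\ref{lem:seq-property-trtment-prob}. Applying the tower rule iteratively along $(H_1, A_1, \ldots, H_T, A_T)$, at each stage I substitute $\bbE\{\widehat P_{t+1}^\ast \mid A_t, H_t\} = \widetilde m_t^\ast(A_t, H_t)$ and add and subtract $\widehat m_t^\ast(A_t, H_t)$. Terms of the form $\widehat m_t^\ast - \widetilde m_t^\ast$ get pulled outside via Cauchy--Schwarz, while terms of the form $\widetilde m_t^\ast - m_t$ get re-expressed through \eqref{eq:pseudo-dr-2} as sums of products of errors in $\widehat Q_s$ and $\widehat m_s^\ast$ for $s > t$. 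The key structural feature is that, because $\widehat P_{t+1}^\ast$ is built from already-debiased estimates $\widehat m_s^\ast$ for $s > t$, these cross-time products cancel along the recursion and do not contaminate the bound at time $t$.

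To convert the resulting sum into the stated form, I then linearize the smooth weight $s_t(\cdot)$ and the ratio $r_t$ around the truth. Since $Q_t(a_t \mid h_t) = \pi_t(h_t) + s_t(\pi_t)\{1-\pi_t\}$ with $s_t$ twice differentiable and bounded derivatives, a second-order Taylor expansion gives $\widehat Q_t - Q_t = g_t(\pi_t)(\widehat\pi_t - \pi_t) + O\big((\widehat\pi_t - \pi_t)^2\big)$ for a bounded function $g_t$; the efficient influence function correction $\widehat\phi_t$ in the definition of $\widehat P_t^\ast$ removes precisely the first-order term, leaving a quadratic remainder. Cauchy--Schwarz on the leading products then yields $\|\widehat\pi_t - \pi_t\|\,\|\widehat m_t^\ast - \widetilde m_t^\ast\|$, while the Taylor remainders contribute the $\|\widehat\pi_t - \pi_t\|^2$ terms.

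The main obstacle, and the step requiring the most care, is the bookkeeping in the recursive unrolling: one must verify that the cumulative error at timepoint $t$ is indeed controlled by $\|\widehat m_t^\ast - \widetilde m_t^\ast\|$ rather than by $\|\widehat m_t^\ast - m_t\|$, since the latter would reintroduce dependence on propensity-score and regression errors at future timepoints $s > t$ (as in Theorem~\ref{thm:trt-bias}). This telescoping is what separates the sequentially doubly robust rate from the multiply robust rate, and it relies critically on the fact that $\widehat m_t^\ast$ targets the conditional expectation of the \emph{debiased} pseudo-outcome $\widehat P_{t+1}^\ast$, not of the plug-in pseudo-outcome. Once this cancellation is established, applying Cauchy--Schwarz timepoint-by-timepoint delivers the advertised bound.
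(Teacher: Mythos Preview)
Your proposal is correct and follows essentially the same approach as the paper: apply Lemma~\ref{lem:seq-property-trtment-prob} with $t=0$ to write $\bbE\{\widehat P_1^\ast(Z)\}-m_0$ as a sum of products, add and subtract $\widetilde m_t^\ast$ inside the factor $(m_t-\widehat m_t^\ast)$, bound the $(\widetilde m_t^\ast-\widehat m_t^\ast)(\widehat r_t-r_t)$ pieces by Cauchy--Schwarz, and handle the $(m_t-\widetilde m_t^\ast)$ pieces recursively via the same lemma, with the $\widehat Q_t+\widehat\phi_t-Q_t$ terms controlled by the quadratic Taylor remainder in Lemma~\ref{lem:phi-second-order}. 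One small clarification: the cross-time terms $(m_t-\widetilde m_t^\ast)(\widehat r_t - r_t)$ do not literally ``cancel''---they are bounded by products of errors at timepoints $s>t$ that are already present in the outer sum, so they get absorbed into the $\lesssim$ constant rather than eliminated; and note that Lemma~\ref{lem:seq-property-trtment-prob} delivers the full sum-of-products decomposition in one step, so the iterative tower-rule picture you describe is really the content of that lemma rather than an additional layer of argument.
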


\section{Simulation study} \label{app:simulations}

Our simulation study leveraged a simple data generating process with two timepoints:
\begin{align*}
    X_1 &\sim \text{Unif}(0,1), \\
    A_1 &\sim \text{Bern}\{ \bbP(A_1 \mid X_1) \} \text{ where } \bbP(A_1 \mid X_1) = \one (0.1 \leq X_1 \leq 0.9) \left( \tfrac{X_1 - 0.1}{0.8} \right), \\
    X_2(A_1, X_1) &= \tfrac{X_1 + A_1}{2}, \\
    A_2 &\sim \text{Bern} \{ \bbP(A_2 \mid H_2) \} \text{ where } \bbP(A_2 \mid H_2) = \one (0.1 \leq X_2 \leq 0.9) \left( \tfrac{X_2 - 0.1}{0.8} \right)\text{, and }\\
    Y(A_2, A_1) &= X_1 + X_2 + A_1 + A_2 + N(0,1).
\end{align*}
We constructed nuisance estimators by adding random error to the true nuisance functions, with the magnitude of the random error decreasing as sample size increased. This design emulates nuisance estimators whose root mean squared error converges at specified rates with increasing sample size, enabling us to examine how the sequentially doubly robust-style estimator performed under different convergence scenarios. The four nuisance estimators we considered were the two propensity score estimators and the two sequential regression estimators. We structured our simulations such that both propensity score estimators converged at the same rate, and both sequential regressions converged at the same rate. We conducted 250 simulation runs for each combination of sample size and convergence rates.

\bigskip

Our focus was on estimating the mean difference in potential outcomes under flip interventions that targeted always-treated and never-treated regimes. We considered two distinct weighting schemes: overlap weights and smooth trimming weights, with the smooth trimming indicator defined as $1 - \exp(-10x)$.

\bigskip

Figure~\ref{fig:simulation-results} summarizes our simulation results. The x-axis represents sample size, while the y-axis indicates the coverage probability of the sequentially doubly robust-style estimator, calculated over 250 simulation runs with 95\% confidence intervals. Columns correspond to varying estimation rates for sequential regressions, ranging from $n^{-0.1}$ to $n^{-0.5}$, while rows correspond to different estimation rates for propensity score estimators, also ranging from $n^{-0.1}$ to $n^{-0.5}$. The green points and associated error bars denote results for overlap weights, while orange points and error bars represent results for smooth trimming weights.

\bigskip

The simulation results corroborate our theoretical analysis in the main paper. Consistent with Corollary~\ref{cor:s-flip-seq-convergence}, accurate estimation of propensity scores at least at the $n^{-1/4}$ rate is necessary for the sequentially doubly robust-style estimator to achieve $\sqrt{n}$-consistency and Gaussian convergence to the true estimand. This requirement is clearly demonstrated by the top row, where estimators fail to converge due to inadequate propensity score estimation (data points with coverage less than 0.5 were omitted from the figure). When propensity scores are sufficiently well estimated (second row), convergence properties depend on the sequential regression rates. Specifically, when the product of estimation rates is slower than $n^{-1/2}$ (left-most column, second row), estimators do not achieve correct coverage. However, in all other cases within the second row, and throughout the entire bottom row, the estimators exhibit appropriate coverage and consistently converge to the true estimand as sample size increases.

\begin{figure}
    \centering
    \includegraphics[width=\linewidth]{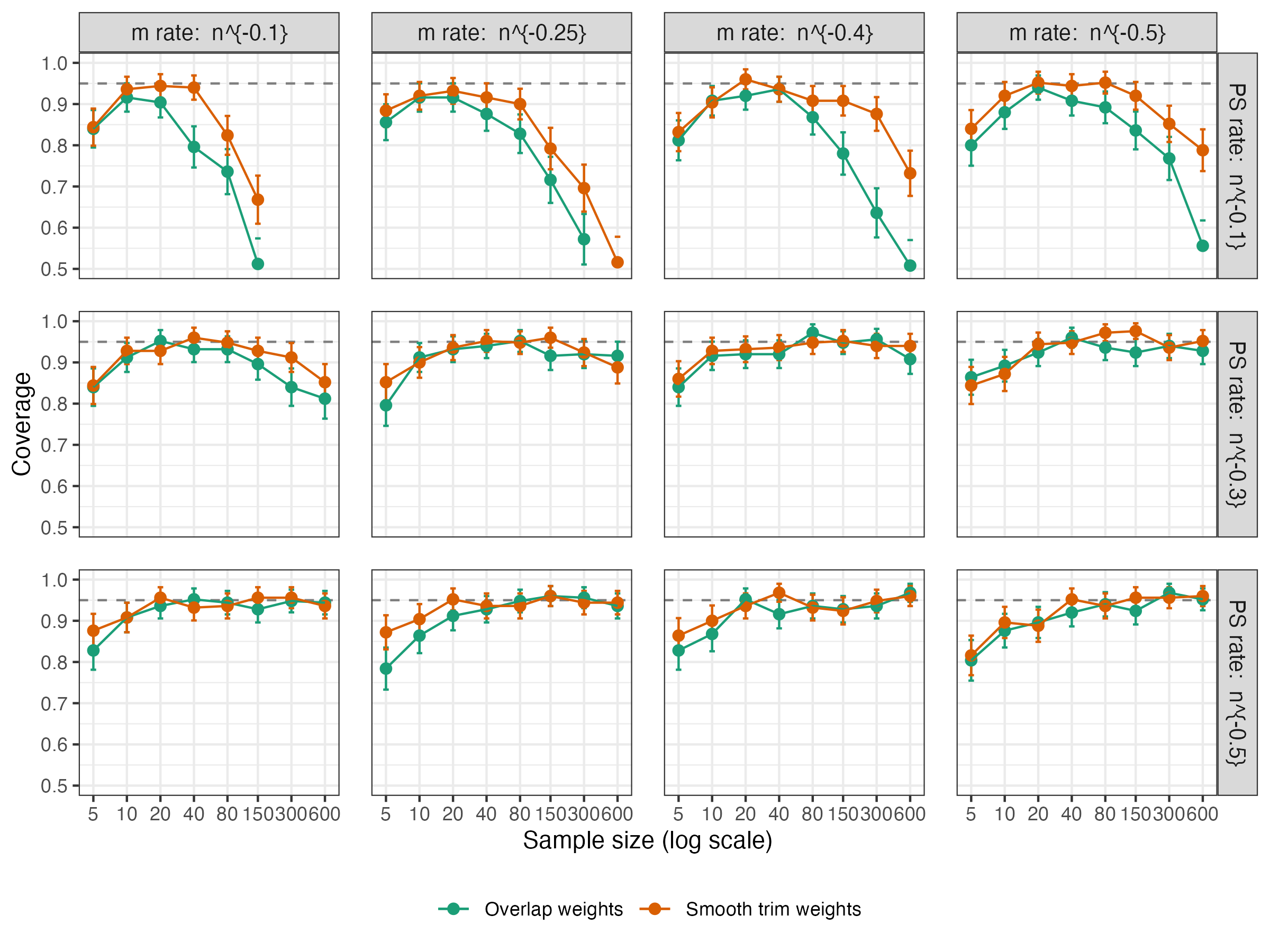}
    \caption{Simulations results}
    \label{fig:simulation-results}
\end{figure}

\section{Proofs for Section~\ref{sec:t=1}: Proposition~\ref{prop:equivalence}} \label{app:t=1}

\begin{proof}
Since $(D_f(0), D_f(1)) = \begin{cases}
    (A, A), & \text{ if } V > f(X)\\
    (0,1), &\text{ if } V \leq f(X)
\end{cases}$, by consistency the numerator satisfies
\begin{align*}
    \bbE \big[ Y\{ D_f(1) \} - Y\{ D_f(0) \} \big] 
    &= \bbE \big[ \one\{V \leq f(X)\}\{Y(1) - Y(0)\}\big] \\
    &= \bbE(\bbE \big[ \one\{V \leq f(X)\}\{Y(1) - Y(0)\} \mid X\big]) \\
    &= \bbE(\bbE \big[ \one\{V \leq f(X)\} \mid X\big] \cdot \bbE \big[Y(1) - Y(0) \mid X\big]) \\
    &= \bbE(f(X)\bbE\big[Y(1) - Y(0) \mid X\big]),
\end{align*}
where the third line results from $V \ind (Y(0), Y(1)) \mid X$. Similarly,
\[\bbE \big[D_f(1)- D_f(0) \big] = \mathbb{P}[V \leq f(X)] = \bbE[\mathbb{P}[V \leq f(X) \mid X] ] = \mathbb{E}[f(X)].\]
This also yields the conditional interpretation:
\[\bbE[Y(1) - Y(0) \mid V \leq f(X)] = \frac{\bbE[\one\{V \leq f(X)\}\{Y(1) - Y(0)\}]}{\mathbb{P}[V \leq f(X)]} = \frac{\bbE \big[ Y\{ D_f(1) \} - Y\{ D_f(0) \} \big] }{\bbE \big[D_f(1)- D_f(0) \big]}.\]
\end{proof}

\begin{remark}
    The proof of Proposition~\ref{prop:equivalence} still holds without requiring that the two interventions depend on the same auxiliary random variable $V$. Consider the case when \( D_f(0) = A \one \{ V_0 > f(X) \} \) and \( D_f(1) = A + (1-A) \one\{ V_1 \leq f(X) \}\) and $V_0 \ind V_1$. We then have
    \[
    (D_f(0), D_f(1)) = \begin{cases}
        (A, A) & \text{ if } V_0 > f(X) \text{ and } V_1 > f(X), \\
        (0,A) &\text{ if } V_1 \leq f(X) < V_0, \\
        (A,1) &\text{ if } V_0 \leq f(X) < V_1\text{, and } \\
        (0,1) &\text{ if } V_0 \leq f(X) \text{ and } V_1 \leq f(X).
    \end{cases}
    \]
    Therefore, by consistency,
    \begin{align*}
        \bbE \big[ Y\{ D_f(1) \} - Y\{ D_f(0) \} \big] 
        &= \bbE \big[ \one\{V_0 \leq f(X)\} \one\{ V_1 \leq f(X) \} \{Y(1) - Y(0)\}\big] \\
        &\hspace{0.2in}+ \bbE \big[ \one\{ V_1 \leq f(X) \} \one \{ V_0 > f(X) \} \{ Y(A) - Y(0) \} \big] \\
        &\hspace{0.2in}+ \bbE \big[ \one \{ V_0 \leq f(X) \} \one \{ V_1 > f(X) \} \{ Y(1) - Y(A) \} \big] \\
        &= \bbE \big[ f(X)^2 \bbE \{ Y(1) - Y(0) \mid X \} \big] \\
        &\hspace{0.2in}+ \bbE \big[ f(X) \{ 1 - f(X) \} \bbE \{ Y(A) - Y(0) \mid X \} \big] \\
        &\hspace{0.2in}+ \bbE \big[ f(X) \{ 1 - f(X) \} \bbE \{ Y(1) - Y(A) \mid X \} \big] \\
        &= \bbE \big[ f(X)^2 \bbE \{ Y(1) - Y(0) \mid X \} \big] + \bbE \big[ f(X) \{ 1 - f(X) \} \bbE \{ Y(1) - Y(0) \mid X \} \big] \\
        &= \bbE \big[ \bbE \{ Y(1) - Y(0) \mid X \} f(X) \big].
    \end{align*}
    The second equality follows by iterated expectations on $X$ and because $V_0 \ind V_1$ and $(V_0, V_1) \ind \{ Y(0), Y(1) \}$ by assumption.  
\end{remark}

\section{Proofs for Section~\ref{sec:ltt}} \label{app:ltt}

\subsection{Helper results}

For the identification results, we provide several helper lemmas.  We also slightly amend our notation from the main paper, so we can specify the dependence of random variables on the counterfactual past interventions. In what follows, we make the following ``assumption'' (which corresponds to the setup in the main paper).
\begin{assumption} \label{asmp:setup}
    Suppose the following setup:
    \begin{itemize}
        \item $\{ D_1, D_2(D_1), \dots, D_T(\overline D_{T-1}) \}$ denote a set of treatment decisions where \\ $D_t(\overline D_{t-1}) = d_t \{ A_t(\overline D_{t-1}), H_t(\overline D_{t-1}), V_t \}$ for some deterministic function $g_t$, where $V_1, \dots, V_T$ are mutually independent and $V_t \ind Z$ for all $t \in \{1, \dots, T\}$, 
        \item $\overline X_t(\overline a_{t-1})$ denotes the natural covariate history under an intervention that sets treatment to $\overline a_{t-1}$ up until time $t-1$, 
        \item $H_t(\overline a_{t-1}) = \big\{ \overline X_{t-1}(\overline a_{t-1}), \overline D_{t-1} = \overline a_{t-1} \big\}$ denotes the natural covariate history and the intervention treatment history,
        \item $A_t(\overline a_{t-1})$ denotes the natural value of treatment after history $H_t(\overline a_{t-1})$, 
        \item $D_t(\overline a_{t-1}) = d_t \{ A_t(\overline a_{t-1}), H_t(\overline a_{t-1}), V_t \}$, i.e., the treatment decision at time $t$ is random via $A_t(\overline a_{t-1}), H_t(\overline a_{t-1})$, and $V_t$, 
        \item $Y( \overline a_t, \underline D_{t+1} )$ denotes the potential outcome under an intervention that sets treatment $\overline a_t$ up until time $t$ and assigns treatment according to treatment decisions \\ $\underline D_{t+1} = \{D_{t+1}(\overline a_{t}), D_{t+2}(D_{t+1}, \overline a_t), \dots D_T (\overline a_t, D_{t+1}, \dots, D_{T-1}) \}$ thereafter.
    \end{itemize}
\end{assumption}    

The first result gives us the important consistency steps we use within the g-formula. In essence, it says that if we condition on an observed history up until the end of timepoint $t-1$, then the counterfactual history in timepoint $t$ is equal to the observed history. An important corollary is that, conditional on an observed history up until the end of timepoint $t-1$, the propensity scores at timepoint $t$ are identified.
\begin{proposition} \label{prop:consistency}
    Conditional on $\{ \overline X_{t-1}, \overline A_{t-1} = \overline b_{t-1} \}$, 
    \begin{itemize}
        \item $H_t(\overline b_{t-1}) = \big\{ \overline X_t, \overline A_{t-1} = \overline b_{t-1} \big\}$ and
        \item $A_t(\overline b_{t-1}) = A_t$.
    \end{itemize}
    As a consequence, conditional on $\{ \overline X_{t-1}, \overline A_{t-1} = \overline b_{t-1} \}$,
    \[
    \bbP \{ A_t(\overline b_{t-1}) = b_t \mid H_t(\overline b_{t-1}) \} = \bbP(A_t = b_t \mid \overline X_t, \overline A_{t-1} = \overline b_{t-1}).
    \]
    In other words, the propensity score at timepoint $t$ is identified.
\end{proposition}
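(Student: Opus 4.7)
The plan is to prove both equalities by induction on $t$ and then derive the propensity score identification as an immediate consequence. The NPSEM specifies structural equations $X_t = f_{X,t}(A_{t-1}, H_{t-1}, U_{X,t})$ and $A_t = f_{A,t}(H_t, U_{A,t})$, while the counterfactuals under imposed treatments $\overline b_{t-1}$ are defined recursively by feeding $\overline b_{t-1}$ through the same structural equations with the same exogenous noise terms, e.g., $X_t(\overline b_{t-1}) = f_{X,t}(b_{t-1}, H_{t-1}(\overline b_{t-2}), U_{X,t})$. The key observation is that on the event $\{\overline A_{t-1} = \overline b_{t-1}\}$ the observed treatment history coincides with the imposed one, so evaluating the structural equations at either yields the same output.

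For the base case $t=1$, the NPSEM convention gives $X_1(\overline b_0) = X_1$ and $A_1(\overline b_0) = A_1$ deterministically, so $H_1(\overline b_0) = X_1$ as required. For the inductive step, suppose the claim holds at time $t-1$: on $\{\overline A_{t-2} = \overline b_{t-2}\}$ we have $H_{t-1}(\overline b_{t-2}) = \{\overline X_{t-1}, \overline b_{t-2}\}$ and $A_{t-1}(\overline b_{t-2}) = A_{t-1}$. Since conditioning on $\{\overline A_{t-1} = \overline b_{t-1}\}$ entails $\{\overline A_{t-2} = \overline b_{t-2}\}$, the inductive hypothesis applies, giving $H_{t-1}(\overline b_{t-2}) = H_{t-1}$ on the conditioning event. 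Substituting into the recursive definition of $X_t(\overline b_{t-1})$ and using $A_{t-1} = b_{t-1}$ on this event, I get
\[
X_t(\overline b_{t-1}) = f_{X,t}(b_{t-1}, H_{t-1}(\overline b_{t-2}), U_{X,t}) = f_{X,t}(A_{t-1}, H_{t-1}, U_{X,t}) = X_t,
\]
so $H_t(\overline b_{t-1}) = \{\overline X_t, \overline b_{t-1}\}$ on the conditioning event. Then $A_t(\overline b_{t-1}) = f_{A,t}(H_t(\overline b_{t-1}), U_{A,t}) = f_{A,t}(H_t, U_{A,t}) = A_t$, completing the induction.

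For the propensity score identification, since $A_t(\overline b_{t-1}) = A_t$ and $H_t(\overline b_{t-1}) = H_t$ hold on $\{\overline X_{t-1}, \overline A_{t-1} = \overline b_{t-1}\}$, the counterfactual quantities can be substituted by their observed analogues inside the conditional probability, yielding the stated identity. The main obstacle is mostly bookkeeping: carefully tracking which events are being conditioned on, and making sure the inductive hypothesis is stated in a form strong enough (equality of both $H_{t-1}(\overline b_{t-2})$ and $A_{t-1}(\overline b_{t-2})$) to feed into the next step. No additional causal assumption beyond the NPSEM is needed---consistency is already baked into the modular structural equations, so the proof amounts to unpacking the recursive definitions.
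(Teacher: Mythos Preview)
Your proof is correct and takes the same approach as the paper. The paper's proof is a single line---``These follow by the consistency assumption in the NPSEM''---and your induction argument simply unpacks what that means by tracing the recursive structural equations under the event $\{\overline A_{t-1} = \overline b_{t-1}\}$.
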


\begin{proof}
    These follow by the consistency assumption in the NPSEM. 
\end{proof}

The next result gives us the exchangeability necessary for identification when the intervention depends on the natural value of treatment.
\begin{lemma} \label{lem:exch-short}
    Under Assumption~\ref{asmp:strong-seq-exch} and Assumption~\ref{asmp:setup}, 
    \begin{align*}
        A_t(\overline a_{t-1}) &\ind Y (\overline a_t, \underline D_{t+1}) \mid H_t(\overline a_{t-1}) \text{ and } \\
        D_t(\overline a_{t-1}) &\ind Y (\overline a_t, \underline D_{t+1}) \mid H_t(\overline a_{t-1}).
    \end{align*}
\end{lemma}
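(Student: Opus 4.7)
The plan is to prove both statements by writing every random variable involved as a deterministic function of the exogenous noise (the $U$'s and $V$'s) using the NPSEM, and then invoking strong sequential randomization and the independence of the auxiliary variables $V_t$.

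\textbf{Step 1: Represent each counterfactual in terms of exogenous variables.} From the NPSEM, I would first note that $A_t(\overline a_{t-1}) = f_{A,t}\big(H_t(\overline a_{t-1}), U_{A,t}\big)$, so $A_t(\overline a_{t-1})$ is a deterministic function of $H_t(\overline a_{t-1})$ and $U_{A,t}$. Next, unroll $Y(\overline a_t, \underline D_{t+1})$ recursively: for $s > t$, $X_s$ is produced by $f_{X,s}$ applied to past interventions, $\underline U_{X,t+1}$, and any previous $D_k$ (which are functions of the natural $A_k$, the history, and $V_k$); $A_s$ is produced by $f_{A,s}$ with $\underline U_{A,t+1}$; and finally $f_Y$ brings in $U_Y$. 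Peeling off the dependence through $H_t(\overline a_{t-1})$ and $\overline a_t$, I would conclude that $Y(\overline a_t, \underline D_{t+1})$ is a deterministic function of $\big(H_t(\overline a_{t-1}), \overline a_t, \underline U_{X,t+1}, \underline U_{A,t+1}, U_Y, \underline V_{t+1}\big)$.

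\textbf{Step 2: Apply strong sequential randomization plus independence of the $V_t$'s.} Assumption~\ref{asmp:strong-seq-exch} gives $U_{A,t} \ind \{\underline U_{X,t+1}, \underline U_{A,t+1}, U_Y\} \mid H_t$; the standard NPSEM interpretation extends this to the counterfactual history $H_t(\overline a_{t-1})$, which is itself a function of earlier exogenous variables. By Assumption~\ref{asmp:setup}, $\{V_1,\dots,V_T\}$ is mutually independent and independent of $Z$, hence of the entire exogenous vector. Combining these yields
\[
U_{A,t} \ind \{\underline U_{X,t+1}, \underline U_{A,t+1}, U_Y, \underline V_{t+1}\} \,\big|\, H_t(\overline a_{t-1}).
\]
Since $A_t(\overline a_{t-1})$ depends on the left-hand factor only through $U_{A,t}$ (and through $H_t(\overline a_{t-1})$, which is being conditioned on) and $Y(\overline a_t, \underline D_{t+1})$ depends on the right-hand factor, the first claim $A_t(\overline a_{t-1}) \ind Y(\overline a_t, \underline D_{t+1}) \mid H_t(\overline a_{t-1})$ follows immediately.

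\textbf{Step 3: Lift independence from $A_t$ to $D_t$.} By Assumption~\ref{asmp:setup}, $D_t(\overline a_{t-1}) = d_t\big(A_t(\overline a_{t-1}), H_t(\overline a_{t-1}), V_t\big)$. Since $V_t$ is independent of $(U_{A,t}, \underline U_{X,t+1}, \underline U_{A,t+1}, U_Y, \underline V_{t+1})$ and of $H_t(\overline a_{t-1})$, Step 2 strengthens to $\{A_t(\overline a_{t-1}), V_t\} \ind Y(\overline a_t, \underline D_{t+1}) \mid H_t(\overline a_{t-1})$, and independence is preserved under the deterministic map $d_t$, giving the second claim.

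\textbf{Main obstacle.} The delicate point is the extension of Assumption~\ref{asmp:strong-seq-exch} from conditioning on the factual $H_t$ to conditioning on counterfactual histories $H_t(\overline a_{t-1})$. This is typically viewed as implicit in the NPSEM formulation (since the structural equation $f_{A,t}$ and its exogenous input $U_{A,t}$ are the same regardless of how the history was generated), and I would state this explicitly before invoking it. Once this is in hand, the argument is a routine bookkeeping exercise of which exogenous variables each counterfactual depends on.
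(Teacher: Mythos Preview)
Your proposal is correct and follows essentially the same approach as the paper: the paper's proof simply observes that, conditional on $H_t(\overline a_{t-1})$, the variable $A_t(\overline a_{t-1})$ depends only on $U_{A,t}$ while $Y(\overline a_t,\underline D_{t+1})$ depends only on $(\underline V_{t+1},\underline U_{A,t+1},\underline U_{X,t+1},U_Y)$, and then invokes Assumption~\ref{asmp:strong-seq-exch} together with the independence of $\overline V_T$. Your write-up is a more explicit unpacking of exactly this argument, including the extension to $D_t(\overline a_{t-1})$ via the auxiliary $V_t$; the ``main obstacle'' you flag about conditioning on the counterfactual history is handled implicitly in the paper by the NPSEM formulation, just as you anticipated.
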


\begin{proof}
    Conditional on $H_t(\overline a_{t-1})$, $A_t(\overline a_{t-1})$ only depends on the random variable $U_{A,t}$. Meanwhile, $Y(\overline a_t, \underline D_{t+1})$ depends on $(\underline V_{t+1}, \underline U_{A,t+1}, \underline U_{X,t+1}, U_Y)$. Both results then follow by Assumption~\ref{asmp:strong-seq-exch} and the assumption on $\overline V_T$.
\end{proof}

The next result gives the same exchangeability result when the intervention does not depend on the natural value of treatment. It only requires standard sequential randomization.  
\begin{lemma} \label{lem:exch-short-no-strong}
    Under Assumption~\ref{asmp:setup}, suppose instead that
    \[
    D_t(\overline a_{t-1}) = d_t \{ \bbP \{ A_t(\overline a_{t-1}) \mid H_t(\overline a_{t-1}) \}, H_t(\overline a_{t-1}), V_t \}
    \]
    for some function $d_t$; i.e., the intervention only depends on the natural propensity score, the natural covariate history, and auxiliary randomness, but not the natural value of treatment. Then, under only Assumption~\ref{asmp:standard-seq-exch},
    \begin{align*}
        A_t(\overline a_{t-1}) &\ind Y (\overline a_t, \underline D_{t+1}) \mid H_t(\overline a_{t-1}) \text{ and } \\
        D_t(\overline a_{t-1}) &\ind Y (\overline a_t, \underline D_{t+1}) \mid H_t(\overline a_{t-1}).
    \end{align*}
\end{lemma}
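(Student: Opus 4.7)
The plan is to follow the same structural template as the proof of Lemma~\ref{lem:exch-short}, but exploit the fact that the modified intervention $D_t(\overline a_{t-1})$ is a function of the natural propensity score $\bbP\{A_t(\overline a_{t-1}) \mid H_t(\overline a_{t-1})\}$, the natural history $H_t(\overline a_{t-1})$, and $V_t$, but not of the natural value of treatment $A_t(\overline a_{t-1})$. Concretely, I will identify which exogenous NPSEM variables drive each of the three random variables in the independence statements, conditional on $H_t(\overline a_{t-1})$, and then invoke Assumption~\ref{asmp:standard-seq-exch} together with $\{V_1, \ldots, V_T\} \ind Z$.

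First, applied recursively to the counterfactual world, the NPSEM gives $A_t(\overline a_{t-1}) = f_{A,t}\bigl(H_t(\overline a_{t-1}), U_{A,t}\bigr)$, so conditional on $H_t(\overline a_{t-1})$ it is a deterministic function of $U_{A,t}$. Likewise, under the hypothesis of the lemma, conditional on $H_t(\overline a_{t-1})$ the decision $D_t(\overline a_{t-1})$ is a deterministic function of $V_t$ alone, because the natural propensity score $\bbP\{A_t(\overline a_{t-1}) = 1 \mid H_t(\overline a_{t-1})\}$ is itself a measurable function of $H_t(\overline a_{t-1})$. The crucial step is then to argue that, conditional on $H_t(\overline a_{t-1})$, the potential outcome $Y(\overline a_t, \underline D_{t+1})$ is a measurable function of $\{\underline U_{X,t+1}, U_Y, \underline V_{t+1}\}$ only, with no dependence on any $U_{A,s}$ for $s \geq t$. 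I will establish this by unfolding the recursion: future counterfactual covariates $X_s(\overline a_t, D_{t+1}, \ldots, D_{s-1})$ are built from $\{U_{X,r}: t < r \leq s\}$ and the history; each $D_s$ depends only on the counterfactual history and $V_s$ (since again the natural propensity score is a function of history alone); and $Y = f_Y(D_T, H_T(\cdot), U_Y)$. Because the interventions in this lemma do not pass through $A_s(\cdot)$, no $U_{A,s}$ enters this composition.

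With these three reductions in hand, the conclusion follows immediately: by Assumption~\ref{asmp:standard-seq-exch}, $U_{A,t} \ind \{\underline U_{X,t+1}, U_Y\} \mid H_t$, and by the standing assumption $V_t \ind Z$ (so in particular $V_t$ is independent of $\{U_{A,t}, \underline U_{X,t+1}, U_Y, \underline V_{t+1}\}$). Proposition~\ref{prop:consistency} lets me equate $H_t(\overline a_{t-1})$ with $H_t$ on the relevant event, and then functions of conditionally independent exogenous variables remain conditionally independent, yielding both claims. The main obstacle I anticipate is the bookkeeping in the recursive step: carefully verifying at each $s > t$ that the natural propensity score is measurable with respect to $H_s(\cdot)$ (so as not to smuggle $U_{A,s}$ back in through $D_s$), and that the construction of $H_s(\cdot)$ from earlier decisions does not reintroduce any $U_{A,r}$. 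Once that recursive invariant is pinned down, the rest of the argument is a direct adaptation of the proof of Lemma~\ref{lem:exch-short}, with the weaker Assumption~\ref{asmp:standard-seq-exch} now sufficing precisely because no $U_{A,s}$ appears in $Y(\overline a_t, \underline D_{t+1})$.
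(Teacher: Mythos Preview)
Your proposal is correct and follows essentially the same approach as the paper: identify the exogenous variables driving $A_t(\overline a_{t-1})$, $D_t(\overline a_{t-1})$, and $Y(\overline a_t, \underline D_{t+1})$ conditional on $H_t(\overline a_{t-1})$, observe that no $U_{A,s}$ enters the latter because the interventions bypass the natural value of treatment, and conclude via Assumption~\ref{asmp:standard-seq-exch} and the independence of $\overline V_T$ from $Z$. One minor note: your appeal to Proposition~\ref{prop:consistency} to ``equate $H_t(\overline a_{t-1})$ with $H_t$'' is unnecessary---the paper applies Assumption~\ref{asmp:standard-seq-exch} directly in the counterfactual world (the conditional independence of exogenous variables carries over since $H_t(\overline a_{t-1})$ is a function of $\overline U_{X,t}$ and the fixed $\overline a_{t-1}$), so no translation to observed $H_t$ is required at this stage.
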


\begin{proof}
    Conditional on the natural covariate history, the intervention is only random via $V_t$. Meanwhile, $\underline D_{t+1}$ are only random via the future natural covariate history and random variables $\underline V_{t+1}$. Therefore, conditional on $H_t(\overline a_{t-1})$, $Y(\overline a_t, \underline D_{t+1})$ is only random via $(\underline U_{X,t+1}, U_Y, \underline V_{t+1})$. Therefore, the result follows by Assumption~\ref{asmp:standard-seq-exch} and the assumption on $\overline V_T$.
\end{proof}

\begin{remark}
    The construction and result in Lemma~\ref{lem:exch-short-no-strong} applies to the stochastic amendment of flip interventions we suggested in Section~\ref{subsec:longitudinal-flip}:
    \begin{align*}
        D_{f_t}(a_t) = \one \Big( V_t \leq &\; \one(a_t = 1) f_t\{ H_t(\overline D_{t-1}) \} \\
        &+ \Big[ 1 - f_t\{ H_t(\overline D_{t-1}) \} \Big] \bbP\{ A_t(\overline D_{t-1}) =1 \mid H_t(\overline D_{t-1}) \} \Big) 
    \end{align*}
\end{remark}

Finally, we establish a result for positivity, which shows that as long as the weight satisfies condition~\ref{cond:positivity} of Theorem~\ref{thm:id-flip}, then flip interventions avoid positivity violations. 
\begin{lemma} \label{lem:absolute-continuity}
    Under the setup of Theorem~\ref{thm:id-flip}, let $a_t$ denote the target treatment and $Q_t$ denote the intervention propensity score. Then,
    \[
    \bbP \left\{ \bbP(A_t = 1-a_t \mid H_t) = 0 \implies Q_t(A_t = 1-a_t \mid H_t) = 0 \right\}.
    \]
    Moreover, if condition~\ref{cond:weight-construction} holds then 
    \[
    \bbP \big\{ \bbP(A_t = a_t \mid H_t) = 0 \implies Q_t(a_t \mid H_t) = 0 \big\} = 1.
    \]
\end{lemma}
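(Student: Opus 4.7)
The plan is to proceed directly from the explicit formula \eqref{eq:target-prop-score} for the intervention propensity score, namely
\[
Q_t(a_t \mid h_t) = \bbP(A_t = a_t \mid h_t) + f_t(h_t)\bigl\{1 - \bbP(A_t = a_t \mid h_t)\bigr\}.
\]
Since $Q_t(\cdot \mid h_t)$ is a probability mass function on $\{0,1\}$, I would first deduce the companion formula
\[
Q_t(1-a_t \mid h_t) = 1 - Q_t(a_t \mid h_t) = \bbP(A_t = 1-a_t \mid h_t)\bigl\{1 - f_t(h_t)\bigr\}.
\]
This is a one-line algebraic rearrangement that makes the dependence on $\bbP(A_t = 1-a_t \mid h_t)$ transparent.

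For the first claim, the companion formula immediately shows that on the event $\{\bbP(A_t = 1-a_t \mid H_t) = 0\}$ we have $Q_t(1-a_t \mid H_t) = 0 \cdot \{1 - f_t(H_t)\} = 0$, holding pointwise on that event without any assumption on $f_t$. Taking probabilities preserves the implication almost surely.

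For the second claim, I would substitute $\bbP(A_t = a_t \mid H_t) = 0$ directly into \eqref{eq:target-prop-score}, giving $Q_t(a_t \mid H_t) = 0 + f_t(H_t) \cdot 1 = f_t(H_t)$ on that event. Condition~\ref{cond:weight-construction} guarantees that $\bbP(A_t = a_t \mid H_t) = 0$ forces $f_t(H_t) = 0$ with probability one, hence $Q_t(a_t \mid H_t) = 0$ on that event almost surely, which is exactly the claim.

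There is no genuine obstacle here: the proof is a direct substitution into the definition of $Q_t$ combined with the hypothesis of condition~\ref{cond:weight-construction}. The only mild care required is bookkeeping the ``almost surely'' qualifiers when passing from the pointwise identity to the statement $\bbP\{\cdot\} = 1$, which follows because condition~\ref{cond:weight-construction} is itself stated as an almost-sure implication.
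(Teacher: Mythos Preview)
Your proposal is correct and follows essentially the same approach as the paper: direct substitution into the defining formula \eqref{eq:target-prop-score} for $Q_t$, combined with condition~\ref{cond:weight-construction} for the second claim. The only cosmetic difference is that for the first claim the paper computes $Q_t(a_t \mid H_t) = 1$ and then invokes complementarity, whereas you write out the companion formula $Q_t(1-a_t \mid h_t) = \bbP(A_t = 1-a_t \mid h_t)\{1 - f_t(h_t)\}$ explicitly; both are the same one-line substitution.
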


\begin{proof}
    For the first result, notice that $\bbP(A_t = 1-a_t \mid H_t) = 0$ implies 
    \[
    Q_t(a_t \mid H_t) = 1 + 0 = 1,
    \]
    which implies $Q_t(1-a_t \mid H_t) = 0$.  

    \bigskip

    For the second result, condition~\ref{cond:weight-construction} asserts that $\bbP(A_t = a_t \mid H_t) = 0 \implies f_t(H_t) = 0$ almost always; therefore, $\bbP(A_t = a_t \mid H_t) = 0$ implies 
    \[
    Q_t(a_t \mid H_t) = 0 + 1 \cdot 0 = 0.
    \] 
\end{proof}

\subsection{Proof of Theorem~\ref{thm:id-flip}}

Finally, we have the full proof of the main theorem. For brevity, we use just $\overline D_T$ to denote the interventions.
\begin{proof}
    First, we have
    \begin{align*}
        \bbE \big\{ Y( \overline D_T ) \big\} &= \bbE \Big[ \bbE \big\{ Y( \overline D_T ) \mid X_1 \big\} \Big] = \bbE \Big[ \bbE \big\{ Y(\overline D_T) \mid D_1, X_1 \big\} \mid X_1 \Big] \\
        &= \bbE \Big[ \sum_{b_1} \bbE \big\{ Y( b_1, \underline D_2) \mid D_1 = b_1, X_1 \big\} \bbP ( D_1 = b_1 \mid X_1 ) \Big] \\
        &=\bbE \Big[ \sum_{b_1} \bbE \big\{ Y( b_1, \underline D_2 ) \mid A_1 = b_1, X_1 \big\} Q_1(b_1 \mid X_1) \Big] \\
        &\equiv \sum_{b_1} \int_{\mathcal{X}_1} \bbE \big\{ Y( b_1, \underline D_2 ) \mid A_1 = b_1, x_1 \big\} Q_1(b_1 \mid x_1) d\bbP(x_1) 
    \end{align*}
    where the first line follows by iterated expectations on $X_1$ and then on $X_1$ and $D_1$, the second by taking the expectation over $D_1$, the third by Lemma~\ref{lem:exch-short} in the inner expectation and by the definition of $D_1$ in the outer probability and Proposition~\ref{prop:consistency}, and the fourth by linearity of expectation and definition. 

    \bigskip

    That everything is well-defined follows by Lemma~\ref{lem:absolute-continuity}. First, by the construction of the interventions, when $b_1 = 1-a_1$ Lemma~\ref{lem:absolute-continuity} guarantees the outer expectation is well-defined.  The inner expectation might not be well-defined, but $Q_1 ( 1-a_1 \mid x_1 ) = 0$ whenever that occurs. Meanwhile, if condition~\ref{cond:weight-construction} holds, the same argument applies for $b_1 = a_1$ by Lemma~\ref{lem:absolute-continuity}. Finally, if condition~\ref{cond:weight-construction} does not hold but condition~\ref{cond:positivity} of the theorem holds, then the positivity assumption guarantees that $Q_1(a_1 \mid X_1)$ is almost never zero. 

    \bigskip

    \noindent The rest of the proof will follow by induction. We address the $t=2$ step. We have
    \begin{align*}
        \bbE \big\{ Y( b_1, \underline D_2 ) \mid A_1 = b_1, X_1 \big\} &= \bbE \Big[ \bbE \big\{ Y(b_1, \underline D_2) \mid X_2(b_1), A_1 = b_1, X_1 \} \mid A_1=b_1, X_1 \Big] \\
        &\equiv \bbE \Big[ \bbE \big\{ Y(b_1, \underline D_2) \mid H_2(b_1) \} \mid A_1=b_1, X_1 \Big] \\
        &\hspace{-1in}= \bbE \Big[ \sum_{b_2} \bbE \big\{ Y(b_1, b_2, \underline D_3) \mid D_2(b_1) = b_2, H_2(b_1) \} \bbP \{ D_2(b_1) = b_2 \mid H_2(b_1) \} \mid A_1 =b_1, X_1 \Big] \\
        &\hspace{-1in}= \bbE \Big[ \sum_{b_2} \bbE \big\{ Y(b_1, b_2, \underline D_3) \mid A_2(b_1) = b_2, H_2(b_1) \} Q_2(b_2 \mid b_1, \overline X_2 ) \mid A_1 =b_1, X_1 \Big] \\
        &\hspace{-1.5in}= \sum_{b_2} \bbE \Big[ \bbE \big\{ Y(b_1, b_2, \underline D_3) \mid A_2 = b_2, X_2, A_1=b_1, X_1 \}  Q_2(b_2 \mid b_1, \overline X_2 ) \mid A_1 =b_1, X_1 \Big]
    \end{align*}
    where the first line follows by iterated expectations on $X_2(b_1),A_1=b_1, X_1$, the second line by Proposition~\ref{prop:consistency}, and the third by iterated expectations on $D_2(b_2), H_2(b_1)$ and then taking the expectation over $D_2(b_2)$. The fourth follows by Lemma~\ref{lem:exch-short} inside the expectation; meanwhile, the probability $Q_2$ is identified by Proposition~\ref{prop:consistency}. The final line follows again by Proposition~\ref{prop:consistency}.  Again, the conditions of the theorem and Lemma~\ref{lem:absolute-continuity} guarantee that the expectation is well-defined, by the same argument as above for $t=1$.

    \bigskip

    Repeating this process $t-2$ more times yields  
    \[
    \bbE \left\{ Y( \overline D_T ) \right\} = \sum_{\overline b_T \in \{0,1\}^T} \int_{\overline{\mathcal{X}}_T} \bbE\{ Y(\overline b_t) \mid \overline A_T = \overline b_{T}, \overline X_T = \overline x_T \} \prod_{t=1}^{T} Q_t(b_t \mid \overline b_{t-1}, \overline x_t ) d\bbP(x_t \mid \overline b_{t-1}, \overline x_{t-1}).
    \]
    \normalsize The final result follows by the consistency assumption embedded in the NPSEM.  The IPW result follows by taking the expectation over $A_t$.

    \bigskip

    Meanwhile, the identification of $\bbE (D_t)$ follows by essentially the same argument. We'll repeat the first step here:
    \[
    \bbE(D_t) = \sum_{b_1} \bbE \{ \bbE(D_t \mid D_1 = b_1, X_1) \bbP(D_1 = b_1 \mid X_1) \} = \sum_{b_1} \bbE \{ \bbE(D_t \mid A_1 = b_1, X_1) Q_1(b_1 \mid X_1) \}
    \]
    where the first equation follows by iterated expectations, the second by Lemma~\ref{lem:exch-short} to exchange $D_1=b_1$ with $A_1 = b_1$ and by the definition of $D_1$ to yield $\bbP(D_1 = b_1 \mid X_1) = Q_1(b_1 \mid X_1)$. The result follows by repeating this process $t-1$ more times.
\end{proof}

\begin{remark}
    Suppose the flip intervention was stochastic as in our suggested stochastic amendment at the end of Section~\ref{subsec:longitudinal-flip}:
    \begin{align*}
        D_{f_t}(a_t) = \one \bigg( V_t &\leq \one(a_t = 1) f_t \{H_t(\overline D_{t-1}) \} \\
        &+ \Big[ 1 - f_t\{H_t(\overline D_{t-1}) \} \Big] \bbP \{ A_t(\overline D_{t-1} = 1 \mid H_t(\overline D_{t-1}) \} \bigg).
    \end{align*}
    Then, the sequential exchangeability step would follow by Lemma~\ref{lem:exch-short-no-strong}, which only requires standard sequential randomization in Assumption~\ref{asmp:standard-seq-exch}.  Meanwhile, identification of $\bbE\{ D_{f_t}(a_t) \}$ would only require the NPSEM assumption and possibly positivity depending on the weight function, but no exchangeability assumption.
\end{remark}

\subsection{Proposition~\ref{prop:simultaneous}}

\begin{proof}
    We have
    \begin{align*}
        \bbE \big\{ Y(\overline D_T) \big\} &= \bbE \Big[ \sum_{\overline b_T} \bbE \{ Y(\overline b_T) \mid \overline D_T = \overline b_T, \overline A_T, \Pi_T \} \bbP ( \overline D_T = \overline b_T \mid \overline A_T, \Pi_T ) \Big] \\
        &= \bbE \Big[ \sum_{\overline b_T} \bbE \{ Y(\overline b_T) \mid \overline A_T, \Pi_T \} \bbP ( \overline D_T = \overline b_T \mid \overline A_T, \Pi_T  ) \Big]
    \end{align*}
    where the first line follows by iterated expectations on $\overline A_T, \Pi_T $ and then on $\overline D_T, \overline A_T, \Pi_T$ and taking the expectation over $\overline D_T$, and the second line follows because $\overline D_T \ind Y(\overline b_T) \mid \overline A_T, \Pi_T $ because $\overline D_T$ is only random via $V$ conditional on $\overline A_T, \Pi_T$.

    \bigskip
    
    \noindent The same holds for $\bbE \big\{ Y(\overline D_T^\prime) \big\}$. Then, 
    \[
    \bbE \{ Y(\overline D_T) - Y(\overline D_T^\prime) \} = \sum_{\overline b_T} \bbE \left[ \bbE \{ Y(\overline b_T) \mid \overline A_T, \overline X_T \} \big\{ \bbP(\overline D_T = \overline b_T \mid \overline A_T, \Pi_T) - \bbP(\overline D_T^\prime = \overline b_T \mid \overline A_T, \Pi_T) \big\} \right].
    \]
    For the propensity scores, we have
    \begin{align*}
        \bbP ( \overline D_T = \overline a_T \mid \overline A_T, \Pi_T  ) &= \one(\overline A_T = \overline a_T) + \one(\overline A_T \neq \overline a_T) \Pi_T, \\
        \bbP ( \overline D_T = \overline b_T \mid \overline A_T, \Pi_T  ) &= \one(\overline A_T = \overline b_T) (1 - \Pi_T) \text{ for } \overline b_T \neq \overline a_T, \\
        \bbP ( \overline D_T^\prime = \overline a_T^\prime \mid \overline A_T, \Pi_T  ) &= \one(\overline A_T = \overline a_T^\prime) + \one(\overline A_T \neq \overline a_T^\prime) \Pi_T \text{, and} \\
        \bbP ( \overline D_T = \overline b_T \mid \overline A_T, \Pi_T  ) &= \one(\overline A_T = \overline b_T) (1 - \Pi_T) \text{ for } \overline b_T \neq \overline a_T^\prime.
    \end{align*} 
    Plugging these results into the prior display yields
    \[
    \bbE \{ Y(\overline D_T) - Y(\overline D_T^\prime) \} = \bbE \left( \Big[ \bbE \{ Y(\overline a_T) \mid \overline A_T, \Pi_T \} - \bbE \{ Y(\overline a_T) \mid \overline A_T, \Pi_T \} \Big] \Pi_T \right).
    \]
    The result follows by iterated expectations on $\overline A_T, \Pi_T$.  The same argument holds for the denominator of \eqref{eq:cross-world}.
\end{proof}

\section{Proofs for Section~\ref{sec:efficiency}} \label{app:est}

\begin{remark}
    Throughout this section, we assume all estimated nuisance functions are fixed.
\end{remark}

\subsection{Helper lemmas of efficient influence function of $Q_t$}

We begin with several general helper lemmas.

\begin{lemma}
    Under the setup of Proposition~\ref{prop:eif}, $\varphi_m(Z)$ and $\varphi_Q(Z)$ are mean-zero.
\end{lemma}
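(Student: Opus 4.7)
The proof reduces to two applications of iterated expectation, leveraging the martingale-like structure of each summand.

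For $\varphi_m(Z)$, I would fix $t \in \{0, 1, \ldots, T\}$ and condition on $(A_t, H_t)$. The prefactor $\prod_{s=0}^{t} r_s(A_s \mid H_s)$ is measurable with respect to $\sigma(A_t, H_t)$, since $H_t$ contains $\overline A_{t-1}$ and $\overline X_t$. The bracketed term $\sum_{b_{t+1}} m_{t+1}(b_{t+1}, H_{t+1}) Q_{t+1}(b_{t+1} \mid H_{t+1}) - m_t(A_t, H_t)$ has conditional mean zero given $(A_t, H_t)$ by the recursive definition of the sequential regression $m_t$ in \eqref{eq:seq-reg}. Hence, by the tower property, each summand in $\varphi_m$ has mean zero, and therefore so does $\varphi_m$.

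For $\varphi_Q(Z)$, I would fix $t \in \{1, \ldots, T\}$ and condition on $H_t$. The prefactor $\prod_{s=1}^{t-1} r_s(A_s \mid H_s)$ and each coefficient $m_t(b_t, H_t)$ are $H_t$-measurable. Within $\phi_t(b_t; A_t, H_t)$, the factor in square brackets and the sign factor $\{2\one(b_t = a_t) - 1\}$ depend only on $H_t$, while the middle factor $\one(A_t = a_t) - \bbP(A_t = a_t \mid H_t)$ has conditional mean zero given $H_t$ by definition of the propensity score. Consequently, $\bbE\{\phi_t(b_t; A_t, H_t) \mid H_t\} = 0$ for every $b_t$, which makes the inner sum over $b_t$ conditionally mean zero. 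Applying iterated expectation then shows that each term, and thus $\varphi_Q$, is mean zero.

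The routine nature of the argument means there is no real obstacle; the only care required is in verifying the measurability of the prefactors with respect to the conditioning sigma-algebra at each step, which follows immediately from the definition $H_t = (\overline X_t, \overline A_{t-1})$ and the fact that $r_s$ and $Q_s$ are functions of $(A_s, H_s)$.
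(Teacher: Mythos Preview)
Your proposal is correct and follows essentially the same approach as the paper, which simply states that the result follows by iterated expectations on $H_t$. Your argument is more detailed---in particular you correctly note that for $\varphi_m$ the natural conditioning is on $(A_t,H_t)$ rather than $H_t$ alone---but the underlying idea is identical.
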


\begin{proof}
    This follows by iterated expectations on $H_t$.
\end{proof}

The next two lemmas are about the efficient influence function of $\bbE \{ Q_t(b_t \mid H_t ) \}$ and its estimator as constructed in the body of the paper.  

\begin{lemma} \label{lem:phi-first-order}
    Under the setup of Proposition~\ref{prop:eif},
    \[
    \bbE \{ \phi_t(b_t; A_t, H_t) \mid H_t \} = 0.
    \]
    and
    \begin{align*}
        \bbE \{ \widehat \phi_t (b_t; A_t, H_t) \mid H_t \} = &\ \Big\{  2 \one(b_t = a_t) - 1 \Big\}  \Big\{ \bbP(A_t = a_t) - \widehat \bbP(A_t = a_t \mid H_t) \Big\} \\
        &\cdot \Big[ 1 - s_t\{ \widehat \bbP(A_t = a_t \mid H_t) \} + s_t^\prime \{ \widehat \bbP(A_t = a_t \mid H_t) \} \big\{ 1 - \widehat \bbP(A_t = a_t \mid H_t) \big\} \Big].
    \end{align*}
\end{lemma}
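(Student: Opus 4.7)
The plan is to treat both parts as direct conditional-expectation calculations by exploiting the factorization of $\phi_t$. Observe that in the expression
\[
\phi_t(b_t; A_t, H_t) = \Big\{ 2\one(b_t = a_t) - 1 \Big\}\Big\{ \one(A_t = a_t) - \bbP(A_t = a_t \mid H_t) \Big\}\,g\{\bbP(A_t = a_t \mid H_t)\},
\]
where
\[
g(u) = 1 - s_t(u) + s_t^\prime(u)(1-u),
\]
the first factor $\{2\one(b_t = a_t) - 1\}$ is deterministic (it depends only on the fixed target $a_t$ and the argument $b_t$), and the last factor $g\{\bbP(A_t = a_t \mid H_t)\}$ is $\sigma(H_t)$-measurable. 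Thus both pull out of the conditional expectation given $H_t$, and I only need to compute $\bbE\{\one(A_t = a_t) - \bbP(A_t = a_t \mid H_t) \mid H_t\}$.

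For the first identity, that residual expectation equals $\bbP(A_t = a_t \mid H_t) - \bbP(A_t = a_t \mid H_t) = 0$, so $\bbE\{\phi_t(b_t; A_t, H_t) \mid H_t\} = 0$ follows immediately.

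For the second identity, $\widehat\phi_t$ is defined by substituting $\widehat\bbP(A_t = a_t \mid H_t)$ everywhere $\bbP(A_t = a_t \mid H_t)$ appears, so the same factorization holds with $g\{\widehat\bbP(A_t = a_t \mid H_t)\}$ in place of $g\{\bbP(A_t = a_t \mid H_t)\}$. The only change is in the residual factor:
\[
\bbE\Big\{\one(A_t = a_t) - \widehat\bbP(A_t = a_t \mid H_t) \,\Big|\, H_t\Big\} = \bbP(A_t = a_t \mid H_t) - \widehat\bbP(A_t = a_t \mid H_t),
\]
since $\widehat\bbP$ is treated as fixed (per the remark at the top of this section). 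Multiplying this by the two $H_t$-measurable factors that came out of the conditional expectation yields the stated formula.

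The main step is simply recognizing which pieces of $\phi_t$ depend on $A_t$ versus $H_t$ alone; there is no real obstacle, only careful bookkeeping of the sign factor and the $g(\cdot)$ term under the substitution $\bbP \to \widehat\bbP$. I would note in passing that the ``$\bbP(A_t = a_t)$'' in the statement should be read as $\bbP(A_t = a_t \mid H_t)$, which is the object produced by the conditional expectation calculation above.
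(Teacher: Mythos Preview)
Your proof is correct and follows the same approach as the paper, which simply states that both identities ``follow by iterated expectations on $H_t$.'' Your more explicit bookkeeping---factoring out the deterministic sign and the $H_t$-measurable $g(\cdot)$ term, then evaluating the residual $\bbE\{\one(A_t=a_t)\mid H_t\}$---is exactly what that one-line proof is abbreviating, and your observation about the missing ``$\mid H_t$'' in the displayed statement is also right.
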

\begin{proof}
    These follow by iterated expectations on $H_t$.
\end{proof}

In the next lemma we omit arguments for brevity, so that $\bbP_t = \bbP(A_t = a_t \mid H_t)$ and $s_t = s_t\{ \bbP(A_t = a_t \mid H_t) \}$ and $\widehat \bbP_t$ and $\widehat s_t(a_t)$ are defined similarly. 
\begin{lemma} \label{lem:phi-second-order}
    Under the setup of Proposition~\ref{prop:eif},
    \begin{align*}
        &\bbE \{ \widehat \phi_t(b_t; A_t, H_t) + \widehat Q_t(b_t \mid H_t) - Q_t(b_t \mid H_t) \mid H_t \} = \\
        &= \Big\{ 2 \one(b_t = a_t) - 1 \Big\} \left( \left[ \frac{\widehat s_t^{\prime \prime}}{2} \left( \widehat \bbP_t - \bbP_t \right)^2 + o \big\{  ( \widehat \bbP_t -  \bbP_t )^2 \big\} \right] \left( \widehat \bbP_t - 1 \right) + \left( \widehat s - s \right) \left( \widehat \bbP_t - \bbP_t \right) \right)  
    \end{align*}
\end{lemma}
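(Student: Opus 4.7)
The plan is to compute the conditional expectation directly using Lemma~\ref{lem:phi-first-order} and the explicit form of $Q_t$ from \eqref{eq:target-prop-score}, then collapse the first-order terms via Taylor expansion of $s_t$. First I would treat the case $b_t = a_t$; the case $b_t = 1 - a_t$ will follow by symmetry since $Q_t(1 - a_t \mid H_t) = 1 - Q_t(a_t \mid H_t)$ (and the same for $\widehat Q_t$), while the sign factor $\{2\one(b_t = a_t) - 1\}$ in the definition of $\widehat \phi_t$ flips, giving the same identity up to an overall sign.

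For $b_t = a_t$, Lemma~\ref{lem:phi-first-order} gives $\bbE\{\widehat\phi_t(a_t; A_t, H_t) \mid H_t\} = (\bbP_t - \widehat\bbP_t)[1 - \widehat s_t + \widehat s_t'(1 - \widehat\bbP_t)]$, and the definition \eqref{eq:target-prop-score} gives $\widehat Q_t(a_t \mid H_t) - Q_t(a_t \mid H_t) = (\widehat\bbP_t - \bbP_t) + \widehat s_t(1 - \widehat\bbP_t) - s_t(1-\bbP_t)$. Setting $\delta = \widehat\bbP_t - \bbP_t$ and adding the two pieces, the $\pm \delta$ terms cancel and I am left with
\[
\delta\,\widehat s_t - \delta\,\widehat s_t'(1-\widehat\bbP_t) + \bigl[g(\widehat\bbP_t) - g(\bbP_t)\bigr],
\]
where $g(x) := s_t(x)(1-x)$. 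Since $s_t$ is twice differentiable with bounded derivatives, $g$ admits a second-order Taylor expansion around $\widehat\bbP_t$ with $g'(\widehat\bbP_t) = \widehat s_t'(1-\widehat\bbP_t) - \widehat s_t$ and $g''(\widehat\bbP_t) = \widehat s_t''(1-\widehat\bbP_t) - 2\widehat s_t'$, yielding $g(\widehat\bbP_t) - g(\bbP_t) = g'(\widehat\bbP_t)\delta - \tfrac{1}{2}g''(\widehat\bbP_t)\delta^2 + o(\delta^2)$. Substituting cancels the remaining first-order-in-$\delta$ terms and leaves
\[
\widehat s_t'\,\delta^2 + \frac{\widehat s_t''(\widehat\bbP_t - 1)}{2}\delta^2 + o(\delta^2).
\]

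To match the target expression, I would then rewrite $\widehat s_t'\,\delta^2$ as $(\widehat s_t - s_t)\delta$ up to $o(\delta^2)$: the same Taylor expansion of $s_t$ around $\widehat\bbP_t$ gives $\widehat s_t - s_t = \widehat s_t'\delta - \tfrac{1}{2}\widehat s_t''\delta^2 + o(\delta^2)$, so $(\widehat s_t - s_t)\delta = \widehat s_t'\delta^2 + o(\delta^2)$. Packaging this with the quadratic term produces exactly $\bigl[\tfrac{\widehat s_t''}{2}\delta^2 + o(\delta^2)\bigr](\widehat\bbP_t - 1) + (\widehat s_t - s_t)\delta$, which is the claimed identity (with prefactor $+1 = 2\one(b_t = a_t) - 1$).

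The only real obstacle is bookkeeping: one needs to be careful that (i) the Taylor remainder is $o(\delta^2)$ uniformly in $H_t$, which follows from the assumed boundedness of the derivatives of $s_t$ together with continuity of $s_t''$, and (ii) the regrouping of first-order-in-$\delta$ terms vanishes exactly, not just up to $o(\delta)$. Once the $b_t = a_t$ identity is established, the $b_t = 1-a_t$ case requires no additional work: both $\widehat\phi_t$ and $\widehat Q_t - Q_t$ pick up an overall minus sign relative to the $b_t = a_t$ computation, matching the $\{2\one(b_t = a_t)-1\} = -1$ prefactor on the right-hand side.
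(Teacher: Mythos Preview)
Your proposal is correct and follows essentially the same route as the paper: compute $\bbE\{\widehat\phi_t \mid H_t\}$ from Lemma~\ref{lem:phi-first-order}, add $\widehat Q_t - Q_t$ using \eqref{eq:target-prop-score}, cancel first-order terms, and finish with a second-order Taylor expansion of $s_t$. The only organizational difference is that the paper factors the sum algebraically as $(1-\widehat\bbP_t)\{\widehat s_t'(\bbP_t-\widehat\bbP_t)+\widehat s_t - s_t\} + (\widehat s_t - s_t)(\widehat\bbP_t-\bbP_t)$ \emph{before} Taylor-expanding, so a single expansion of $s_t$ suffices, whereas you Taylor-expand $g(x)=s_t(x)(1-x)$ first and then undo one piece to recover $(\widehat s_t - s_t)\delta$; both arrive at the same identity.
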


\begin{proof}
    First, note that by definition $Q_t(b_t \mid H_t) = \bbP(A_t = b_t \mid H_t) + \Big\{ 2\one(b_t = a_t)  - 1 \Big\} s_t\{ \bbP(A_t = a_t \mid H_t) \{ 1 - \bbP(A_t = a_t \mid H_t) \}$ and therefore
    \[
    \bbE \left\{ \widehat Q_t(b_t \mid H_t) - Q_t(b_t \mid H_t) \mid H_t \right\} = \Big\{ 2\one(b_t = a_t)  - 1 \Big\} \left\{ \widehat \bbP_t - \bbP_t + \widehat s_t \cdot ( 1- \widehat \bbP_t) - s_t \cdot ( 1 - \bbP_t) \right\},
    \]
    where we omit arguments on the right-hand side.  Therefore, by iterated expectations and rearranging, we have
    \begin{align*}
        &\bbE \{ \widehat \phi_t(b_t; A_t, H_t) + \widehat Q_t(b_t \mid H_t) - Q_t(b_t \mid H_t) \mid H_t \} \\
        &= \Big\{2 \one(b_t = a_t) - 1 \Big\} \left[ (\bbP_t - \widehat \bbP_t) \left\{ 1 - \widehat s_t + \widehat s_t^\prime \cdot ( 1- \widehat \bbP_t) \right\} + \widehat s_t \cdot (1- \widehat \bbP_t) - s_t \cdot (1-\bbP_t) + \widehat \bbP_t - \bbP_t \right] \\
        &= \Big\{2 \one(b_t = a_t) - 1 \Big\} \left[ \widehat s_t^\prime \cdot (\bbP_t - \widehat \bbP_t) + \widehat s_t - s_t  - \widehat \bbP_t \left\{ \widehat s_t^\prime \cdot (\bbP_t - \widehat \bbP_t) + \widehat s_t - s_t  \right\} - s_t \cdot \widehat \bbP_t - \widehat s_t \cdot (\bbP_t - \widehat \bbP_t) + s_t \cdot \bbP_t \right] \\
        &= \Big\{2 \one(b_t = a_t) - 1 \Big\} \left[ (1 - \widehat \bbP_t) \left\{ \widehat s_t^\prime \cdot (\bbP_t - \widehat \bbP_t) + \widehat s_t - s_t \right\} + (\widehat s_t - s_t) (\widehat \bbP_t - \bbP_t)  \right]
    \end{align*}
    A second-order Taylor expansion of $s_t \{ \bbP(A_t = a_t \mid H_t) \}$ yields the result. Specifically,
    \[
    (1 - \widehat \bbP_t) \left\{ \widehat s_t^\prime \cdot (\bbP_t - \widehat \bbP_t) + \widehat s_t - s_t \right\} = (\widehat \bbP_t - 1) \left[ \frac{\widehat s_t^{\prime \prime}}{2} \left( \widehat \bbP_t - \bbP_t \right)^2 + o \left\{ (\widehat \bbP_t - \bbP_t)^2 \right\} \right]
    \]
\end{proof}

\subsection{Proposition~\ref{prop:eif} and Theorem~\ref{thm:flip-bias}}

Now, we turn to establishing Proposition~\ref{prop:eif} and Theorem~\ref{thm:flip-bias}. As discussed in the body of the paper, this can be established in two ways, by unwinding the error backwards-in-time or forwards-in-time. We start with lemmas for the backwards-in-time bound, which is similar to Lemmas 5 \& 6 in \citet{kennedy2019nonparametric}. We establish the result in full, for two reasons. First, for completeness. And second, our analysis yields a different bound on the bias. Then, we establish the forwards-in-time bound. This mirrors results in \citet{diaz2023nonparametric} and others, but is new because it accounts for estimating the $Q_t$.

\bigskip

In what follows, let $\widetilde m_t(A_t, H_t) = \bbE \left\{ \sum_{b_{t+1}} \widehat m_{t+1}(b_{t+1}, H_{t+1}) \widehat Q_{t+1}(b_{t+1} \mid H_{t+1}) \mid A_t, H_t \right\}$ as in the body of the paper.  In other words, $\widetilde m_t$ is the true sequential regression function at timepoint $t$ where all the future information is estimated.

\subsubsection{Backwards-in-time lemmas}

\begin{lemma} \label{lem:backwards-1}
    Under the setup of Proposition~\ref{prop:eif}, 
    \begin{align*}
        \bbE \{ \widehat \varphi_m(Z) \} &= m_0 - \widehat m_0 \\
        &+  \sum_{t=0}^{T} \bbE \left[ \left\{ \prod_{s=0}^{t} \widehat r_s(A_s \mid H_s) - \prod_{s=0}^{t} r_s(A_s \mid H_s) \right\} \left\{ \widetilde m_t(A_t, H_t) - \widehat m_t(A_t, H_t) \right\} \right] \\
        &+ \sum_{t=1}^{T} \bbE \left[ \left\{ \prod_{s=0}^{t-1} r_s(A_s \mid H_s) \right\} \sum_{b_t} \widehat m_t (b_t , H_t) \{ \widehat Q_t(b_t \mid H_t) - Q_t(b_t \mid H_t) \} \right]
    \end{align*}
\end{lemma}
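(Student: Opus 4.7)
My plan is to compute $\bbE\{\widehat\varphi_m(Z)\}$ by first collapsing the inner conditional expectations, then decomposing via $\widehat r_s = r_s + (\widehat r_s - r_s)$, and finally telescoping what remains. Concretely, for each summand of $\widehat\varphi_m(Z)$ I would condition on $(A_t, H_t)$ and use the definition $\widetilde m_t(A_t, H_t) = \bbE[\sum_{b_{t+1}} \widehat m_{t+1}(b_{t+1}, H_{t+1}) \widehat Q_{t+1}(b_{t+1} \mid H_{t+1}) \mid A_t, H_t]$ to obtain
\[
\bbE\{\widehat\varphi_m(Z)\} = \sum_{t=0}^{T} \bbE\left[\prod_{s=0}^{t} \widehat r_s(A_s \mid H_s)\{\widetilde m_t(A_t, H_t) - \widehat m_t(A_t, H_t)\}\right].
\]
Adding and subtracting $\prod_{s=0}^{t} r_s(A_s \mid H_s)$ inside the expectation splits this into the second line of the claimed identity (the $t=0$ contribution vanishes since $r_0 = \widehat r_0 = 1$) plus a residual $R := \sum_{t=0}^{T} \bbE[\prod_{s=0}^{t} r_s\{\widetilde m_t - \widehat m_t\}]$.

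The heart of the proof is to show $R$ equals $m_0 - \widehat m_0$ plus the $\widehat Q - Q$ sum on the third line of the claim. I would write $R = \sum_{t=0}^{T}(A_t - B_t)$ with $A_t := \bbE[\prod_{s=0}^{t} r_s \widetilde m_t]$ and $B_t := \bbE[\prod_{s=0}^{t} r_s \widehat m_t]$, and establish the telescoping identity
\[
A_t - B_{t+1} = \bbE\left[\prod_{s=0}^{t} r_s \sum_{b_{t+1}} \widehat m_{t+1}(b_{t+1}, H_{t+1})\{\widehat Q_{t+1}(b_{t+1} \mid H_{t+1}) - Q_{t+1}(b_{t+1} \mid H_{t+1})\}\right],
\]
which follows by iterated expectations on $H_{t+1}$ in both terms: in $A_t$, pulling out the $\widetilde m_t$ conditional expectation recovers $\sum_{b_{t+1}} \widehat m_{t+1}\widehat Q_{t+1}$ evaluated at $H_{t+1}$; in $B_{t+1}$, averaging $r_{t+1}(A_{t+1}\mid H_{t+1})$ over $A_{t+1}$ gives $\sum_{b_{t+1}} \widehat m_{t+1} Q_{t+1}$ via $Q_{t+1}(b\mid h) = r_{t+1}(b\mid h)\bbP(A_{t+1} = b \mid h)$.

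To close the argument I would verify $B_0 = \widehat m_0$ (immediate since $r_0 = 1$ and $\widehat m_0$ is a constant) and $A_T = m_0$; the latter uses $\widetilde m_T = m_T$ because $\widehat m_{T+1} = Y$ and $\widehat Q_{T+1} = 1$ force $\widetilde m_T(A_T, H_T) = \bbE(Y \mid A_T, H_T) = m_T(A_T, H_T)$, after which iterated expectations on $H_t$ combined with the recursion in \eqref{eq:seq-reg} collapse $\bbE[\prod_{s=0}^{t} r_s m_t]$ to $\bbE[\prod_{s=0}^{t-1} r_s m_{t-1}]$ at each step, yielding $A_T = m_0$ by backwards induction. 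Rearranging $\sum_{t=0}^T (A_t - B_t) = A_T - B_0 + \sum_{t=0}^{T-1}(A_t - B_{t+1})$ and re-indexing $u = t+1$ gives exactly the target expression. The main obstacle is purely notational bookkeeping: the boundary cases at $t=0$ and $t=T$, the index shift in the telescoping sum, and keeping the three variants (hatted, tilded, undecorated) straight. No rates, smoothness, or boundedness conditions enter, since the identity is an exact algebraic rearrangement valid for any fixed nuisance estimates.
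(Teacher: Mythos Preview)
Your proposal is correct and follows essentially the same route as the paper's proof: both first collapse via iterated expectations on $(A_t,H_t)$ to obtain $\sum_{t=0}^{T}\bbE[\prod_{s=0}^{t}\widehat r_s\{\widetilde m_t-\widehat m_t\}]$, then add and subtract $\prod_{s=0}^{t} r_s$, and finally telescope the $\prod r_s$ piece by pairing the $\widetilde m_t$ term at time $t$ with the $\widehat m_{t+1}$ term at time $t{+}1$ to produce the $\widehat Q_t-Q_t$ sum. The only cosmetic difference is that the paper invokes the IPW identification $\bbE[\prod_{s} r_s\, m_T]=\psi$ directly for the boundary term $A_T=m_0$, whereas you propose a backward induction via the sequential-regression recursion; both arguments are valid and equivalent.
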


\begin{proof}
    We have
    \begin{align*}
        &\bbE\{ \widehat \varphi_m(Z) \} = \bbE \left[ \sum_{t=T}^{0} \left\{ \prod_{s=0}^{t} \widehat r_s(A_s \mid H_s) \right\} \left\{ \sum_{b_{t+1}} \widehat m_{t+1}(b_{t+1}, H_{t+1}) \widehat Q_{t+1}(b_{t+1} \mid H_{t+1}) - \widehat m_t(A_t, H_t) \right\} \right] \\
        &= \bbE \left[ \sum_{t=T}^{0} \left\{ \prod_{s=0}^{t} \widehat r_s(A_s \mid H_s) \right\} \left\{ \widetilde m_t(A_t, H_t) - \widehat m_t(A_t, H_t) \right\} \right] \\
        &= \bbE \left[ \sum_{t=T}^{0} \left\{ \prod_{s=0}^{t} \widehat r_s(A_s \mid H_s) - \prod_{s=0}^{t} r_s(A_s \mid H_s) \right\} \left\{ \widetilde m_t(A_t, H_t) - \widehat m_t(A_t, H_t) \right\} \right] \\
        &+ \bbE \left[ \sum_{t=T}^{0} \left\{ \prod_{s=0}^{t} r_s(A_s \mid H_s) \right\} \left\{ \widetilde m_t(A_t, H_t) - \widehat m_t(A_t, H_t) \right\} \right]
    \end{align*}
    where the first equality follows by definition, the second by the definition of $\widetilde m_t(A_t, H_t)$ and iterated expectations on $A_t, H_t$, and the third by adding and subtracting $\prod_{s=0}^{t} r_s(A_s \mid H_s)$. On the RHS of the final equality, the first line is second-order. Focusing on the final line in the above display, notice first that the first and last summands in the overall sum can be isolated and the sum can be re-written as
    \begin{align*}
        &\bbE \left[ \sum_{t=T}^{0} \left\{ \prod_{s=0}^{t} r_s(A_s \mid H_s) \right\} \left\{ \widetilde m_t(A_t, H_t) - \widehat m_t(A_t, H_t) \right\} \right] \\
        &= \bbE \left[ \left\{ \prod_{t=0}^{T} r_s(A_s \mid H_s) \right\} \widetilde m_T(A_T, H_T) \right] \\
        &+ \sum_{t=T-1}^{0} \bbE \left[ \left\{ \prod_{s=0}^{t} r_s(A_s \mid H_s) \right\} \widetilde m_t(A_t, H_t) - \left\{ \prod_{s=0}^{t+1} r_s(A_s \mid H_s) \right\} \widehat m_{t+1}(A_{t+1}, H_{t+1}) \right] \\
        &- \widehat m_0
    \end{align*}
    The first term equals $\psi$ because $\widetilde m_T(A_T, H_T) = \bbE(Y \mid A_T, H_T)$ and the last term is $\widehat m_0$. Meanwhile, the middle term in the above display simplifies because
    \begin{align*}
        \bbE &\left[ \left\{ \prod_{s=0}^{t+1} r_s(A_s \mid H_s) \right\} \widehat m_{t+1}(A_{t+1}, H_{t+1}) \right] \\
        &= \bbE \left[ \left\{ \prod_{s=0}^{t} r_s(A_s \mid H_s) \right\} \bbE \left\{ \sum_{b_{t+1}} \widehat m_{t+1} (b_{t+1}, H_{t+1}) Q_{t+1}(b_{t+1} \mid H_{t+1}) \mid A_t, H_t \right\} \right] 
    \end{align*}
    by iterated expectations on $A_t, H_t$. Combining like terms and the definition of $\widetilde m_t$ yield
    \begin{align*}
        &\sum_{t=T-1}^{0} \bbE \left[ \left\{ \prod_{s=0}^{t} r_s(A_s \mid H_s) \right\} \widetilde m_t(A_t, H_t) - \left\{ \prod_{s=0}^{t+1} r_s(A_s \mid H_s) \right\} \widehat m_{t+1}(A_{t+1}, H_{t+1}) \right] \\
        &= \sum_{t=T-1}^{0} \bbE \Bigg[ \left\{ \prod_{s=0}^{t} r_s(A_s \mid H_s) \right\} \bbE \bigg\{ \sum_{b_{t+1}} \widehat m_{t+1} (b_{t+1} , H_{t+1}) \{ \widehat Q_{t+1}(b_{t+1} \mid H_{t+1}) \\
        &\hspace{4in}- Q_{t+1}(b_{t+1} \mid H_{t+1}) \} \mid A_t, H_t \bigg\} \Bigg] \\
        &= \sum_{t=1}^{T} \bbE \left[ \left\{ \prod_{s=0}^{t-1} r_s(A_s \mid H_s) \right\} \sum_{b_t} \widehat m_t (b_t , H_t) \{ \widehat Q_t(b_t \mid H_t) - Q_t(b_t \mid H_t) \} \right]
    \end{align*}
    where the last line follows by re-indexing the sum and iterated expectations on $A_t, H_t$. 
\end{proof}

\begin{lemma} 
    Under the setup of Proposition~\ref{prop:eif}, 
    \begin{align*}
        \bbE \{ \widehat \varphi_Q(Z) \} &= \bbE \left[ \sum_{t=T}^{1} \left\{ \prod_{s=1}^{t-1} \widehat r_s(A_s \mid H_s) \right\} \sum_{b_t} \widehat m_t(b_t, H_t) \bbE \{ \widehat \phi_t(b_t; A_t, H_t) \mid H_t \} \right] \\
        &+ \bbE \left[ \sum_{t=T}^{1} \left\{ \prod_{s=1}^{t-1} \widehat r_s(A_s \mid H_s) - \prod_{s=1}^{t-1} r_s(A_s \mid H_s) \right\} \sum_{b_t} \widehat m_t(b_t, H_t) \bbE\{ \widehat \phi_t(b_t; A_t, H_t) \mid H_t \} \right] \\
        &\hspace{-0.5in}+ \bbE \left( \sum_{t=T}^{1} \left\{ \prod_{s=1}^{t-1} r_s(A_s \mid H_s) \right\} \sum_{b_t} \widehat m_t(b_t, H_t) \Big[ \bbE\{ \widehat \phi_t(b_t; A_t, H_t) \mid H_t \} + \widehat Q_t(b_t \mid H_t) - Q_t(b_t \mid H_t) \Big] \right) \\
        &+ \bbE \left[ \sum_{t=T}^{1} \left\{ \prod_{s=1}^{t-1} r_s(A_s \mid H_s) \right\} \sum_{b_t} \widehat m_t(b_t \mid H_t) \Big\{ Q_t(b_t \mid H_t) - \widehat Q_t(b_t \mid H_t) \Big\} \right]
    \end{align*}
\end{lemma}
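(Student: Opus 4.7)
The plan is to compute $\bbE\{\widehat\varphi_Q(Z)\}$ directly via iterated expectations and then perform two additive decompositions. Because $\prod_{s=1}^{t-1}\widehat r_s(A_s\mid H_s)$ and $\widehat m_t(b_t,H_t)$ are $\sigma(H_t)$-measurable, the tower property applied inside each summand of $\widehat\varphi_Q(Z)$ pulls these factors out of the inner conditional expectation and leaves $\bbE\{\widehat\phi_t(b_t;A_t,H_t)\mid H_t\}$ in place of $\widehat\phi_t(b_t;A_t,H_t)$. This immediately produces the first line on the right-hand side of the stated identity.

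Next I would add and subtract $\prod_{s=1}^{t-1} r_s(A_s\mid H_s)$ inside each timepoint summand. The contribution carrying the running-product error $\widehat r^{(t-1)}-r^{(t-1)}$ becomes the second line, and the contribution weighted by the true product $r^{(t-1)}$ remains as $\sum_t \bbE[\prod_s r_s\sum_{b_t}\widehat m_t\bbE\{\widehat\phi_t\mid H_t\}]$. I would then rewrite this remaining piece by adding and subtracting $\widehat Q_t(b_t\mid H_t)-Q_t(b_t\mid H_t)$ inside the $b_t$-sum, which splits it into the $(\bbE\{\widehat\phi_t\mid H_t\}+\widehat Q_t-Q_t)$ contribution (the third line) and the leftover $(Q_t-\widehat Q_t)$ contribution (the fourth line). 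Collecting these three manipulations gives the four-line decomposition as stated.

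No serious obstacle is expected: the argument is entirely algebraic, relying only on the $\sigma(H_t)$-measurability of the fitted nuisances (so that the tower property works) and on careful sign-tracking. The reason for producing precisely this grouping is downstream. The fourth line is the first-order piece that will cancel, on sign, against the analogous $\sum_t \bbE[\prod_s r_s\sum_{b_t}\widehat m_t(\widehat Q_t-Q_t)]$ term in $\bbE\{\widehat\varphi_m(Z)\}$ from Lemma~\ref{lem:backwards-1}. Lemma~\ref{lem:phi-second-order} then shows that the inner quantity $\bbE\{\widehat\phi_t\mid H_t\}+\widehat Q_t-Q_t$ in the third line is genuinely second-order in $\|\widehat\pi_t-\pi_t\|$, while a Cauchy--Schwarz bound on the second line delivers the product-type errors $\|\widehat\pi_s-\pi_s\|$ multiplied by $\|\widehat m_t - m_t\|$ (or, alternatively, by $\bbE\{\widehat\phi_t\mid H_t\}$ which is itself first-order in $\|\widehat\pi_t-\pi_t\|$ by Lemma~\ref{lem:phi-first-order}) needed for the bias bound in Theorem~\ref{thm:flip-bias}.
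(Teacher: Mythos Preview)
Your proposal is correct and follows essentially the same approach as the paper: take iterated expectations on $H_t$ to replace $\widehat\phi_t$ by $\bbE\{\widehat\phi_t\mid H_t\}$, then add and subtract $\prod_{s=1}^{t-1} r_s$ and $\widehat Q_t - Q_t$ to produce the remaining lines. Your downstream remarks about which pieces are second-order and which one cancels against the corresponding term in Lemma~\ref{lem:backwards-1} also match how the paper uses this lemma in Lemma~\ref{lem:eif-second-order} and Theorem~\ref{thm:flip-bias}.
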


\begin{proof}
    We have
    \begin{align*}
        &\bbE \{ \widehat \varphi_Q(Z) \} = \bbE \left[ \sum_{t=T}^{1} \left\{ \prod_{s=1}^{t-1} \widehat r_s(A_s \mid H_s) \right\} \sum_{b_t} \widehat m_t(b_t, H_t) \widehat \phi_t(b_t; A_t, H_t) \right] \\
        &= \bbE \left[ \sum_{t=T}^{1} \left\{ \prod_{s=1}^{t-1} \widehat r_s(A_s \mid H_s) \right\} \sum_{b_t} \widehat m_t(b_t, H_t) \bbE \{ \widehat \phi_t(b_t; A_t, H_t) \mid H_t \} \right] \\
        &= \bbE \left[ \sum_{t=T}^{1} \left\{ \prod_{s=1}^{t-1} \widehat r_s(A_s \mid H_s) - \prod_{s=1}^{t-1} r_s(A_s \mid H_s) \right\} \sum_{b_t} \widehat m_t(b_t, H_t) \bbE\{ \widehat \phi_t(b_t; A_t, H_t) \mid H_t \} \right] \\
        &+ \bbE \left( \sum_{t=T}^{1} \left\{ \prod_{s=1}^{t-1} r_s(A_s \mid H_s) \right\} \sum_{b_t} \widehat m_t(b_t, H_t) \Big[ \bbE\{ \widehat \phi_t(b_t; A_t, H_t) \mid H_t \} + \widehat Q_t(b_t \mid H_t) - Q_t(b_t \mid H_t) \Big] \right) \\
        &+ \bbE \left[ \sum_{t=T}^{1} \left\{ \prod_{s=1}^{t-1} r_s(A_s \mid H_s) \right\} \sum_{b_t} \widehat m_t(b_t \mid H_t) \Big\{ Q_t(b_t \mid H_t) - \widehat Q_t(b_t \mid H_t) \Big\} \right]
    \end{align*}
    where the second equality follows by adding zero several times.
\end{proof}

\begin{lemma} \label{lem:eif-second-order} 
    Under the setup of Proposition~\ref{prop:eif}, 
    \begin{align*}
        \bbE \{ \widehat \varphi(Z) \} &= m_0 - \widehat m_0 \\
        &+ \sum_{t=0}^{T} \bbE \left[ \left\{ \prod_{s=0}^{t} \widehat r_s(A_s \mid H_s) - \prod_{s=0}^{t} r_s(A_s \mid H_s) \right\} \left\{ \widetilde m_t(A_t, H_t) - \widehat m_t(A_t, H_t) \right\} \right] \\
        &+ \sum_{t=1}^{T} \bbE \left[ \left\{ \prod_{s=1}^{t-1} \widehat r_s(A_s \mid H_s) - \prod_{s=1}^{t-1} r_s(A_s \mid H_s) \right\} \sum_{b_t} \widehat m_t(b_t, H_t) \bbE\{ \widehat \phi_t(b_t; A_t, H_t) \mid H_t \} \right] \\
        &+ \sum_{t=1}^{T} \bbE \left(  \left\{ \prod_{s=1}^{t-1} r_s(A_s \mid H_s) \right\} \sum_{b_t} \widehat m_t(b_t, H_t) \Big[ \bbE\{ \widehat \phi_t(b_t; A_t, H_t) \mid H_t \} + \widehat Q_t(b_t \mid H_t) - Q_t(b_t \mid H_t) \Big] \right).
    \end{align*}
\end{lemma}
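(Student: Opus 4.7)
The plan is to prove Lemma~\ref{lem:eif-second-order} by straightforward bookkeeping: since $\widehat\varphi(Z) = \widehat\varphi_m(Z) + \widehat\varphi_Q(Z)$, I simply add the conclusions of the two preceding lemmas and identify terms that cancel. There is no new probabilistic content here; the only real ``move'' is recognizing that the $\widehat Q_t - Q_t$ pieces from the two expansions appear with opposite signs and identical surrounding weights, so they telescope away.

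More concretely, I would first write out $\bbE\{\widehat\varphi(Z)\} = \bbE\{\widehat\varphi_m(Z)\} + \bbE\{\widehat\varphi_Q(Z)\}$ using the two preceding lemmas verbatim. From the first lemma the ``$m_0 - \widehat m_0$'' term and the $\sum_{t=0}^{T}$ second-order difference-of-products-of-ratios term copy over unchanged (these give the first two summands in the target expression). The remaining term from the first lemma is
\[
\sum_{t=1}^{T} \bbE \left[ \left\{ \prod_{s=0}^{t-1} r_s(A_s \mid H_s) \right\} \sum_{b_t} \widehat m_t (b_t, H_t) \{ \widehat Q_t(b_t \mid H_t) - Q_t(b_t \mid H_t) \} \right],
\]
and since $r_0 \equiv 1$, the product may be re-indexed as $\prod_{s=1}^{t-1} r_s(A_s \mid H_s)$. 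From the second lemma, I would keep the term involving $\prod \widehat r_s - \prod r_s$ and $\bbE\{\widehat\phi_t \mid H_t\}$ (this gives the third summand in the target expression) and the term whose bracket is $\bbE\{\widehat\phi_t \mid H_t\} + \widehat Q_t - Q_t$ (which will give the fourth summand). The last remaining piece from the second lemma is
\[
\sum_{t=1}^{T} \bbE \left[ \left\{ \prod_{s=1}^{t-1} r_s(A_s \mid H_s) \right\} \sum_{b_t} \widehat m_t(b_t, H_t) \big\{ Q_t(b_t \mid H_t) - \widehat Q_t(b_t \mid H_t) \big\} \right],
\]
which is precisely the negative of the re-indexed leftover from the first lemma, so the two cancel.

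After this cancellation, the surviving terms are exactly those listed on the right-hand side of the lemma statement, so the identity holds. The step I expect to require the most care is not a calculation but the indexing: making sure the convention $r_0 = 1$ and the re-indexing $\prod_{s=0}^{t-1} r_s = \prod_{s=1}^{t-1} r_s$ are applied correctly so the $\widehat Q_t - Q_t$ terms line up for cancellation, and that the outer sums (written variously as $\sum_{t=T}^{1}$ and $\sum_{t=1}^{T}$) are treated as identical sums over $t \in \{1,\ldots,T\}$. Once the indexing is aligned, the proof is a one-line cancellation.
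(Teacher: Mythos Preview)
Your proposal is correct and matches the paper's own proof essentially verbatim: the paper simply states that ``the final lines in the display in the previous two lemmas cancel, yielding the result.'' Your careful tracking of the indexing convention $r_0 = 1$ and the sign reversal in the $\widehat Q_t - Q_t$ pieces is exactly the bookkeeping needed, and there is nothing more to it.
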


\begin{proof}
    The final lines in the display in the previous two lemmas cancel, yielding the result.
\end{proof}

\subsubsection{Forwards-in-time lemmas}

\begin{lemma} \label{lem:forward-m-error}
    Under the setup of Proposition~\ref{prop:eif},
    \begin{align*}
        &\bbE \{ \widehat \varphi_m(Z) \} = m_0 - \widehat m_0 \\
        &+ \sum_{t=1}^{T} \bbE \left( \left\{ \prod_{s=0}^{t-1} \widehat r_s(A_s \mid H_s) \right\} \left[ \sum_{b_t} \left\{ \widehat m_t(b_t, H_t) - m_t (b_t, H_t) \right\} \widehat r_t(b_t \mid H_t) \left\{  \bbP(b_t \mid H_t) - \widehat \bbP(b_t \mid H_t) \right\} \right] \right) \\
        &+ \sum_{t=1}^{T}  \bbE \left( \left\{ \prod_{s=0}^{t-1} \widehat r_s(A_s \mid H_s) \right\} \left[ \sum_{b_t} m_t (b_t, H_t) \left\{  \widehat Q_t (b_t \mid H_t) -  Q_t(b_t \mid H_t) \right\} \right] \right)
    \end{align*}
\end{lemma}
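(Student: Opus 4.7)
Plan: prove the lemma by a direct forward-in-time manipulation of $\bbE\{\widehat \varphi_m(Z)\}$, complementary to the backward rearrangement in Lemma~\ref{lem:backwards-1}. Starting from
\[
\bbE\{\widehat \varphi_m(Z)\} = \sum_{t=0}^{T} \bbE\left[\left\{\prod_{s=0}^{t} \widehat r_s(A_s \mid H_s)\right\}\left\{\sum_{b_{t+1}} \widehat m_{t+1}(b_{t+1}, H_{t+1})\widehat Q_{t+1}(b_{t+1} \mid H_{t+1}) - \widehat m_t(A_t, H_t)\right\}\right]
\]
(with $Y$ replacing the inner sum at $t=T$), the goal is to re-express this as $m_0 - \widehat m_0$ plus a sum of timepoint-indexed second-order residuals. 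The single algebraic identity driving the entire argument is
\[
\widehat r_t(A_t \mid H_t)\, \bbP(A_t \mid H_t) = \widehat Q_t(A_t \mid H_t) + \widehat r_t(A_t \mid H_t)\left\{\bbP(A_t \mid H_t) - \widehat \bbP(A_t \mid H_t)\right\},
\]
which is immediate from $\widehat r_t = \widehat Q_t/\widehat \bbP$.

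First, I would apply iterated expectation on $A_{t+1}$ in $\bbE\left[\prod_{s=0}^{t+1}\widehat r_s \cdot \widehat m_{t+1}(A_{t+1}, H_{t+1})\right]$ and use the identity above to obtain the pushforward
\[
\bbE\left[\prod_{s=0}^{t+1}\widehat r_s \cdot \widehat m_{t+1}\right] = \bbE\left[\prod_{s=0}^{t}\widehat r_s \cdot \sum_{b_{t+1}} \widehat m_{t+1}\widehat Q_{t+1}\right] + \bbE\left[\prod_{s=0}^{t}\widehat r_s \sum_{b_{t+1}} \widehat r_{t+1}(\bbP - \widehat \bbP)\widehat m_{t+1}\right].
\]
Telescoping this over $t = 0, \dots, T-1$ cancels every intermediate pseudo-outcome and plug-in regression, leaving only the boundary pieces $-\widehat m_0$ and $\bbE\left[Y \prod_{s=1}^{T}\widehat r_s\right]$, together with a running error sum $-\sum_{t=1}^T \bbE\left[\prod_{s=0}^{t-1}\widehat r_s \sum_{b_t}\widehat r_t(\bbP - \widehat \bbP)\widehat m_t\right]$.

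Second, I would unroll the IPW-type quantity $\bbE\left[Y\prod_{s=1}^{T}\widehat r_s\right]$ from the innermost timepoint outward using $\bbE(Y \mid A_T, H_T) = m_T$ and the recursive definition of $m_{t-1}$ given in \eqref{eq:seq-reg}. At each step, the three-way split $\widehat r_t \bbP = Q_t + (\widehat Q_t - Q_t) + \widehat r_t(\bbP - \widehat \bbP)$ extracts the true sequential regression $m_{t-1}$ and deposits two error contributions, $m_t(\widehat Q_t - Q_t)$ and $m_t\, \widehat r_t(\bbP - \widehat \bbP)$, each weighted by $\prod_{s=0}^{t-1} \widehat r_s$. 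Iterating down to $t = 1$ and using $\widehat r_0 \equiv 1$ gives $\bbE\left[Y\prod_{s=1}^T \widehat r_s\right] = m_0$ plus these two error sums. Substituting this back collapses the two residuals proportional to $\widehat r_t(\bbP - \widehat \bbP)$ into a single one proportional to $(\widehat m_t - m_t)$, which after a sign accounting is exactly the claimed decomposition.

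The main obstacle is purely bookkeeping: matching product ranges, summation indices, and signs across the two telescoping steps, and verifying at each stage that the running product $\prod_{s=0}^{t-1} \widehat r_s$ is measurable with respect to $H_{t-1}$ so it can be pulled outside the iterated expectation on $A_t$. No additional assumptions are needed beyond those in Proposition~\ref{prop:eif}; the entire argument reduces to repeated application of the identity $\widehat r_t \bbP = \widehat Q_t + \widehat r_t(\bbP - \widehat \bbP)$, and all existence-of-integrals issues are handled by the uniform boundedness of $\widehat r_t$ and of $\widehat m_t$ assumed there and in Theorem~\ref{thm:flip-bias}.
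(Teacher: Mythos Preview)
Your proposal is correct and is essentially the same forward-in-time argument the paper uses. Both proofs rearrange $\bbE\{\widehat\varphi_m(Z)\}$ to isolate $\bbE\bigl[Y\prod_{s=1}^T\widehat r_s\bigr]-\widehat m_0$, repeatedly invoke iterated expectations on $H_t$ together with the identity $\widehat r_t\,\bbP = \widehat Q_t + \widehat r_t(\bbP - \widehat\bbP)$ (and its three-way split with $Q_t$), and unroll the IPW term via the recursion $m_{t-1}=\bbE\bigl[\sum_{b_t}m_tQ_t\mid A_{t-1},H_{t-1}\bigr]$ to extract $m_0$. The only difference is organizational: you perform two clean passes (telescope the $\widehat m_t$ terms first, then unroll the IPW term, then merge the $\widehat r_t(\bbP-\widehat\bbP)$ residuals), whereas the paper interleaves them by adding and subtracting $m_t$ inside the running sum before taking iterated expectations. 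This is a cosmetic reordering of the same algebra; your ``sign accounting'' remark at the end is exactly the combination $m_t\widehat r_t(\bbP-\widehat\bbP)-\widehat m_t\widehat r_t(\bbP-\widehat\bbP)$ that the paper obtains directly from its add-and-subtract step.
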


\begin{proof}
    We have
    \begin{align*}
        &\bbE\{ \widehat \varphi_m(Z) \} = \bbE \left[ \sum_{t=T}^{0} \left\{ \prod_{s=0}^{t} \widehat r_s(A_s \mid H_s) \right\} \left\{ \sum_{b_{t+1}} \widehat m_{t+1}(b_{t+1}, H_{t+1}) \widehat Q_{t+1}(b_{t+1} \mid H_{t+1}) - \widehat m_t(A_t, H_t) \right\} \right] \\
        &= \bbE \left[ \left\{ \prod_{s=1}^{T} \widehat r_s(A_s \mid H_s) \right\} Y \right] - \widehat m_0 \\
        &\hspace{0.1in}+ \sum_{t=1}^{T} \bbE \left[ \left\{ \prod_{s=0}^{t-1} \widehat r_s(A_s \mid H_s) \right\} \left\{ \sum_{b_t} \widehat m_t(b_t, H_t) \widehat Q_t(b_t \mid H_t) - \widehat m_t(A_t, H_t) \widehat r_t(A_t \mid H_t) \right\} \right] \\
        &= \bbE \left[ \left\{ \prod_{s=1}^{T} \widehat r_s(A_s \mid H_s) \right\} Y \right] - \widehat m_0 \\
        &\hspace{0.1in}+ \sum_{t=1}^{T} \bbE \left( \left\{ \prod_{s=0}^{t-1} \widehat r_s(A_s \mid H_s) \right\} \left[ \sum_{b_t} \left\{ \widehat m_t(b_t, H_t) - m_t (b_t, H_t) \right\} \widehat r_t(b_t \mid H_t) \left\{  \bbP(b_t \mid H_t) - \widehat \bbP(b_t \mid H_t) \right\} \right] \right) \\
        &\hspace{0.1in}+ \sum_{t=1}^{T} \bbE \left[ \left\{ \prod_{s=0}^{t-1} \widehat r_s(A_s \mid H_s) \right\} \left\{ \sum_{b_t} m_t(b_t, H_t) \widehat Q_t(b_t \mid H_t) -  m_t(A_t, H_t) \widehat r_t(A_t \mid H_t) \right\} \right]
    \end{align*}
    where the first line follows by definition, the second by rearranging the sum, and the third by adding and subtracting $m_t$. The second line in the final expression follows by taking iterated expectations on $H_t$ and gather terms. We do not manipulate the second term in the final expression above any further because it appears in the final result. Focusing on the final term, we have
    \begin{align*}
        &\sum_{t=1}^{T} \bbE \left[ \left\{ \prod_{s=0}^{t-1} \widehat r_s(A_s \mid H_s) \right\} \left\{ \sum_{b_t} m_t(b_t, H_t) \widehat Q_t(b_t \mid H_t) -  m_t(A_t, H_t) \widehat r_t(A_t \mid H_t) \right\} \right] \\
        &= \sum_{t=1}^{T} \bbE \left( \left\{ \prod_{s=0}^{t-1} \widehat r_s(A_s \mid H_s) \right\} \left[ \sum_{b_t} m_t(b_t, H_t) \left\{ \widehat Q_t(b_t \mid H_t) - Q_t (b_t \mid H_t) \right\}  \right] \right) \\
        &\hspace{0.1in}+ \sum_{t=1}^{T} \bbE \left( \left\{ \prod_{s=0}^{t-1} \widehat r_s(A_s \mid H_s) \right\} \left[ \left\{ \sum_{b_t} m_t(b_t, H_t) Q_t (b_t \mid H_t) \right\} - m_t (A_t, H_t) \widehat r_t(A_t \mid H_t) \right] \right) \\
        &= \sum_{t=1}^{T} \bbE \left( \left\{ \prod_{s=0}^{t-1} \widehat r_s(A_s \mid H_s) \right\} \left[ \sum_{b_t} m_t(b_t, H_t) \left\{ \widehat Q_t(b_t \mid H_t) - Q_t (b_t \mid H_t) \right\}  \right] \right) \\
        &\hspace{0.1in}+ \sum_{t=1}^{T} \bbE \left( \left\{ \prod_{s=0}^{t-1} \widehat r_s(A_s \mid H_s) \right\} \left[ m_t(A_t, H_t) \left\{ r_t(A_t \mid H_t) - \widehat r_t(A_t \mid H_t) \right\} \right] \right) 
    \end{align*}
    where the first equality follows by adding and subtracting $Q_t$ and the second equality by iterated expectation on $H_t$ and gathering terms, and the final line by adding and subtracting $m_t$. The first term in the final display appears in the result, so we manipulate them no further.

    \bigskip

    Combining the left over terms, we have 
    \begin{align*}
        &\sum_{t=1}^{T} \bbE \left( \left\{ \prod_{s=0}^{t-1} \widehat r_s(A_s \mid H_s) \right\} \left[ m_t(A_t, H_t) \left\{ r_t(A_t \mid H_t) - \widehat r_t(A_t \mid H_t) \right\} \right] \right) + \bbE \left[ \left\{ \prod_{s=1}^{T} \widehat r_s(A_s \mid H_s) \right\} Y \right] - \widehat m_0 \\
        &= m_0 - \sum_{t=1}^{T} \bbE \left[ \left\{ \prod_{s=1}^{t-1} \widehat r_s(A_s \mid H_s) \right\} \left[ \left\{ m_t(A_t, H_t) - m_{t+1}(A_{t+1}, H_{t+1}) r_{t+1}(A_{t+1} \mid H_{t+1}) \right\} \widehat r_t(A_t \mid H_t) \right] \right) - \widehat m_0 \\
        &= m_0 - \widehat m_0
    \end{align*}
    where the first equality follows by taking the first term out of the initial sum, which equals $m_0$ because it is $\bbE \{ m_1(A_1, H_1) r_1(A_1 \mid H_1) \}$, and adding $\bbE \left[ \left\{ \prod_{s=1}^{T} \widehat r_s(A_s \mid H_s) \right\} Y \right]$ into the sum and combining terms, and the second equality follows by iterated expectations on $H_t$. Combining all the algebra above yields the result.
\end{proof}

\begin{lemma} \label{lem:forward-phi-error}
    Under the setup of Proposition~\ref{prop:eif},
    \begin{align*}
        &\bbE \{ \widehat \varphi_Q(Z) \} = \\
        &+ \sum_{t=1}^{T}  \bbE \left( \left\{ \prod_{s=0}^{t-1} \widehat r_s(A_s \mid H_s) \right\} \left[ \sum_{b_t} \left\{ \widehat m_t(b_t, H_t) -  m_t (b_t, H_t) \right\} \bbE \left\{ \widehat \phi_t(b_t; A_t, H_t) \mid H_t) \right\} \right] \right)\\
        &+ \sum_{t=1}^{T}  \bbE \left( \left\{ \prod_{s=0}^{t-1} \widehat r_s(A_s \mid H_s) \right\} \left[ \sum_{b_t}  m_t (b_t, H_t) \left\{  \widehat Q_t (b_t \mid H_t) -  Q_t(b_t \mid H_t) + \bbE \left\{ \widehat \phi_t(b_t; A_t, H_t) \mid H_t) \right\} \right\} \right] \right) \\
        &+ \sum_{t=1}^{T}  \bbE \left( \left\{ \prod_{s=0}^{t-1} \widehat r_s(A_s \mid H_s) \right\} \left[ \sum_{b_t}  m_t (b_t, H_t) \left\{  Q_t (b_t \mid H_t) - \widehat Q_t(b_t \mid H_t) \right\} \right] \right)
    \end{align*}
\end{lemma}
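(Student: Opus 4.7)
The result is essentially an algebraic bookkeeping exercise parallel to Lemma~\ref{lem:forward-m-error}, but simpler because $\widehat \phi_t$ is already (after conditioning) the building block whose first-order error was characterized in Lemma~\ref{lem:phi-first-order}. My plan is to start from the definition
\[
\widehat \varphi_Q(Z) = \sum_{t=1}^{T} \left\{ \prod_{s=1}^{t-1} \widehat r_s(A_s \mid H_s) \right\} \sum_{b_t} \widehat m_t(b_t, H_t)\, \widehat \phi_t(b_t; A_t, H_t),
\]
and apply iterated expectations conditioning on $H_t$ in each summand. Since the nuisances are treated as fixed (sample splitting, per the remark at the start of this section), the product $\prod_{s=1}^{t-1} \widehat r_s(A_s \mid H_s)$ and the function $\widehat m_t(b_t, H_t)$ are $H_t$-measurable, so they pull outside the inner conditional expectation and only $\widehat \phi_t$ remains inside. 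Using the convention $r_0 = 1$, this yields
\[
\bbE\{\widehat \varphi_Q(Z)\} = \sum_{t=1}^{T} \bbE \left[ \left\{ \prod_{s=0}^{t-1} \widehat r_s(A_s \mid H_s) \right\} \sum_{b_t} \widehat m_t(b_t, H_t)\, \bbE\{\widehat \phi_t(b_t; A_t, H_t) \mid H_t\} \right].
\]

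\textbf{Next, split off the first line of the target.} Inside the sum over $b_t$, I add and subtract $m_t(b_t, H_t)$ next to $\widehat m_t(b_t, H_t)$. The piece carrying $\widehat m_t - m_t$ is exactly the first sum in the statement, and leaves a residual in which $m_t$ has replaced $\widehat m_t$.

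\textbf{Third, split the residual into the remaining two lines.} In the residual term, inside the conditional expectation I rewrite
\[
\bbE\{\widehat \phi_t(b_t; A_t, H_t) \mid H_t\} = \Big(\bbE\{\widehat \phi_t(b_t; A_t, H_t) \mid H_t\} + \widehat Q_t(b_t \mid H_t) - Q_t(b_t \mid H_t)\Big) + \Big(Q_t(b_t \mid H_t) - \widehat Q_t(b_t \mid H_t)\Big).
\]
Distributing the sum across these two bracketed pieces produces the second and third lines in the target expression, respectively, completing the decomposition.

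\textbf{Obstacles.} There is no genuine analytic obstacle; the argument is purely bookkeeping. The only point requiring mild care is tracking which factors are $H_t$-measurable so that the initial conditional expectation step is valid—once the fixed-nuisance convention is invoked, every factor except $\widehat \phi_t(b_t; A_t, H_t)$ is a function of $H_t$, and the decomposition proceeds cleanly by add-and-subtract. Combining this lemma with Lemma~\ref{lem:forward-m-error} will subsequently give the full forward-in-time analogue of Lemma~\ref{lem:eif-second-order}, from which the second part of the minimum in Theorem~\ref{thm:flip-bias} follows after bounding each term via Cauchy--Schwarz and applying Lemma~\ref{lem:phi-second-order} to the middle term.
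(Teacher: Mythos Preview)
Your proposal is correct and follows essentially the same approach as the paper's proof: condition on $H_t$ to replace $\widehat\phi_t$ by its conditional expectation, then add and subtract $m_t$ and subsequently $\widehat Q_t - Q_t$ to obtain the three displayed terms. The paper summarizes the last step simply as ``adding zero several times,'' which is exactly the two add-and-subtract moves you describe.
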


\begin{proof}
    We have
    \begin{align*}
        &\bbE \{ \widehat \varphi_Q(Z) \} = \bbE \left[ \sum_{t=T}^{1} \left\{ \prod_{s=1}^{t-1} \widehat r_s(A_s \mid H_s) \right\} \sum_{b_t} \widehat m_t(b_t, H_t) \widehat \phi_t(b_t; A_t, H_t) \right] \\
        &= \sum_{t=1}^{T}  \bbE \left[\left\{ \prod_{s=1}^{t-1} \widehat r_s(A_s \mid H_s) \right\} \sum_{b_t} \widehat m_t(b_t, H_t) \bbE \{ \widehat \phi_t(b_t; A_t, H_t) \mid H_t \} \right] \\
        &= \sum_{t=1}^{T}  \bbE \left[\left\{ \prod_{s=1}^{t-1} \widehat r_s(A_s \mid H_s) \right\} \sum_{b_t} \left\{ \widehat m_t(b_t, H_t) - m_t(b_t, H_t) \right\} \bbE \{ \widehat \phi_t(b_t; A_t, H_t) \mid H_t \} \right] \\
        &\hspace{0.1in}+ \sum_{t=1}^{T}  \bbE \left( \left\{ \prod_{s=1}^{t-1} \widehat r_s(A_s \mid H_s) \right\} \sum_{b_t}  m_t(b_t, H_t) \left[ \bbE \{ \widehat \phi_t(b_t; A_t, H_t) \mid H_t \} + \widehat Q_t(b_t \mid H_t) - Q_t(b_t \mid H_t) \right] \right) \\
        &\hspace{0.1in}+\sum_{t=1}^{T}  \bbE \left[ \left\{ \prod_{s=1}^{t-1} \widehat r_s(A_s \mid H_s) \right\} \sum_{b_t}  m_t(b_t, H_t) \left\{ Q_t(b_t \mid H_t) - \widehat Q_t(b_t \mid H_t) \right\} \right]
    \end{align*}
    where the second equality follows by adding zero several times.
\end{proof}

\begin{lemma} \label{lem:forward-combine-error}
    Under the setup of Proposition~\ref{prop:eif},
    \begin{align*}
        &\bbE \{ \widehat \varphi_m(Z) + \widehat \varphi_Q(Z) \} = m_0 - \widehat m_0 \\
        &+ \sum_{t=1}^{T} \bbE \left( \left\{ \prod_{s=0}^{t-1} \widehat r_s(A_s \mid H_s) \right\} \left[ \sum_{b_t} \left\{ \widehat m_t(b_t, H_t) - m_t (b_t, H_t) \right\} \widehat r_t(b_t \mid H_t) \left\{  \bbP(b_t \mid H_t) - \widehat \bbP(b_t \mid H_t) \right\} \right] \right) \\
        &+ \sum_{t=1}^{T}  \bbE \left( \left\{ \prod_{s=0}^{t-1} \widehat r_s(A_s \mid H_s) \right\} \left[ \sum_{b_t} \left\{ \widehat m_t(b_t, H_t) -  m_t (b_t, H_t) \right\} \bbE \left\{ \widehat \phi_t(b_t; A_t, H_t) \mid H_t) \right\} \right] \right)\\
        &+ \sum_{t=1}^{T}  \bbE \left( \left\{ \prod_{s=0}^{t-1} \widehat r_s(A_s \mid H_s) \right\} \left[ \sum_{b_t}  m_t (b_t, H_t) \left\{  \widehat Q_t (b_t \mid H_t) -  Q_t(b_t \mid H_t) + \bbE \left\{ \widehat \phi_t(b_t; A_t, H_t) \mid H_t) \right\} \right\} \right] \right)
    \end{align*}
\end{lemma}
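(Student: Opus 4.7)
The plan is to obtain this result by simply adding the conclusions of Lemma~\ref{lem:forward-m-error} and Lemma~\ref{lem:forward-phi-error} term-by-term. No new probabilistic manipulations (conditioning, consistency, or exchangeability) should be needed; the entire argument is an algebraic cancellation, so the work is careful bookkeeping of signs and indices.

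Concretely, Lemma~\ref{lem:forward-m-error} expresses $\bbE\{\widehat\varphi_m(Z)\}$ as $m_0 - \widehat m_0$ plus two sums: one involving $(\widehat m_t - m_t)\,\widehat r_t(b_t\mid H_t)\,\{\bbP(b_t\mid H_t) - \widehat\bbP(b_t\mid H_t)\}$, and one of the form $\sum_t \prod_{s=0}^{t-1}\widehat r_s \cdot \sum_{b_t} m_t(b_t,H_t)\{\widehat Q_t(b_t\mid H_t) - Q_t(b_t\mid H_t)\}$. Lemma~\ref{lem:forward-phi-error} expresses $\bbE\{\widehat\varphi_Q(Z)\}$ as three sums: one of the form $\sum_t \prod_{s=0}^{t-1}\widehat r_s \cdot \sum_{b_t} (\widehat m_t - m_t)\,\bbE\{\widehat\phi_t\mid H_t\}$, one of the combined form $\sum_t \prod_{s=0}^{t-1}\widehat r_s \cdot \sum_{b_t} m_t\{\widehat Q_t - Q_t + \bbE(\widehat\phi_t\mid H_t)\}$, and finally $\sum_t \prod_{s=0}^{t-1}\widehat r_s \cdot \sum_{b_t} m_t\{Q_t - \widehat Q_t\}$. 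The key step is then to observe that the residual $\sum_{t}\prod_{s=0}^{t-1}\widehat r_s \cdot \sum_{b_t} m_t(b_t,H_t)\{\widehat Q_t - Q_t\}$ from Lemma~\ref{lem:forward-m-error} is the exact negative of the final term of Lemma~\ref{lem:forward-phi-error}, so they cancel term-by-term in $t$ and $b_t$ after noting that both use identical outer weights (using the convention $\widehat r_0 = 1$ so that $\prod_{s=1}^{t-1}$ and $\prod_{s=0}^{t-1}$ agree).

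After this cancellation, the remaining four groups of terms — the constant $m_0 - \widehat m_0$, the $(\widehat m_t - m_t)\,\widehat r_t\,(\bbP - \widehat\bbP)$ sum from Lemma~\ref{lem:forward-m-error}, and the two surviving sums from Lemma~\ref{lem:forward-phi-error} — match the right-hand side of the displayed expression verbatim. The main thing to watch is consistency of notation across the two lemmas: the products $\prod_{s=0}^{t-1}\widehat r_s(A_s\mid H_s)$ must be aligned (this is immediate given $\widehat r_0 = 1$), and the sign convention in the first surviving sum must be carried through as $\{\bbP(b_t\mid H_t) - \widehat\bbP(b_t\mid H_t)\}$ rather than the reversed ordering. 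There is no genuine analytical obstacle; the only risk is a bookkeeping slip in tracking the $\pm$ signs on the $Q_t - \widehat Q_t$ factor when collecting the cancelling pair.
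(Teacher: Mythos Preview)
Your proposal is correct and matches the paper's own proof essentially verbatim: the paper's proof of this lemma is the single sentence ``The final lines in the display in the previous two lemmas cancel, yielding the result,'' which is precisely the term-by-term addition and cancellation you describe. Your additional remarks about aligning $\prod_{s=0}^{t-1}$ with $\prod_{s=1}^{t-1}$ via $\widehat r_0 = 1$ and tracking the sign on $\widehat Q_t - Q_t$ are accurate and simply make explicit what the paper leaves implicit.
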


\begin{proof}
    The final lines in the display in the previous two lemmas cancel, yielding the result.
\end{proof}

\subsubsection*{Proof of Proposition~\ref{prop:eif}}

\begin{proof}
   Lemmas~\ref{lem:eif-second-order}, \ref{lem:phi-first-order}, and \ref{lem:phi-second-order} imply that $\bbE \{ \widehat \varphi(Z) - \varphi(Z) \}$ is a second-order product of errors in nuisance functions. By the same argument, the functional satisfies a von Mises expansion with second-order remainder term. The result follows by \citet[Lemma 2]{kennedy2023semiparametric}, combined with the fact that $\bbV \{ \varphi(Z) \}$ is bounded because the outcome has bounded variance and $\frac{Q_t(A_t \mid H_t)}{\bbP(A_t \mid H_t)}$ is bounded by assumption.
\end{proof}

\subsection{Theorem~\ref{thm:flip-bias}}

\begin{proof} 
    The minimum in the result will follow by taking the minimum of the two bounds we prove below.

    \bigskip
    
    \noindent {\large \textbf{Backwards-in-time:}} \\
    The estimator is $\bbP_n \{ \widehat m_0 +  \widehat \varphi(Z) \}$. Because we have iid observations, the bias then satisfies
    \[
    \bbE \left( \widehat \psi - \psi \right) = \widehat m_0 + \bbE \{ \widehat \varphi(Z) \} - \psi \equiv  \bbE \{ \widehat \varphi(Z) \} + \widehat m_0 - m_0.
    \]
    Then, by Lemma~\ref{lem:eif-second-order}, 
    \begin{align*}
        \bbE \left( \widehat \psi - \psi \right) &= \sum_{t=0}^{T} \bbE \left[ \left\{ \prod_{s=0}^{t} \widehat r_s(A_s \mid H_s) - \prod_{s=0}^{t} r_s(A_s \mid H_s) \right\} \left\{ \widetilde m_t(A_t, H_t) - \widehat m_t(A_t, H_t) \right\} \right] \\
        &+ \sum_{t=1}^{T} \bbE \left[ \left\{ \prod_{s=1}^{t-1} \widehat r_s(A_s \mid H_s) - \prod_{s=1}^{t-1} r_s(A_s \mid H_s) \right\} \sum_{b_t} \widehat m_t(b_t, H_t) \bbE\{ \widehat \phi_t(b_t; A_t, H_t) \mid H_t \} \right] \\
        &\hspace{-0.5in}+ \sum_{t=1}^{T} \bbE \left(  \left\{ \prod_{s=1}^{t-1} r_s(A_s \mid H_s) \right\} \sum_{b_t} \widehat m_t(b_t, H_t) \Big[ \bbE\{ \widehat \phi_t(b_t; A_t, H_t) \mid H_t \} + \widehat Q_t(b_t \mid H_t) - Q_t(b_t \mid H_t) \Big] \right)
    \end{align*}
    Note that $r_t$ is bounded by the construction of $s_t$, while $\widehat m_t$ is bounded by assumption. Then, by H\"{o}lder's inequality, Lemmas~\ref{lem:phi-first-order} and \ref{lem:phi-second-order}, the triangle inequality, and Cauchy-Schwarz:
    \begin{align*}
        \left| \bbE \left( \widehat \psi - \psi \right) \right| &\lesssim \sum_{t=0}^T \sum_{s=1}^t \| \widehat r_s - r_s \| \| \widetilde m_t - \widehat m_t \| \\
        &+ \sum_{t=1}^T \sum_{s=1}^{t-1} \| \widehat r_s - r_s \| \Big( \| \widehat \bbP(a_t) - \bbP(a_t) \| + \| \widehat \bbP(b_t) - \bbP(b_t) \| \Big) \\
        &+ \sum_{t=1}^{T} \Big( \| \widehat \bbP(a_t) - \bbP(a_t) \|^2 + \| \widehat \bbP(b_t) - \bbP(b_t) \|^2 \Big).
    \end{align*}
    We can streamline this decomposition further, as in the statement of the result. First, note that $\widehat r_0 = r_0 = 1$ by definition. Second, with binary treatment, $\sum_{a_t \in \{0,1\}} \| \widehat \bbP(a_t) - \bbP(a_t) \| \lesssim \| \widehat \pi_t - \pi_t \|$ where $\pi_t(H_t) \equiv \bbP(A_t = 1 \mid H_t)$. Third, $\| \widehat r_s - r_s \| \lesssim \| \widehat \pi_t - \pi_t \|$ by Taylor expansion. Then, the final line above simplifies to
    \begin{align*}
        &\sum_{t=1}^{T} \sum_{s=1}^{t} \| \widehat \pi_s - \pi_s \| \| \widehat m_t - \widetilde m_t \| + \sum_{t=1}^{T} \sum_{s=1}^{t-1} \| \widehat \pi_s - \pi_s \| \| \widehat \pi_t - \pi_t \| + \sum_{t=1}^{T} \| \widehat \pi_t - \pi_t \|^2 \\
        &= \sum_{t=1}^{T} \sum_{s=1}^{t} \| \widehat \pi_s - \pi_s \| \Big( \| \widehat m_t - \widetilde m_t \| + \| \widehat \pi_t - \pi_t \| \Big).
    \end{align*}

    \bigskip
    \noindent {\large \textbf{Forwards-in-time:}}  \\
    By the same argument above and Lemma~\ref{lem:forward-combine-error},
    \begin{align*}
        &\bbE \left( \widehat \psi - \psi \right) = \\
        &\sum_{t=1}^{T} \bbE \left( \left\{ \prod_{s=0}^{t-1} \widehat r_s(A_s \mid H_s) \right\} \left[ \sum_{b_t} \left\{ \widehat m_t(b_t, H_t) - m_t (b_t, H_t) \right\} \widehat r_t(b_t \mid H_t) \left\{  \bbP(b_t \mid H_t) - \widehat \bbP(b_t \mid H_t) \right\} \right] \right) \\
        &+ \sum_{t=1}^{T}  \bbE \left( \left\{ \prod_{s=0}^{t-1} \widehat r_s(A_s \mid H_s) \right\} \left[ \sum_{b_t} \left\{ \widehat m_t(b_t, H_t) -  m_t (b_t, H_t) \right\} \bbE \left\{ \widehat \phi_t(b_t; A_t, H_t) \mid H_t) \right\} \right] \right)\\
        &+ \sum_{t=1}^{T}  \bbE \left( \left\{ \prod_{s=0}^{t-1} \widehat r_s(A_s \mid H_s) \right\} \left[ \sum_{b_t}  m_t (b_t, H_t) \left\{  \widehat Q_t (b_t \mid H_t) -  Q_t(b_t \mid H_t) + \bbE \left\{ \widehat \phi_t(b_t; A_t, H_t) \mid H_t) \right\} \right\} \right] \right)
    \end{align*}
    Note that $\widehat r_t$ is bounded by the construction of $s_t$, while $m_t$ is bounded by assumption. Then, by H\"{o}lder's inequality, Lemmas~\ref{lem:phi-first-order} and \ref{lem:phi-second-order}, the triangle inequality, and Cauchy-Schwarz:
    \begin{align*}
        \left| \bbE \left( \widehat \psi - \psi \right) \right| &\lesssim \sum_{t=1}^{T} \| \widehat m_t - m_t \| \| \widehat \bbP(b_t) -  \bbP(b_t) \| \\
        &+ \sum_{t=1}^T \| \widehat m_t - m_t \| \Big( \| \widehat \bbP(a_t) - \bbP(a_t) \| + \| \widehat \bbP(b_t) - \bbP(b_t) \| \Big) \\
        &+ \sum_{t=1}^{T} \Big( \| \widehat \bbP(a_t) - \bbP(a_t) \|^2 + \| \widehat \bbP(b_t) - \bbP(b_t) \|^2 \Big).
    \end{align*}
    We can streamline this decomposition further, as in the statement of the result. First, with binary treatment, $\sum_{a_t \in \{0,1\}} \| \widehat \bbP(a_t) - \bbP(a_t) \| \lesssim \| \widehat \pi_t - \pi_t \|$ where $\pi_t(H_t) \equiv \bbP(A_t = 1 \mid H_t)$. Second, $\| \widehat r_t - r_t \| \lesssim \| \widehat \pi_t - \pi_t \|$ by Taylor expansion. Then, the final line above simplifies to
    \begin{align*}
        \left| \bbE \left( \widehat \psi - \psi \right) \right| &\lesssim \sum_{t=1}^{T} \Big( \| \widehat m_t - m_t \| + \| \widehat \pi_t - \pi_t \| \Big) \| \widehat \pi_t - \pi_t \|.
    \end{align*}
\end{proof}

\subsection{Corollary~\ref{cor:s-flip-convergence}}

\begin{proof}
    We have
    \begin{align*}
        \widehat \psi - \psi &= \widehat m_0 + \bbP_n \{ \widehat \varphi(Z) \} - m_0 \\
        &= (\bbP_n - \bbP) \{ \varphi(Z) \} + (\bbP_n - \bbP) \{ \widehat \varphi(Z) - \varphi(Z) \} + \widehat m_0 + \bbP \{ \widehat \varphi(Z) \} - m_0 \\
        &= (\bbP_n - \bbP) \{ \varphi(Z) \} + (\bbP_n - \bbP) \{ \widehat \varphi(Z) - \varphi(Z) \} + \bbE (\widehat \psi - \psi).
    \end{align*}
    where the first line follows by definition, the second by adding zero and because $\bbP \{ \varphi(Z) \} = 0$, and the third line by the definition of the estimator $\widehat \psi$. The second term is $o_\bbP(n^{-1/2})$ by Chebyshev's inequality and the assumption that $\lVert \widehat \varphi - \varphi \rVert = o_\bbP(1)$ (cf. \citet[Lemma 2]{kennedy2020sharp}). Meanwhile, the third term equals the bias term in Theorem~\ref{thm:flip-bias}.  This is $o_\bbP(n^{-1/2})$ by assumption. Therefore,
    \[
    \sqrt{\frac{n}{\bbV\{ \varphi(Z) \}}}(\widehat \psi - \psi) = \sqrt{\frac{n}{\bbV\{ \varphi(Z) \}}} (\bbP_n - \bbP)\{ \varphi(Z) \} + o_\bbP(1) \indist N(0,1)
    \]
    by the central limit theorem and because $\bbV \{ \varphi(Z) \}$ is bounded because $Y$ has bounded variance and $r_t$ is bounded.

    \bigskip

    Finally, note that $\widehat \sigma^2 \inprob \bbV \{ \varphi(Z) \}$ because $\| \widehat \varphi - \varphi \| = o_\bbP(1)$. Therefore, the result follows by Slutsky's theorem.
\end{proof}

\subsection{Lemma~\ref{lem:seq-property}}

\begin{proof}
    The first result follows by repeated applications of iterated expectations. Notice that the residual terms are mean zero by iterated expectations, leaving only the plug-in term. Then, $\bbE \left\{ \phi_{t+1}(b_{t+1}; A_{t+1}, H_{t+1}) \mid H_{t+1} \right\} =0$ by Lemma~\ref{lem:phi-first-order}.  And finally, by definition, 
    \[
    \bbE \left\{ \sum_{b_{t+1}} m_{t+1}(b_{t+1}, H_{t+1}) Q_{t+1}(b_{t+1} \mid H_{t+1}) \mid A_t, H_t \right\} = m_t(A_t, H_t). 
    \]
    The second result follows by induction. Throughout, we will omit arguments. Starting with the final residual, when $s = T$, we have
    \begin{align*}
        \bbE \left\{ \left( \prod_{k=t+1}^T \widehat r_k \right) \left( Y - \widehat m_T \right) \mid A_t, H_t \right\} &= \bbE \left\{ \left( \prod_{k=t+1}^{T-1} \widehat r_k \right) \left( m_T - \widehat m_T \right) \widehat r_T \mid A_t, H_t \right\} \\
        &= \bbE \left\{ \left( \prod_{k=t+1}^{T-1} \widehat r_k \right) \left( m_T - \widehat m_T \right) \left( \widehat r_T - r_T \right) \mid A_t, H_t \right\} \\
        &+ \bbE \left\{ \left( \prod_{k=t+1}^{T-1} \widehat r_k \right) \left( m_T - \widehat m_T \right) r_T \mid A_t, H_t \right\}
    \end{align*}
    where the first equality follows by iterated expectations on $A_T, H_T$ and the second by adding and subtracting $r_T$. The first term in the final expression appears in the result, so we manipulate it no further.  The next step is the induction step.

    \bigskip

    Consider the second term in the display above and the penultimate residual, when $s = T-1$. We have
    \begin{align*}
        &\bbE \left\{ \left( \prod_{k=t+1}^{T-1} \widehat r_k \right) \left( m_T - \widehat m_T \right) r_T \mid A_t, H_t \right\} + \bbE \left\{ \left( \prod_{k=t+1}^{T-1} \widehat r_k \right) \left( \sum_{b_T} \widehat m_T (\widehat Q_T + \widehat \phi_T) - \widehat m_{T-1} \right) \mid A_t, H_t \right\} \\
        &=\bbE \left[ \left( \prod_{k=t+1}^{T-1} \widehat r_k \right) \left\{ \sum_{b_T} \left( m_T - \widehat m_T \right) Q_T + \sum_{b_T} \widehat m_T (\widehat Q_T + \widehat \phi_T) - \widehat m_{T-1} \right\} \mid A_t, H_t \right] \\
        &= \bbE \left[ \left( \prod_{k=t+1}^{T-1} \widehat r_k \right) \left\{ \sum_{b_T} \widehat m_T \left( \widehat Q_T + \widehat \phi_T - Q_T \right) + m_{T-1} - \widehat m_{T-1} \right\} \mid A_t, H_t \right] \\
        &= \bbE \left\{ \left( \prod_{k=t+1}^{T-1} \widehat r_k \right) \left\{ \sum_{b_T} \widehat m_T \left( \widehat Q_T + \widehat \phi_T - Q_T \right) \right\} \mid A_t, H_t \right]  \\
        &+ \bbE \left\{ \left( \prod_{k=t+1}^{T-2} \widehat r_k \right)\left( m_{T-1} - \widehat m_{T-1} \right) \left( \widehat r_{T-1} - r_{T-1} \right)  \mid A_t, H_t \right\} \\
        &+ \bbE \left\{ \left( \prod_{k=t+1}^{T-2} \widehat r_k \right)\left( m_{T-1} - \widehat m_{T-1} \right) r_{T-1}  \mid A_t, H_t \right\}
    \end{align*}
    where the first equality follows by gathering terms, the second by iterated expectations and the definition of $m_{T-1}$, and the third by adding and subtracting $r_{T-1}$. The first and second lines in the final display appear in the statement of the result. The third line can be combined with the earlier residual, for $s= T-2$ using the step we just outlined. This argument can be continued all the way to $s=t+1$.

    \bigskip

    For the final step, when $s=t+1$, we will be left with
    \begin{align*}
        &\bbE \left[ \left( m_{t+1} - \widehat m_{t+1} \right) r_{t+1} + \sum_{b_{t+1}} \widehat m_{t+1} \left( \widehat Q_{t+1} + \widehat \phi_{t+1} \right) \mid A_t, H_t \right] - m_t(A_t, H_t) \\
        &= \bbE \left[ \sum_{b_{t+1}} \left( m_{t+1} - \widehat m_{t+1} \right) Q_{t+1} + \sum_{b_{t+1}} \widehat m_{t+1} \left( \widehat Q_{t+1} + \widehat \phi_{t+1} \right) \mid A_t, H_t \right] - m_t(A_t, H_t) \\
        &= \bbE \left[ \sum_{b_{t+1}} \widehat m_{t+1} \left( \widehat Q_{t+1} + \widehat \phi_{t+1} - Q_{t+1} \right) \mid A_t, H_t \right]
    \end{align*}
    where the first equality follows by iterated expectations and the second by canceling \\
    $\bbE \left\{ \sum_{b_{t+1}} m_{t+1} Q_{t+1} \mid A_t, H_t \right\} - m_t(A_t, H_t) = 0$.
\end{proof}

\subsection{Theorem~\ref{thm:seq-dr-bias}}

\begin{proof}
    We omit arguments throughout. We have
    \begin{align*}
        \bbE \left( \widehat \psi^\ast - \psi \right) &= \bbE \left( \widehat m_0^\ast - m_0 \right) \\
        &= \bbE \left( \widehat P_1^\ast(Z) - m_0 \right) \\
        &= \sum_{t=1}^{T} \bbE\left[ \left( \prod_{k=1}^{t-1} \widehat r_k \right) \Big( m_t - \widehat m_t^\ast \Big) \Big( \widehat r_t - r_t \Big) + \left( \prod_{k=1}^{t-1} \widehat r_k \right) \left\{ \sum_{b_t} \widehat m_t^\ast \left( \widehat Q_t + \widehat \phi_t - Q_t \right) \right\} \right].
    \end{align*}
    where the first line follows by definition, the second by iid observations and the definition of $\widehat m_0^\ast$, and the third by Lemma~\ref{lem:seq-property}. Note that by construction $\widehat r_k$ is bounded and by assumption $\widehat m_t^\ast$ is bounded. Therefore, for the second summand in the final display above, Lemma~\ref{lem:phi-second-order}, H\"{o}lder's inequality, and Cauchy-Schwarz yield
    \[
    \left| \sum_{t=1}^{T} \bbE \left[ \left( \prod_{k=1}^{t-1} \widehat r_k \right) \left\{ \sum_{b_t} \widehat m_t^\ast \left( \widehat Q_t + \widehat \phi_t - Q_t \right) \right\} \right] \right| \lesssim \sum_{t=1}^{T} \| \widehat \pi_t - \pi_t \|^2.
    \]
    Meanwhile, the first summand from the final line in the initial display above can be bounded in two steps. First, we have
    \begin{align*}
        \bbE\left\{ \left( \prod_{k=1}^{t-1} \widehat r_k \right) \Big( m_t - \widehat m_t^\ast \Big) \Big( \widehat r_t - r_t \Big) \right\} &= \bbE\left\{ \left( \prod_{k=1}^{t-1} \widehat r_k \right) \Big( m_t - \widetilde m_t^\ast + \widetilde m_t^\ast - \widehat m_t^\ast \Big) \Big( \widehat r_t - r_t \Big) \right\}
    \end{align*}    
    by adding and subtracting $\widetilde m_t^\ast$. H\"{o}lder's inequality, a Taylor expansion of $\widehat r_t - r_t$, and Cauchy-Schwarz yield
    \[
    \left| \bbE\left\{ \left( \prod_{k=1}^{t-1} \widehat r_k \right) \Big( \widetilde m_t^\ast - \widehat m_t^\ast \Big) \Big( \widehat r_t - r_t \Big) \right\} \right| \lesssim \| \widehat m_t^\ast - \widetilde m_t^\ast \| \| \widehat \pi_t - \pi_t \|.
    \] 
    All that remains is to bound $\left| \bbE\left\{ \left( \prod_{k=1}^{t-1} \widehat r_k \right) \Big( m_t^\ast - \widetilde m_t^\ast \Big) \Big( \widehat r_t - r_t \Big) \right\} \right|$. However, notice that $m_t^\ast - \widetilde m_t^\ast$ can itself be expressed a second-order product of errors. Specifically,
    \begin{align*}
        \widetilde m_t^\ast - m_t^\ast &= \bbE \left\{ \widehat P_{t+1}^\ast(Z) \mid A_t, H_t \right\} - m_t^\ast(A_t, H_t) \\
        &= \sum_{s=t+1}^{T} \bbE\left\{ \left( \prod_{k=t+1}^{s-1} \widehat r_k \right) \Big( m_s - \widehat m_s^\ast \Big) \Big( \widehat r_s - r_s \Big) \mid A_t, H_t \right\} \\
        &+ \bbE \left[ \left( \prod_{k=t+1}^{s-1} \widehat r_k \right) \left\{ \sum_{b_s} \widehat m_s^\ast \left( \widehat Q_s + \widehat \phi_s - Q_s \right) \right\} \mid A_t, H_t \right],
    \end{align*}
    where the second line follows by Lemma~\ref{lem:seq-property}.  Therefore, proceeding forwards in time, from $t+1$ to $T$, we can inductively apply the same argument as at the start of this proof to obtain
    \begin{align*}
        \left| \bbE\left\{ \left( \prod_{k=1}^{t-1} \widehat r_k \right) \Big( m_t^\ast - \widetilde m_t^\ast \Big) \Big( \widehat r_t - r_t \Big) \right\} \right| &\lesssim \sum_{s=t+1}^{T} \| \widehat m_s^\ast - \widetilde m_s^\ast \| \| \widehat \pi_s - \pi_s \| + \| \widehat \pi_s - \pi_s \|^2,
    \end{align*}
    where the inequality follows because $r_t, \widehat r_t$ are bounded for all $t$ by construction and $\widehat m_t^\ast$ are bounded by assumption. Therefore, this final term is bounded by a sum of products of errors at later timepoints. This sum is of the same order as terms that already appear in the sum. In other words,
    \begin{align*}
        &\left| \sum_{t=1}^{T} \bbE\left[ \left( \prod_{k=1}^{t-1} \widehat r_k \right) \Big( m_t - \widehat m_t^\ast \Big) \Big( \widehat r_t - r_t \Big) + \left( \prod_{k=1}^{t-1} \widehat r_k \right) \left\{ \sum_{b_t} \widehat m_t^\ast \left( \widehat Q_t + \widehat \phi_t - Q_t \right) \right\} \right] \right| \\
        &\lesssim \sum_{t=1}^{T} \Bigg( \| \widehat m_t^\ast - \widetilde m_t^\ast \| \| \widehat \pi_t - \pi_t \| + \| \widehat \pi_t - \pi_t \|^2 + \sum_{s=t+1}^{T} \| \widehat m_s^\ast - \widetilde m_s^\ast \| \| \widehat \pi_s - \pi_s \| + \| \widehat \pi_s - \pi_s \|^2 \Bigg).
    \end{align*}
    We can ignore the sum from $s=t+1$ to $T$, because it of the same order as terms that appears in the first two summands. The result follows.
\end{proof}

\subsection{Corollary~\ref{cor:s-flip-seq-convergence}}

\begin{proof}
    This follows by the same argument as for Corollary~\ref{cor:s-flip-convergence}.
\end{proof}

\section{Proofs for Appendix~\ref{app:trt_prob}} \label{app:additional}

\subsection{Proposition~\ref{prop:trt_prob_eif}}

\begin{proof}
    We proceed using the same proof technique as for Proposition~\ref{prop:eif}.  Specifically, we'll show that $\bbE \{ \widehat \varphi_D(Z) + \widehat m_0 - m_0 \}$ is second order. 

    \bigskip
    
    The algebra works similarly to the proof for the debiased pseudo-outcomes in Lemma~\ref{lem:seq-property}. Starting with the summand at the final timepoint, we have
    \begin{align*}
        \bbE \{ \widehat \varphi_D(Z) \} &= \bbE \left[ \left( \prod_{s=1}^{T-1} \widehat r_s(A_s \mid H_s) \right) \left\{ \widehat Q_T(1 \mid H_T) + \widehat \phi_T(1; A_T, H_T) - \widehat m_{T-1}(A_{T-1}, H_{T-1}) \right\} \right] \\
        &= \bbE \left[ \left( \prod_{s=1}^{T-1} \widehat r_s(A_s \mid H_s) \right) \bbE \left\{ \widehat Q_T(1 \mid H_T) + \widehat \phi_T(1; A_T, H_T) - Q_T(1 \mid H_T) \mid A_{T-1}, H_{T-1} \right\} \right] \\
        &+ \bbE \left[ \left( \prod_{s=1}^{T-1} \widehat r_s(A_s \mid H_s) \right) \left\{ Q_T(1 \mid H_T) - \widehat m_{T-1}(A_{T-1}, H_{T-1}) \right\} \right] \\
        &= \bbE \left[ \left( \prod_{s=1}^{T-1} \widehat r_s(A_s \mid H_s) \right) \bbE \left\{ \widehat Q_T(1 \mid H_T) + \widehat \phi_T(1; A_T, H_T) - Q_T(1 \mid H_T) \mid A_{T-1}, H_{T-1} \right\} \right] \\
        &\hspace{-0.5in}+ \bbE \left[ \left( \prod_{s=1}^{T-2} \widehat r_s(A_s \mid H_s) \right) \left( \widehat r_{T-1} - r_{T-1} \right) \left\{ m_{T-1}(A_{T-1} \mid H_{T-1}) - \widehat m_{T-1}(A_{T-1}, H_{T-1}) \right\} \right] \\
        &+ \bbE \left[ \left( \prod_{s=1}^{T-2} \widehat r_s(A_s \mid H_s) \right) r_{T-1}(A_{T-1} \mid H_{T-1}) \left\{ m_{T-1}(A_{T-1} \mid H_{T-1}) - \widehat m_{T-1}(A_{T-1}, H_{T-1}) \right\} \right].
    \end{align*}
    Notice that the first two lines in the final expression are second-order terms. Then, combining the final term with the next summand in the efficient influence function, and omitting arguments, yields
    \begin{align*}
        &\bbE \left[ \left( \prod_{s=1}^{T-2} \widehat r_s \right) \left\{ r_{T-1} \left( m_{T-1} - \widehat m_{T-1} \right) + \sum_{b_{T-1}} \widehat m_{T-1} ( \widehat Q_{T-1} + \widehat \phi_{T-1} ) - \widehat m_{T-2} \right\}  \right] \\
        &= \bbE \left[ \left( \prod_{s=1}^{T-2}  \widehat r_s \right) \left\{ \sum_{b_{T-1}} \widehat m_{T-1} ( \widehat Q_{T-1} + \widehat \phi_{T-1} - Q_{T-1}) + m_{T-2} - \widehat m_{T-2} \right\}  \right].
    \end{align*}
    Then, we can repeat this algebra $t-2$ more times, and notice that the final $m_0 - \widehat m_0$ cancels out so that \( \bbE \{ \widehat \varphi_D(Z) \} + \widehat m_0 - m_0 \) is a second-order product of errors, because $\bbE \{ \widehat Q_s(b_s \mid H_s) + \widehat \phi(b_s; A_s, H_s) - Q_s(b_s \mid H_s) \mid H_s \}$ is a second-order product of errors, by Lemma~\ref{lem:phi-second-order}.
\end{proof}

\subsection{Lemma~\ref{lem:seq-property-trtment-prob}}

\begin{proof}
    That the pseudo-outcome is unbiased follows by iterated expectations. Meanwhile, one can analyze $\bbE \{ \widehat P_{t+1}^\ast (Z) \mid A_t, H_t \}$ in the same manner as in the proof of Proposition~\ref{prop:trt_prob_eif}, above, with the exception that the product term over $\widehat r$ starts at $s=t+1$ rather than $s=1$. By the final step one has the second-order term in the statement of the result plus $m_t(A_t, H_t)$, which cancels with the $-m_t(A_t, H_t)$ on the left-hand side of \eqref{eq:pseudo-dr-2}.
\end{proof}

\end{document}